\newtheorem{theorem}{Theorem}
\newtheorem{openproblem}{Open Problem}
\newtheorem{definition}{Definition}
\newtheorem{lemma}{Lemma}
\newtheorem{corollary}{Corollary}
\newcommand{\problem}[1]{\textsl{#1}}
\newlang{\OV}{OV}
\newlang{\OVC}{OVC}
\newlang{\UNSAT}{UNSAT}
\newlang{\linSAT}{linSAT}
\newlang{\MAX}{MAX}
\newlang{\ETHR}{ETHR}
\newlang{\THR}{THR}
\newlang{\ETH}{ETH}
\newlang{\NETH}{NETH}
\newlang{\SETH}{SETH}
\newlang{\NSETH}{NSETH}
\newlang{\TSP}{TSP}
\DeclareMathOperator{\rank}{rank}
\DeclareMathOperator{\size}{size}
\title{Conditional Complexity Hardness: Monotone Circuit Size, Matrix Rigidity, and Tensor Rank}
\author{
   Nikolai Chukhin
   \thanks{Neapolis University Pafos and JetBrains Research. Email: \url{buyolitsez1951@gmail.com}}
   \and
   Alexander~S. Kulikov
   \thanks{JetBrains Research. Email: \url{alexander.s.kulikov@gmail.com}}
   \and
   Ivan Mihajlin
   \thanks{JetBrains Research. Email: \url{ivmihajlin@gmail.com}}
   \and
   Arina Smirnova
   \thanks{Neapolis University Pafos. Email: \url{arina.00.ds@gmail.com}}
}
\date{}
\begin{document}
    \sloppy

	\maketitle

	\begin{abstract}
		Proving complexity lower bounds remains a~challenging task: currently,
		we~only know how to~prove conditional uniform (algorithm) lower bounds
		and nonuniform (circuit) lower bounds in~restricted circuit models.
		About a~decade ago, Williams (STOC 2010) showed how to~derive nonuniform lower bounds
		from uniform upper bounds: roughly, by~designing a~fast algorithm
		for checking satisfiability of~circuits, one gets a~lower bound for this circuit
		class. Since then, a~number of~results of~this kind have been proved. For example,
		Jahanjou et al. (ICALP 2015) and Carmosino et al. (ITCS 2016) proved that if~\NSETH{} fails, then $\E^{\NP}$ has series-parallel circuit size~$\omega(n)$.

		One can also derive nonuniform lower bounds
		from nondeterministic uniform lower bounds.
		Perhaps the most well-known example is~the Karp--Lipton theorem (STOC 1980): if $\Sigma_2 \neq \Pi_2$, then $\NP \not\subset \Ppoly$. Some recent examples include the following.
		Nederlof (STOC 2020) proved
		a~lower bound on~the
		matrix multiplication tensor rank
		under an~assumption that \TSP{} cannot be~solved faster than in~$2^n$ time.
		Belova et al. (SODA 2024)
		proved that there exists an~explicit polynomial family of~arithmetic circuit size $\Omega(n^{\delta})$, for any $\delta > 0$, assuming that \MAX{}-3-\SAT{} cannot be~solved faster than in~$2^n$ nondeterministic time. Williams (FOCS 2024) proved
		an~exponential lower bound for $\ETHR{} \circ \ETHR{}$ circuits under the \problem{Orthogonal Vectors} conjecture.
		Under an~assumption that \problem{Set Cover} problem cannot be~solved
		faster than in~$2^n$, Bj{\"{o}}rklund and~Kaski (STOC 2024)
		and Pratt (STOC 2024) constructed an~explicit tensor with superlinear rank.
		Whereas all the lower bounds above are proved under strong assumptions
		that might eventually be~refuted,
		the revealed connections are of~great interest and may still give further
		insights: one may be~able to~weaken the used assumptions or~to~construct generators
		from other fine-grained reductions.

		In~this paper, we~continue developing this line of~research and
		show how uniform nondeterministic lower bounds
		can be~used to~construct generators of~various types of~combinatorial
		objects that are notoriously hard to~analyze:
		Boolean functions of~high circuit size, matrices of~high rigidity,
		and tensors of~high rank.
		Specifically, we~prove the following.
		\begin{itemize}
			\item If, for some $\varepsilon$~and~$k$, $k$-\SAT{} cannot be~solved in~input-oblivious co-nondeterministic time $O(2^{(1/2+\varepsilon)n})$,
			then there exists a~monotone Boolean function family in \co\NP{} of monotone circuit size $2^{\Omega(n / \log n)}$. Combining this with the result above, we get win-win circuit lower bounds: either $\E^{\NP{}}$ requires series-parallel circuits
			of~size $\omega(n)$ or~$\coNP$ requires monotone circuits of~size $2^{\Omega(n / \log n)}$.
			\item If, for all $\varepsilon>0$, \MAX{}-3-\SAT{} cannot be solved in co-nondeterministic time $O(2^{(1 - \varepsilon)n})$,
			then there exist small families of~matrices
			with rigidity exceeding the best known constructions
			as~well~as small families of~three-dimensional tensors
			of~rank $n^{1+\Delta}$, for some $\Delta>0$.
		\end{itemize}
	\end{abstract}

	\clearpage

    \tableofcontents

	\clearpage

    \section{Complexity Lower Bounds}
	Finding the~minimum time required to~solve a~given computational problem is a~central question in~computational complexity.
	Answering such a~question for a~particular problem involves
	proving a~complexity lower bound, that~is, showing that
	no~fast algorithm can solve this problem. While the Time Hierarchy Theorem~\cite{hartmanis1965computational,DBLP:journals/jacm/HennieS66}
	guarantees that there are problems in~\P{} that cannot be~solved in~time $O(n^k)$, for any $k>1$, we~have no~superlinear lower bounds for specific problems. For example, for~\SAT{}, one of~the most important \NP{}-complete problems, we~have no~algorithms working significantly faster than a~brute force approach and at~the same time have no~methods of~excluding
	a~possibility that it~can be~solved in~linear time.

	\subsection*{Conditional Lower Bounds}
	As~unconditional complexity lower bounds remain elusive, the classical complexity theory allows one to~prove \emph{conditional} lower bounds of~the following form:
	if a~problem~$A$ cannot be~solved in~polynomial time,
	then $B$~also cannot be~solved in~polynomial time. Such results
	are proved via reductions that are essentially algorithms: one shows how to~transform an~instance of~$A$ into an~instance of~$B$. Nowadays, hundreds of~such reductions between various \NP{}-hard problems are known. For instance, if \SAT{} cannot
	be~solved in~polynomial time, then the~\problem{Hamiltonian Cycle} problem also cannot be solved in~polynomial time.

	In~a~recently emerged area of~\emph{fine-grained complexity},
	one aims to~construct tighter reductions between problems showing that even a~tiny improvement of an~algorithm for one
	of~them automatically leads to~improved algorithms for the other one. For example, as~proved by~Williams~\cite{DBLP:journals/tcs/Williams05}, if \SAT{} cannot be~solved in~time $O(2^{(1-\varepsilon)n})$, for~any~$\varepsilon > 0$,
	then the~\problem{Orthogonal Vectors} problem cannot be solved in~time~$O(n^{2 - \varepsilon})$, for~any~$\varepsilon > 0$. Again, many reductions of~this form
	have been developed in~recent years. We~refer the reader
	to~a~recent survey by~Vassilevska Williams~\cite{williams2018some}.

	\subsection*{Circuit Lower Bounds}
	One of~the reasons why proving complexity lower bounds
	is~challenging is~that an~algorithm (viewed as a~Turing machine or a~RAM machine) is a~relatively complex object: it~has a~memory, may contain loops, function calls (that may in~turn be~recursive). A~related computational model of~\emph{Boolean circuits} has a~much simpler structure (a~straight-line program) and at the same time is~powerful enough to~model algorithms:
	if a~problem can be~solved by~algorithms in~time~$T(n)$,
	then it~can also be~solved by~circuits of~size $O(T(n)\log T(n))$~\cite{DBLP:journals/jacm/PippengerF79}.
	It~turns out that proving circuit lower bounds is also challenging: while it~is not difficult to~show that almost all Boolean functions can be~computed by~circuits of~exponential size only (this was proved by~Shannon~\cite{DBLP:journals/bstj/Shannon49} back in~1949),
	for no~function from \NP{}, we can currently exclude the~possibility that it~can be~computed by~circuits of~linear
	size~\cite{DBLP:conf/stoc/Li022,DBLP:conf/focs/FindGHK16}.
	Strong lower bounds are only known for restricted models
	such as~monotone circuits, constant-depth circuits, and formulas.
	Various such unconditional lower bounds can be~found in~the book
	by~Jukna~\cite{DBLP:books/daglib/0028687}.

	An~important difference between algorithms and circuits is~that
	algorithms represent a~\emph{uniform} model of~computation (an~algorithm is a~program that needs to~process instances
	of~all possible lengths), whereas circuits are \emph{nonuniform}: when saying that a~problem can be~solved by~circuits, one usually means that there is an~infinite collection of~circuits, one circuit for every possible input length, and different circuits in~this collection can, in~principle, implement different programs. This makes the circuit model strictly more powerful than algorithms:
	on~the one hand, every problem solved by~algorithms can be~solved by~circuits of~roughly the same size; on~the other hand, it~is
	not difficult to~come~up with a~problem of~small circuit size
	that cannot be~solved by~algorithms.

	\subsection*{Connections Between Lower and Upper Bounds}
	Intuitively, it~seems that proving complexity upper bounds
	should be~easier than proving lower bounds.
	This intuition is~well supported by~a~much higher number
	of~results on~algorithms compared to~the number of~results on~lower bounds.
	Indeed, to~prove an~upper bound on~the complexity of a~problem, one
	designs an~algorithm for the problem and analyzes~it.
	Whereas to~prove a~complexity lower bound, one needs to~reason
	about a~wide range of~fast algorithms (or~small circuits) and to~argue that none of~them is~able to~solve the problem at~hand.
	Perhaps surprisingly, the tasks of~proving lower and upper complexity bounds are connected to~each other.
	A~classical example is
	Karp--Lipton theorem~\cite{DBLP:conf/stoc/KarpL80} stating that if \P{} = \NP{}, then \EXP{} requires circuits of size~$\Omega(2^n / n)$.
	More recently, Williams~\cite{DBLP:journals/siamcomp/Williams13} established a~deep connection between upper bound for \problem{Circuit Sat} and circuit lower bounds.
	Extending his results, Jahanjou, Miles and Viola~\cite{DBLP:journals/iandc/JahanjouMV18} proved that
	if~\NSETH{} is false (meaning that \UNSAT{} can be~solved fast with nondeterminism), then $\E{}^{\NP{}}$ requires series-parallel Boolean circuits of~size~$\omega(n)$. Such results show how
	to~derive nonuniform lower bounds (that~is, circuit lower bounds)
	from uniform \emph{upper} bounds (algorithm upper bounds).

	Even though one~can simulate an~algorithm using circuits with~slight overhead, the~converse is~not true as~there are undecidable languages of~low circuit complexity.
	Recently,~\cite{DBLP:conf/soda/BelovaKMRRS24} showed results analogous to~those presented herein, particularly on~deriving a~nonuniform lower bound from a~non-randomized uniform lower bound.
	Specifically, they proved that if \MAX{}-$k$-\SAT{} cannot be solved in co-nondeterministic time~$O(2^{(1 - \varepsilon)n})$, for any~$\varepsilon > 0$, then, for any~$\delta > 0$, there exists an~explicit polynomial family that cannot be computed by~arithmetic circuits of size~$O(n^\delta)$.
	Also, Williams~\cite{DBLP:conf/focs/Williams24} proved that
	if~the \problem{Orthogonal Vectors} conjecture (\OVC{}) holds, then \problem{Boolean Inner Product} on~$n$-bit vectors cannot be~computed by $\ETHR{} \circ \ETHR{}$~circuits
	of~size~$2^{\varepsilon n}$, for some $\varepsilon > 0$.
	Combined with the result above (since \OVC{} is weaker than \NSETH{}), it~immediately leads to~win-win circuit lower bounds: if~\NSETH{} fails, we~have a~lower bound for series-parallel circuits, otherwise we~have a~strong lower bound for
	$\ETHR{} \circ \ETHR{}$ circuits.

	\subsection{Our Contribution}
	In~this paper, we~derive a~number of~nonuniform lower bounds
	from uniform nondeterministic lower bounds. Our lower bounds
	apply for various objects that are notoriously hard to~analyze:
	Boolean functions of~high monotone circuit size, high rigidity matrices,
	and high rank tensors.
	For circuits, we~get
	a~win-win situation similar to~the one by~Williams.

	Our first result
	shows how to~get $2^{\Omega(n/\log n)}$ monotone circuit lower bounds
	(improving best known bounds of~the form $2^{\Omega(\sqrt n)}$)
	under an~assumption that \SAT{} requires co-nondeterministic time $O(2^{(1/2+\varepsilon)n})$ if~the verifier is~given a~proof that depends
	on~the length of~the input only.

	\begin{restatable}{theorem}{thmmonotone}
		\label{theorem:monotone_conp}
		If, for some $\varepsilon > 0$ and~$k \in \mathbb{Z}_{\geq 3}$,
		$k$-\SAT{} cannot be~solved in input-oblivious co-nondeterministic time~$O(2^{(1/2+\varepsilon)n})$,
		then there exists a~monotone Boolean function family in~$\coNP$ of~monotone circuit size~$2^{\Omega(n/\log n)}$.
	\end{restatable}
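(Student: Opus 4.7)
The plan is to proceed by contrapositive. Assume that every monotone Boolean function family in~$\coNP$ has monotone circuits of size $2^{o(n/\log n)}$; we derive, for every $k \geq 3$ and every $\varepsilon > 0$, an input-oblivious co-nondeterministic algorithm for~$k$-\SAT{} running in time $O(2^{(1/2+\varepsilon)m})$, where~$m$ is the number of variables, contradicting the hypothesis. Since deterministic computation with input-oblivious advice is contained in input-oblivious co-nondeterministic time, it suffices to exhibit a deterministic algorithm consulting an oblivious advice string of the required length.

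The first step converts the assumed monotone upper bound into a general Boolean upper bound via Berkowitz's slice construction. To every $g\colon\{0,1\}^n \to \{0,1\}$ one associates a monotone slice function $\widehat{g}\colon\{0,1\}^{2n} \to \{0,1\}$ such that (i)~$\widehat{g} \in \coNP$ whenever $g \in \coNP$, and (ii)~$|C(g) - C_{\mathrm{mon}}(\widehat{g})| \le \mathrm{poly}(n)$. Applying the hypothesis to~$\widehat{g}$ then yields that every function in~$\coNP$ admits general Boolean circuits of size $2^{o(n/\log n)}$.

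The second step specializes this to a form of~$k$-\UNSAT{} whose input length matches our time budget. By the Impagliazzo--Paturi--Zane sparsification lemma, for any $\delta > 0$, an $m$-variable $k$-\SAT{} formula can be rewritten deterministically as a disjunction of at most $2^{\delta m}$ formulas, each with $O_{k,\delta}(m)$ clauses and encoding length $N = O(m \log m)$. The language~$L$ of unsatisfiable sparse formulas of this shape lies in~$\coNP$ and thus admits Boolean circuits of size $2^{o(N/\log N)} = 2^{o(m)}$. Taking such a circuit as oblivious advice, the verifier on input~$\phi$ produces the sparsified pieces $\phi_1, \ldots, \phi_{2^{\delta m}}$ and evaluates the circuit on each, accepting iff some piece is reported satisfiable. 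The total time is $2^{\delta m} \cdot 2^{o(m)}$; choosing $\delta = \varepsilon/2$ gives $O(2^{(1/2+\varepsilon)m})$, contradicting the hypothesis.

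The main care is in the Berkowitz step: we need the slice construction to produce a function that is monotone, lies in~$\coNP$ via a uniform polynomial-time verifier, and transfers circuit complexity with only polynomial-additive overhead. All three are standard for the usual matched-pair encoding, but the proof must carefully track how the encoding blow-ups of the slice lift and of sparsification compose so as not to spoil the $2^{(1/2+\varepsilon)m}$ budget. The remaining parameter bookkeeping is routine.
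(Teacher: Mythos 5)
There is a genuine gap, and it sits at the heart of your second step. You propose to take a circuit for the $\coNP$ language of unsatisfiable sparse formulas ``as oblivious advice'' and have the verifier evaluate it on the sparsified pieces. But the theorem's hypothesis concerns input-oblivious \emph{co-nondeterministic} time, not deterministic time with advice, and your claim that the former contains the latter is false. In a (co-)nondeterministic algorithm, soundness must hold against \emph{every} guessed witness: if $\phi$ is satisfiable, no guess may lead the verifier to declare it unsatisfiable. Your verifier has no way to check that the guessed circuit actually computes the unsatisfiability language --- an adversarial guess of a circuit that outputs ``unsatisfiable'' on everything would make your algorithm accept satisfiable formulas. (The fact that your bookkeeping lands at time $2^{(\delta+o(1))m}$, which would refute even \NETH{}, is a symptom that no real verification is happening.) This is precisely the obstacle the paper's proof is built around: it reduces $F$ to an \OV{} instance $(A_F,B_F)$ with $|A_F|=|B_F|=2^{n/2}$ and proves (Lemma~\ref{lemma:monotoneseparation}) that $F$ is unsatisfiable \emph{if and only if} some monotone function separates $(A_F,\overline{B_F})$. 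A guessed syntactically monotone circuit that separates the two sets is therefore a \emph{self-certifying} witness of unsatisfiability --- the verifier checks separation by evaluating the circuit on all $2\cdot 2^{n/2}$ points, which is exactly where the $2^{n/2+\varepsilon n}$ running time comes from. Monotonicity is essential to this certification, which is also why your Berkowitz slice step is counterproductive: converting the monotone hypothesis into a general-circuit upper bound for all of $\coNP$ destroys the one structural property that makes the guessed object verifiable.

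A secondary point: the theorem asks you to exhibit a single explicit monotone family in $\coNP$ whose hardness follows from the \SAT{} hardness. The paper does this by building a universal monotone function $f(c,x)$ on $N=O(n\log n)$ inputs whose restrictions $f(e(F),\cdot)$ realize the separators $f_F$ for all sparse $F$ simultaneously (using a balanced encoding $e$ so that $f$ stays monotone, and the $N=O(n\log n)$ input length is what turns a $2^{o(N/\log N)}$ monotone circuit into a $2^{o(n)}$ one). Your proposal never constructs such a family; it only argues about the class $\coNP$ in the aggregate. Even setting aside the soundness issue, you would still need this universal-function step to obtain the stated conclusion.
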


	As the~previous theorem uses weaker assumption than \NSETH{},
	we~get the following corollary.

	\begin{restatable}{corollary}{crlymonotone}
		\label{corollary:monotone_nseth}
		If~\NSETH{} holds, there exists a~monotone Boolean function family in~\coNP{}  
		with monotone circuit size~$2^{\Omega(n/\log n)}$.
	\end{restatable}

	Combining this with circuit lower bounds that follow from the negation of~\NSETH{} due to~\cite{DBLP:journals/iandc/JahanjouMV18,DBLP:conf/innovations/CarmosinoGIMPS16}, leads to~win-win circuit lower bounds.
	\begin{restatable}{corollary}{crlMonotone}
		\label{crl:monotone}
        At~least one of~the following two circuit lower bounds holds:
        \begin{enumerate}
            \item $\E^{\NP}$ requires series-parallel circuits of~size~$\omega(n)$;
            \item There exists a~monotone Boolean function family in~\coNP{}
            of~monotone circuit size~$2^{\Omega(n/\log n)}$.
        \end{enumerate}
	\end{restatable}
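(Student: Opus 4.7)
The plan is a straightforward dichotomy on whether \NSETH{} holds, so essentially no new ideas are needed once the previously stated theorems are in place. I would first note that \NSETH{} is, by definition, the statement that for every $\varepsilon > 0$ there exists $k$ such that $k$-\UNSAT{} cannot be solved in nondeterministic time $O(2^{(1-\varepsilon)n})$, which is logically either true or false; the argument simply branches on which.

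Case~1: \NSETH{} is true. Then in particular, for some fixed $\varepsilon > 0$ and $k \geq 3$, $k$-\SAT{} cannot be solved in co-nondeterministic time $O(2^{(1/2+\varepsilon)n})$, even in the stronger input-oblivious variant (input-oblivious is a restriction on the algorithm, so a lower bound against general co-nondeterministic algorithms implies a lower bound against input-oblivious ones). Hence the hypotheses of both \Cref{theorem:monotone_conp} and \Cref{theorem:monotone_p_poly} are satisfied, which is precisely the statement repackaged as \Cref{corollary:monotone_nseth}, giving conclusion~(2).

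Case~2: \NSETH{} is false. Then by the Jahanjou--Miles--Viola result~\cite{DBLP:journals/iandc/JahanjouMV18} (also reproved in~\cite{DBLP:conf/innovations/CarmosinoGIMPS16}) cited in the excerpt, the failure of \NSETH{} implies that $\E^{\NP}$ requires series-parallel circuits of size $\omega(n)$, which is conclusion~(1).

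Since \NSETH{} is either true or false, one of the two conclusions must hold, completing the proof. The only real obstacle here is a bookkeeping one: making sure the quantifier structure of \NSETH{} is the one that feeds the hypotheses of \Cref{theorem:monotone_conp} and \Cref{theorem:monotone_p_poly} (namely, that \emph{some} $\varepsilon,k$ suffice, not that all do) and observing that the input-oblivious hypothesis of \Cref{theorem:monotone_conp} is weaker than, and hence implied by, the general co-nondeterministic hardness assumed by \NSETH{}. No quantitative arguments are required beyond invoking the previously proved theorems.
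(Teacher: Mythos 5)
Your proposal is correct and matches the paper's (implicit) argument exactly: the paper obtains \Cref{crl:monotone} by the same dichotomy on \NSETH{}, combining \Cref{corollary:monotone_nseth} (which packages \Cref{theorem:monotone_conp} and \Cref{theorem:monotone_p_poly} under the observation that their hypotheses are weaker than \NSETH{}) with \Cref{theorem:local} for the case where \NSETH{} fails. Your bookkeeping on the quantifiers and on the direction of the input-oblivious restriction is also the right way to justify that \NSETH{} implies the hypotheses of both theorems.
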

	Each of~these lower bounds is~far from what is~currently known:
	the best known circuit lower bound for $\E^{\NP}$ is~$3.1n-o(n)$~\cite{DBLP:conf/stoc/Li022},
	whereas the best known monotone circuit lower bound for \coNP{}
	is $2^{\Omega(\sqrt n)}$ (see Section~\ref{section:explicit}).

%

	Another result in~this direction demonstrates how to derive circuit lower bounds from~$\NETH$ (which asserts that $3$-\SAT{} cannot be~solved in co-nondeterministic time~$2^{o(n)}$).
	Specifically, we establish lower bounds for the classes~$Q_t^{n}$.
	\begin{definition}
    	Let~$Q_t^{n}$ denote the set of all Boolean functions~$f$ over~$n$ variables such that for any $x^1, \ldots, x^t \in f^{-1}(0)$, there exists an $i \in [n]$ for which $x^1_i = \ldots = x^t_i = 0$.
	\end{definition}
	These sets have been studied in~secure multiparty computation; see~\cite{DBLP:conf/podc/HirtM97,DBLP:conf/wdag/FitziM98,DBLP:journals/joc/HirtM00} for references.
	They have also been viewed from the complexity-theoretic perspective~\cite{DBLP:conf/crypto/CohenDIKMRR13,DBLP:journals/toc/KozachinskiyP22}.
	We denote by~$\THR_{a}^{b} \colon \{0, 1\}^{b} \to \{0, 1\}$ the function that evaluates to~one if~and only~if the input contains at least~$a$ ones.
	It~turns out the class~$Q_t^{n}$ is exactly the class of functions computable by~$\THR_{l + 1}^{lt + 1}$ gates for arbitrary~$l \ge 1$ where the negations are not permitted (even at~the leaves). The following lemma
    is~a~combination of~Lemmas~5.2 and 5.3 from~\cite{DBLP:conf/crypto/CohenDIKMRR13} and Theorem~3
    from~\cite{DBLP:journals/toc/KozachinskiyP22}. In~the statement, by~$C \le f$ we~mean that $C(x) \le f(x)$,
    for all~$x$.

	\begin{lemma}
        \label{statement:check_q_t}
    	The set~$Q_t^{n}$ is~equal to~the set of functions~$f$ for which there exists a~circuit~$C$, composed only of~$\THR_{l + 1}^{lt + 1}$ gates for arbitrary~$l \ge 1$, such that $C \le f$.
	\end{lemma}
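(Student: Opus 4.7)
The plan is to prove the two inclusions separately.

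For the ``if'' direction, I would establish the stronger statement that any circuit $C$ built from $\THR_{l+1}^{lt+1}$ gates computes a function in $Q_t^n$; since $C \le f$ gives $f^{-1}(0) \subseteq C^{-1}(0)$, the $Q_t^n$ property transfers from $C$ to $f$. The content is a pigeonhole applied at each gate: if $x^1,\ldots,x^t$ are simultaneous zeros of a $\THR_{l+1}^{lt+1}$ gate, each $x^s$ makes at most $l$ of its $lt+1$ gate-inputs equal to $1$, so there are at most $lt < lt+1$ ``one-events'' in total, forcing some gate-input to evaluate to $0$ on every $x^s$. Induction on circuit size then finishes the argument: if the identified gate-input is a variable $x_i$, then $i$ is the desired common $0$-coordinate; otherwise it is the output of a smaller subcircuit lying in $Q_t^n$ by the inductive hypothesis, to which we reapply the definition.

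For the ``only if'' direction, I would first reduce to the monotone case via the monotone lower hull $\hat f(x) := \bigwedge_{y \ge x} f(y)$. One quickly checks that $\hat f \le f$, that $\hat f$ is monotone, and that $\hat f \in Q_t^n$ whenever $f$ is: given $t$ zeros $x^1,\ldots,x^t$ of $\hat f$, pick $y^s \ge x^s$ with $f(y^s)=0$, apply the $Q_t^n$ property of $f$ to obtain $i$ with $y^s_i=0$ for all $s$, and use $x^s_i \le y^s_i = 0$. It then suffices to realize a monotone $g \in Q_t^n$ exactly as a circuit of $\THR_{l+1}^{lt+1}$ gates, which is the content of Lemmas~5.2--5.3 of~\cite{DBLP:conf/crypto/CohenDIKMRR13} together with Theorem~3 of~\cite{DBLP:journals/toc/KozachinskiyP22}: their construction proceeds by induction on $n$, uses the $Q_t^n$ property of $g$ to carve out ``sub-$Q_t$'' substructures, and combines the recursively obtained formulas through a top $\THR_{l+1}^{lt+1}$ gate whose arity $l$ is tuned to the chosen partition.

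The main obstacle is this constructive monotone direction: the decomposition must yield a formula computing $g$ exactly rather than merely some monotone function strictly below $g$, and this requires a careful inductive calibration of the threshold arity $l$ at every internal node. Since this bookkeeping is exactly what the cited works carry out, I would present the gate-level pigeonhole and the reduction to the monotone case in full and then invoke the representation theorem from~\cite{DBLP:conf/crypto/CohenDIKMRR13,DBLP:journals/toc/KozachinskiyP22} as a black box.
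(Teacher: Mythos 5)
Your proposal is correct and matches the paper's treatment: the paper offers no proof of \Cref{statement:check_q_t} beyond citing Lemmas~5.2--5.3 of~\cite{DBLP:conf/crypto/CohenDIKMRR13} and Theorem~3 of~\cite{DBLP:journals/toc/KozachinskiyP22}, and you invoke the same sources for the hard (constructive, monotone) direction while correctly filling in the elementary direction via the per-gate pigeonhole (at most $lt$ one-events among $lt+1$ inputs across $t$ common zeros) and the reduction to the monotone case via the lower hull $\hat f$.
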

	Our result indicates that
    $Q_t^{n}$ contains functions that are exponentially hard
    to~represent using $\THR_{l + 1}^{lt + 1}$ gates only
    while being easy to~compute by~regular circuits.
	\begin{restatable}{theorem}{thmthreshold}
        \label{theorem:threshold}
    	If~$\NETH$ holds, then for any $t = \omega(1)$ and infinitely many~$n = \omega(t)$, there exists a function $f \in Q_t^n \cap \coNP_{/\poly}$ such that any circuit~$C \le f$, composed of~$\THR_{l + 1}^{lt + 1}$ gates, satisfies $\size(C) = 2^{\Omega(n)}$.
	\end{restatable}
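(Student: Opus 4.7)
The plan is to follow the Williams-style generator framework used in Theorems~\ref{theorem:monotone_conp} and~\ref{theorem:monotone_p_poly}, this time with the (non-oblivious) $\NETH{}$ hypothesis and with the threshold-only circuits characterising $Q_t^n$ via Lemma~\ref{statement:check_q_t}. The argument is contrapositive: I assume, for some $t = \omega(1)$ and all sufficiently large $n = \omega(t)$, that every linear-time-computable function in $Q_t^n$ is computable by a circuit over $\{\THR_{l+1}^{lt+1}\}_{l \ge 1}$ of size $2^{o(n)}$, and from this assumption I derive a co-nondeterministic subexponential algorithm for $3$-\SAT{}, contradicting $\NETH{}$.

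To do this I would construct, for each $3$-CNF $\phi$ on $m$ variables, a function $F_\phi$ on $n = \Theta(m)$ variables with three properties. First, $F_\phi \in Q_t^n$, so that Lemma~\ref{statement:check_q_t} guarantees a threshold-gate circuit $C \le F_\phi$ realising it. Second, $F_\phi$ is evaluable in linear time in~$n$ (hence lies in $\Ppoly$), so the family $\{F_\phi\}$ falls within the scope of the assumed upper bound. Third, the (un)satisfiability of $\phi$ can be read out in time polynomial in~$n$ from any circuit $C \le F_\phi$ of size $2^{o(n)}$. I expect $F_\phi$ to be built by embedding $\phi$'s information on a low-Hamming-weight slice of the hypercube---where the $Q_t^n$ condition holds automatically once $n \gg t^2$, since any $t$ low-weight strings must share many zero coordinates---and padding the rest of $\{0,1\}^n$ to~$1$, calibrated so that a short threshold circuit for $F_\phi$ forces a nontrivial decision about~$\phi$.

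Given such an encoding, the contrapositive algorithm proceeds as follows: existentially guess the purported circuit $C$ of size $2^{o(n)}$ for $F_\phi$; co-nondeterministically verify that $C$ matches $F_\phi$ via a universally-quantified input-wise comparison (feasible since $F_\phi$ is linear-time computable); and finally read off (un)satisfiability of $\phi$ from~$C$. With $n = \Theta(m)$ this yields a $2^{o(m)}$ co-nondeterministic algorithm for $3$-\SAT{}, contradicting $\NETH{}$. The principal obstacle is designing the encoding $F_\phi$: the $Q_t^n$ restriction is a strong structural constraint on the zero set of $F_\phi$, while decodability from a \emph{small} threshold circuit requires $\phi$'s information to remain visible after the coarse approximation that a subexponential-size threshold circuit must perform. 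I expect this to be handled by a padding/projection gadget analogous to those used in the monotone circuit proofs of the preceding theorems, with the slice width calibrated so the required $n = \omega(t)$ scaling emerges from the gadget.
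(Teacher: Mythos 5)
Your high-level framework (contrapositive, guess a small threshold circuit, verify, decide satisfiability) is the right shape, but two concrete pieces are missing or wrong, and they are exactly the pieces the paper's proof turns on. First, the verification step: you propose to existentially guess $C$ and then check $C\equiv F_\phi$ ``via a universally-quantified input-wise comparison.'' Done deterministically this costs $2^{n}\cdot\poly(n)$ time; done with universal branching after an existential guess it is a $\Sigma_2$-type computation, not a nondeterministic algorithm for \UNSAT{}, so neither version contradicts \NETH{}. The paper never verifies $C$ on the whole cube. It first reduces $3$-\SAT{} to $t$-\OV{} by splitting the variables into $t$ blocks (Lemma~\ref{lemma:sat_to_t_ov}), producing sets $A^1,\dots,A^t$ of size $2^{n/t}$ over $m=\beta n$ coordinates, and then deterministically checks only that the guessed $\THR$-only circuit vanishes on the $t\cdot 2^{n/t}=2^{o(n)}$ points $\{\overline{x}\colon x\in A^1\cup\dots\cup A^t\}$. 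That $t$-\OV{} reduction is what makes the set of points to be checked subexponential, and it is absent from your plan.

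Second, the encoding and the soundness mechanism. You want $F_\phi\in Q_t^n$ \emph{unconditionally} (via a low-weight slice gadget) and then to ``read out'' satisfiability from any small circuit $C\le F_\phi$; you flag the gadget as the principal obstacle, and indeed this design goal is the wrong one. If $F_\phi\in Q_t^n$ regardless of $\phi$, then Lemma~\ref{statement:check_q_t} supplies a certifying threshold circuit in both the satisfiable and unsatisfiable cases, and there is no reason a small such circuit should encode a decision about $\phi$. In the paper, membership in $Q_t$ \emph{is} the decision: the function whose zero set is $\{\overline{x}\colon x\in A^1\cup\dots\cup A^t\}$ lies in $Q_t^m$ if and only if $\phi$ is unsatisfiable, and by Lemma~\ref{statement:check_q_t} the mere existence of a $\THR_{l+1}^{lt+1}$-only circuit vanishing on that set certifies unsatisfiability --- the size bound on $C$ is needed only for the running time, not for soundness. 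Without these two ideas the argument does not go through as written.
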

	Thus, although the class~$Q_t^n$ may appear relatively simple, it contains functions that are exceedingly difficult to compute using only~$\THR_{l + 1}^{lt + 1}$ gates but which can be computed by almost linear circuits.
	\Cref{theorem:threshold} consequently yields another win-win lower bound: if~$\NETH$ fails, then $\E^{\NP}$ cannot be~computed by linear-size circuits~\cite{DBLP:journals/jacm/ChenRTY23}.

	\begin{corollary}
	At least one of the following two circuit lower bounds must hold:
	\begin{enumerate}
		\item $\E^{\NP}$ requires circuits of size~$\omega(n)$;
		\item For any $t = \omega(1)$, there exists a function $f \in Q_t^{n} \cap \Ppoly$ such that any circuit~$C \le f$, composed only of~$\THR_{l + 1}^{lt + 1}$ gates for arbitrary~$l \ge 1$ over variables, has size~$2^{\Omega(n)}$. Moreover, $f$ can be~computed in linear time.
	\end{enumerate}
	\end{corollary}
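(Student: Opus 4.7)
The plan is to derive the corollary as a direct case analysis on whether $\NETH$ holds, combining \Cref{theorem:threshold} with the cited result of Chen, Ren, Tan, and Yin~\cite{DBLP:journals/jacm/ChenRTY23}. Since $\NETH$ is a concrete statement about the co-nondeterministic complexity of $3$-\SAT{}, exactly one of the two scenarios ``$\NETH$ holds'' or ``$\NETH$ fails'' must occur, and I will show that each scenario implies one of the two lower bounds in the corollary.

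In the first case, assume $\NETH$ holds. Then \Cref{theorem:threshold} applies directly: for every $t = \omega(1)$ (and infinitely many $n = \omega(t)$), it yields a function $f \in Q_t^n$ that lies in $\Ppoly$, is computable in linear time, and requires size $2^{\Omega(n)}$ for any circuit $C \le f$ built solely from $\THR_{l+1}^{lt+1}$ gates. This is precisely item~2 of the corollary, so we are done in this case.

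In the second case, assume $\NETH$ fails, so that $3$-\SAT{} admits a co-nondeterministic algorithm running in time $2^{o(n)}$. As noted in the paragraph immediately preceding the corollary, the result of~\cite{DBLP:journals/jacm/ChenRTY23} then implies that $\E^{\NP}$ cannot be computed by circuits of linear size, i.e., it requires circuits of size $\omega(n)$. This gives item~1 of the corollary.

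There is no real obstacle here; the proof is essentially a bookkeeping argument that packages \Cref{theorem:threshold} together with the known implication ``$\neg\NETH \Rightarrow \E^{\NP} \not\subseteq \textsc{SIZE}(O(n))$'' into a single win-win statement. The only subtlety is making sure the asymptotic parameters ($t = \omega(1)$, infinitely many $n$) are stated consistently with \Cref{theorem:threshold}, which they already are in the corollary's phrasing.
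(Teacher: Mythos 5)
Your proposal is correct and matches the paper's intended argument exactly: the paper derives this corollary by the same dichotomy on $\NETH$, applying \Cref{theorem:threshold} when $\NETH$ holds and the cited result of~\cite{DBLP:journals/jacm/ChenRTY23} (that the failure of $\NETH$ implies $\E^{\NP}$ requires circuits of size $\omega(n)$) when it fails. Nothing further is needed.
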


	Our second result shows how to~construct small families of~matrices
	with rigidity exceeding the best known constructions
	under an~assumption that \MAX{}-3-\SAT{} requires co-nondeterministic time nearly~$2^n$.

	\begin{restatable}{theorem}{thmHighRigidity} \label{thm:high_rigidity}
		If, for every $\varepsilon > 0$, \MAX{}-3-\SAT{} cannot be~solved in co-nondeterministic time $O(2^{(1 - \varepsilon)n})$,
		then, for all $\delta > 0$, there is a~generator $g  \colon \{0, 1\}^{\log^{O(1)} k} \to  \mathbb{F}^{k \times k}$ computable in~time polynomial in~$k$ such that, for infinitely many~$k$, there exist a~seed~$s$ for which $g(s)$ has $k^{\frac 1 2 - \delta}$-rigidity~$k^{2 - \delta}$.
    \end{restatable}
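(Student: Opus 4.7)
The plan is to argue by contraposition. Suppose that for some fixed $\delta > 0$, every polynomial-time generator $g \colon \{0, 1\}^{\log^{O(1)} k} \to \mathbb{F}^{k \times k}$ has the property that, for all but finitely many $k$ and every seed $s$, the matrix $g(s)$ decomposes as $L + S$ with $\rank(L) < k^{1/2-\delta}$ and at most $k^{2-\delta}$ nonzero entries in $S$. I would then derive from this a co-nondeterministic algorithm for \MAX{}-3-\SAT{} running in time $O(2^{(1-\varepsilon)n})$ for some $\varepsilon = \varepsilon(\delta) > 0$, contradicting the hypothesis.

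Given a \MAX{}-3-\SAT{} instance $\varphi$ on $n$ variables, split the variables into halves $A, B$ and set $k := 2^{n/2}$. I would construct, in time polynomial in $k$ from the seed $\varphi$ (whose bit-length is $O(n) = O(\log k)$), a matrix $M_\varphi \in \mathbb{F}^{k \times k}$ whose rows and columns are indexed by partial assignments to $A$ and $B$, and whose entries encode enough information about the joint clause-satisfaction pattern on $(y, z)$ that the MAX-SAT value is recoverable from a small collection of linear aggregates of the form $u^{T} M_\varphi v$. The encoding (for instance, via a polynomial-identity representation or tensoring with an auxiliary structured matrix) is chosen so that $M_\varphi$ does not trivially collapse to rank $o(k^{1/2-\delta})$. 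By the contrapositive assumption, $M_\varphi = L_\varphi + S_\varphi$ with the stated rank and sparsity bounds.

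The co-nondeterministic algorithm, on input $(\varphi, t)$, guesses the decomposition $(U, V, S_\varphi)$ with $L_\varphi = UV^{T}$, consuming $\tilde O(k^{3/2-\delta} + k^{2-\delta}) = O(2^{(1-\delta/2)n})$ bits. Verification then proceeds in two stages: (a) at each sparse-support position $(y, z)$ of $S_\varphi$, recompute $M_\varphi[y, z]$ directly from $\varphi$ in $\mathrm{poly}(n)$ time and check consistency with $(L_\varphi + S_\varphi)[y, z]$; (b) for each pair $(u, v)$ in a small family of test vectors whose aggregates $u^{T} M_\varphi v$ jointly determine MAX-SAT, evaluate $u^{T}(L_\varphi + S_\varphi) v = (u^{T} U)(V^{T} v) + u^{T} S_\varphi v$ in time $\tilde O(k \cdot \rank(L_\varphi) + |\mathrm{supp}(S_\varphi)|) = O(2^{(1-\delta/2)n})$, and aggregate the results to decide $\MAX$-$3$-$\SAT(\varphi) \le t$. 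The total running time stays $O(2^{(1-\Omega(\delta))n})$.

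The principal obstacle, and the crux of the argument, is the design of $M_\varphi$. A naive encoding such as $M_\varphi[y, z] = \#\{c : c \text{ is satisfied by } (y, z)\}$ has rank $O(n)$ for free, so the rigidity assumption provides no information about $\varphi$. The matrix must be rich enough that the assumed low-rigidity decomposition is informative, while still admitting a reduction of the MAX-SAT question to aggregates $u^{T} M_\varphi v$ computable from $L + S$; the easy-witness construction used in~\cite{DBLP:conf/soda/BelovaKMRRS24} provides a template for striking this balance. A secondary obstacle is letting the verifier confirm consistency of the guessed decomposition with $\varphi$ without probing $\Omega(k^2)$ entries, which I would handle by restricting verification to the $k^{2-\delta}$ sparse-support positions plus a polylogarithmic number of algebraic consistency checks tied to $\varphi$'s structure.
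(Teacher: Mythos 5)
Your high-level strategy (contraposition; nondeterministically guess a low-rank-plus-sparse decomposition; exploit it to evaluate the relevant quantity fast) matches the paper's, but the proposal has two genuine gaps, and the second one is structural rather than just a missing detail. First, you explicitly leave open ``the design of $M_\varphi$,'' which you yourself identify as the crux; without it there is no reduction. The paper does not invent such a matrix from scratch: it routes through the known reduction from \MAX{}-3-\SAT{} to detecting a $4$-clique in a $4$-partite $3$-uniform hypergraph with parts of size $k = n^{O(1)}2^{n/4}$ (\Cref{theorem:max_3_sat_4_clique}), writes the clique count as a degree-four multilinear sum over the four edge tensors, and applies the rigidity assumption to the two-dimensional \emph{slices} $M[j_2,j_3]=\mathcal A_0[j_1,j_2,j_3]$ and $L[j_2,j_3]=\mathcal A_1[j_2,j_3,j_0]$ obtained by fixing two of the four indices. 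The generator then simply outputs these slices, parameterized by the formula, $t$, $j_0$, $j_1$, and a selector bit.

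Second, your parameter choice $k=2^{n/2}$ makes sound verification impossible within the time budget. A co-nondeterministic algorithm must reject all incorrect guesses, so the verifier has to confirm that the guessed $UV^{T}+S_\varphi$ actually equals $M_\varphi$; checking this entrywise costs $k^2=2^n$, which already exceeds the target running time. Your proposed fix --- verifying only the sparse-support positions plus ``a polylogarithmic number of algebraic consistency checks'' --- does not bind $UV^{T}$ to $M_\varphi$ off the sparse support, so a cheating prover can supply a decomposition of some other matrix and force a wrong answer. This is precisely why the paper uses a four-way split: with $k=2^{n/4}$, fully verifying the decompositions of the relevant $k\times k$ slices costs only $O(rk^2)$ per slice (about $O(rk^3)$ in total over all $j_0,j_1$, plus $O(ks)$ for the sparse parts), comfortably below $k^4\approx 2^n$, and the final evaluation runs in $O(k^2 s + r^2 k^3)=O(k^{4-\delta})$ once $r=k^{1/2-\delta}$ and $s=k^{2-\delta}$. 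To repair your argument you would need both the missing construction and a verification mechanism that is sound, and the natural way to obtain both is the four-partite clique reduction the paper uses.
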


    Our third result extends the second result by~including high-rank tensors.

	\begin{restatable}{theorem}{thmHighRankTensorsOmega} \label{thm:high_rank_tensors_omega}
		If, for any $\varepsilon > 0$, \MAX{}-3-\SAT{} cannot be~solved in co-nondeterministic time $O(2^{(1 - \varepsilon)n})$, then, for all $\delta > 0$ and some $\Delta > 0$, there are two generators $g_1  \colon \{0, 1\}^{\log^{O(1)} k} \to  \mathbb{F}^{k \times k}$ and $g_2  \colon \{0, 1\}^{\log^{O(1)} k} \to \mathbb{F}^{k \times k \times k}$ computable in~time polynomial in $k$ such that, for infinitely many~$k$, at~least one of~the following is~satisfied:
	       \begin{itemize}
	           \item $g_1(s)$ has $k^{1 - \delta}$-rigidity~$k^{2 - \delta}$, for some~$s$;
			   \item $\rank(g_2(s))$ is at least $k^{1 + \Delta}$, for some~$s$.
	       \end{itemize}
	\end{restatable}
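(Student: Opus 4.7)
The plan is to strengthen Theorem~\ref{thm:high_rigidity} via a win--win that pairs matrix rigidity with $3$-tensor rank. For the matrix generator $g_1$, I would reuse (with minor parameter changes) the generator of Theorem~\ref{thm:high_rigidity}; for $g_2$, I would output a naturally associated $3$-tensor derived from the same MAX-3-SAT instance. Concretely, given a MAX-3-SAT formula on $n$ variables, partition the variables into three equal blocks, set $k = 2^{n/3}$, and let $g_2(s)$ be the tensor $T$ with $T_{a,b,c}$ equal to the number of clauses satisfied by the joint partial assignment $(a,b,c)$. The seed $s$ in both cases encodes the clause set, which has size $n^{O(1)} = \log^{O(1)} k$, so both generators are polynomial-time computable as required.

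Next, I would argue for a contradiction: suppose that for all sufficiently large $k$ and every seed $s$, $g_1(s)$ has $k^{1-\delta}$-rigidity below $k^{2-\delta}$ \emph{and} $g_2(s)$ has rank below $k^{1+\Delta}$. Using Valiant's translation, the first assumption yields a subquadratic, low-depth circuit for multiplying $g_1(s)$ by an arbitrary vector. Using a rank-$k^{1+\Delta}$ decomposition of $g_2(s)$ (guessed nondeterministically and checked coordinate-wise), the trilinear form $\sum_{a,b,c} T_{a,b,c}\, x_a y_b z_c$ is evaluated in $O(k^{1+\Delta}) = O(2^{(1+\Delta)n/3})$ operations. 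Plugging both into the guess-and-check template that drives Theorem~\ref{thm:high_rigidity} and the Belova et al.\ construction --- nondeterministically guess an assignment attaining the MAX, then co-nondeterministically verify optimality by summing over all remaining assignments, expressed as a batch of trilinear contractions precomposed with the low-rigidity matrix-vector circuit --- one obtains a co-nondeterministic algorithm for MAX-3-SAT running in time below $2^{(1-\varepsilon)n}$, contradicting the hypothesis.

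The principal obstacle will be parameter balancing: one must choose $\Delta$ as a function of $\delta$ and the target $\varepsilon$ so that the combined speed-up strictly beats $2^{(1-\varepsilon)n}$ for every $\varepsilon>0$. The asymmetry between the new rigidity threshold $k^{1-\delta}$ and the tensor rank bound $k^{1+\Delta}$ is exactly what allows the rigidity threshold to jump from $k^{1/2-\delta}$ (as in Theorem~\ref{thm:high_rigidity}) to $k^{1-\delta}$: since the tensor alternative already absorbs the most expensive step of the verifier, the rigidity needed on $g_1(s)$ can be forced all the way up to the near-optimal threshold. A secondary technical issue is verifying that the nondeterministic advice certifying the tensor decomposition and the MAX witness can be unified into a single input-oblivious-modulo-advice co-ND verifier whose advice stays within the $\log^{O(1)}k$ seed-length budget; this follows by the same encoding tricks used in the proof of Theorem~\ref{thm:high_rigidity}.
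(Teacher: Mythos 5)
Your proposal has two genuine gaps, and the second one is the key idea of the paper's proof. First, the tensor you propose for $g_2$ cannot work: with a three-way split of the variables, the tensor $T_{a,b,c}=\#\{\text{clauses satisfied by }(a,b,c)\}$ is a sum over clauses of tensors of the form $\mathbf{1}\otimes\mathbf{1}\otimes\mathbf{1}-u^C\otimes v^C\otimes w^C$ (each clause touches at most one variable pattern per block, so its falsification indicator is a rank-one tensor). Hence $\rank(T)\le 2m=O(\log k)$ always, so the second bullet of the theorem can never fire for your $g_2$, and the win--win degenerates. The three-way split also kills the algorithmic side: with $k=2^{n/3}$ you have $k^3=2^n$, so even verifying a guessed rank decomposition of a $k\times k\times k$ tensor (which the paper does in time $O(q^2k^{\omega-1})$ via fast rectangular matrix multiplication, Lemma~\ref{lemma:evaluate_sum}) leaves essentially no room below $2^{(1-\varepsilon)n}$, and deciding whether some entry of $T$ equals $t$ is not a single trilinear-form evaluation in any case. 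The paper instead keeps the four-partite $4$-clique reduction with $k=n^{O(1)}2^{n/4}$ and lets $g_2$ output the edge tensor $\mathcal{A}_0$ of that hypergraph, whose low rank genuinely speeds up the clique count (Theorem~\ref{thm:high_rank}).

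Second, and more importantly, no amount of parameter balancing in the algorithmic trade-off gets the rigidity threshold up to $k^{1-\delta}$ while keeping the tensor-rank threshold superlinear: Theorem~\ref{thm:high_rank_tensors} forces $\alpha\le\frac{5-\beta-\omega}{2}$, which with $\beta>1$ and the known $\omega\approx 2.37$ caps $\alpha$ well below $1$. The step you are missing is a nonconstructive case analysis on $\omega$ itself: the paper augments $g_2$ with one extra seed on which it outputs the matrix multiplication tensor $\mathcal{A}_{\sqrt{k}}$. If $\omega=2$, the trade-off of Corollary~\ref{crly:high_rank_tensors} already yields $k^{2-\omega/2-\delta}=k^{1-\delta}$-rigidity; if $\omega\ge 2+2\Delta$, then $\rank(\mathcal{A}_{\sqrt{k}})\ge k^{1+\Delta}$ and the second bullet holds outright. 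Your intuition that ``the tensor alternative absorbs the most expensive step'' does not substitute for this dichotomy; without it the claimed rigidity exponent $1-\delta$ is unreachable.
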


	It~is worth noting that~\cite{DBLP:conf/soda/BelovaKMRRS24} showed circuit lower bounds under the same assumption: if \MAX{}-$k$-\SAT{} cannot be solved in co-nondeterministic time~$O(2^{(1 - \varepsilon)n})$ for any~$\varepsilon > 0$, then for any~$\delta > 0$, there exists an~explicit polynomial family that cannot be computed by arithmetic circuits of size~$O(n^\delta)$.

	The best known lower bounds for the size of~depth-three circuits computing an~explicit Boolean function is~$2^{\Omega(\sqrt n)}$~\cite{DBLP:journals/acr/Hastad89,DBLP:journals/jacm/PaturiPSZ05}.
	Proving a~$2^{\omega(\sqrt n)}$ lower bound for this restricted circuit model remains
	a~challenging open problem and it~is known that a~lower bound as~strong~as $2^{\omega(n/\log\log n)}$ would give an~$\omega(n)$ lower bound for unrestricted circuits
	via Valiant's reduction~\cite{DBLP:conf/mfcs/Valiant77}.
	One way of~proving better depth-three circuit lower bounds is~via \emph{canonical circuits} introduced by~Goldreich and~Wigderson~\cite{DBLP:series/lncs/0001W20}.
	They are~closely related to~rigid matrices: if~$T$ is~an~$n \times n$ matrix of~$r$-rigidity $r^3$, then~the~corresponding bilinear function requires canonical circuits of~size $2^{\Omega(r)}$~\cite{DBLP:series/lncs/0001W20}.
	Goldreich and~Tal~\cite{DBLP:journals/cc/GoldreichT18} showed that a~random Toeplitz matrix has~$r$-rigidity $\frac{n^3}{r^2 \log n}$, which implies a~$2^{\Omega(n^{3/5})}$ lower bound on~canonical depth-three circuits for~an~explicit function.
	By~substituting $n^{2/3 - \delta}$-rigidity for~some $\delta > 0$
	in~\Cref{thm:high_rank_tensors_omega}, one gets the~following result.
	\begin{restatable}{corollary}{crlCanonical}
		If, for every $\varepsilon > 0$, \MAX{}-3-\SAT{} cannot be~solved in~co-nondeterministic time $O(2^{(1 - \varepsilon)n})$, then, for any $\delta > 0$, one can construct an~explicit family of $2^{\log^{O(1)}n}$ functions such that,
		for infinitely many~$n$, at~least one of~them is
		either bilinear and~requires canonical circuits of~size $2^{\Omega(n^{2/3 - \delta})}$ or~trilinear and~requires arithmetic circuits of~size $\Omega(n^{1.25})$.
	\end{restatable}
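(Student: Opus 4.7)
The plan is to reduce this corollary directly to \Cref{thm:high_rank_tensors_omega} by re-tuning its sparsity parameter, and then to apply the Goldreich--Wigderson canonical-circuit framework together with the standard tensor-rank / arithmetic-circuit connection. The explicit family will be the image of the two generators $g_1, g_2$ supplied by \Cref{thm:high_rank_tensors_omega}; since both have seed length $\log^{O(1)}n$, the family has size $2^{\log^{O(1)}n}$ as required.

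First, I would re-read the proof of \Cref{thm:high_rank_tensors_omega} as a statement parameterized by a ``sparsity'' parameter $r$: for any $r$ in a suitable range, the generator either outputs a matrix whose $r$-rigidity exceeds some threshold $\rho(r)$ or outputs a tensor of rank $n^{1+\Delta}$. The statement of the theorem corresponds to the choice $r = n^{1-\delta}$ with $\rho(r) = n^{2-\delta}$. For the canonical-circuit application I instead set $r = n^{2/3 - \delta}$, so that the target $\rho(r) = r^{3} = n^{2-3\delta}$ is exactly the rigidity bound required by the Goldreich--Wigderson connection between rigidity and canonical depth-three circuits.

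Second, in the ``matrix branch'' I invoke the Goldreich--Wigderson result~\cite{DBLP:series/lncs/0001W20}: an $n \times n$ matrix with $r$-rigidity at least $r^3$ yields a bilinear function whose canonical depth-three circuit size is $2^{\Omega(r)}$. Plugging in $r = n^{2/3-\delta}$ gives the claimed $2^{\Omega(n^{2/3-\delta})}$ bound. In the ``tensor branch'' the generator $g_2$ produces a $k\times k\times k$ tensor of rank at least $n^{1+\Delta}$; via the standard equivalence between tensor rank and bilinear/trilinear arithmetic complexity, the associated trilinear form requires arithmetic circuits of size $\Omega(n^{1+\Delta})$, and by choosing $\delta$ small (so that $\Delta$ — which in the theorem is a fixed positive constant — makes $1 + \Delta \geq 1.25$) we obtain the claimed $\Omega(n^{1.25})$ lower bound.

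The main obstacle will be verifying that the proof of \Cref{thm:high_rank_tensors_omega} actually goes through at the more demanding sparsity scale $r = n^{2/3-\delta}$ rather than $r = n^{1-\delta}$: smaller sparsity generally makes high rigidity harder to force, and the win--win argument balancing the rigidity and tensor-rank alternatives must be re-examined at this new regime. However, since the corresponding target rigidity $r^3 = n^{2-3\delta}$ is substantially weaker than the $n^{2-\delta}$ of the original theorem, one expects the slack to absorb the change in sparsity, so this should amount to re-balancing parameters rather than reworking the construction.
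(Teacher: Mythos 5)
Your rigidity branch is fine and matches the paper's intent: lowering the sparsity parameter to $r = n^{2/3-\delta}$ only makes the rigidity target easier to meet, the resulting $n^{2-\delta} \ge r^3$ satisfies the Goldreich--Wigderson threshold, and the $2^{\Omega(n^{2/3-\delta})}$ canonical-circuit bound follows. The gap is in your tensor branch. You write that by choosing $\delta$ small you can arrange $1+\Delta \ge 1.25$, but the $\Delta$ in \Cref{thm:high_rank_tensors_omega} is not a tunable parameter: in its proof it is forced by the value of the matrix multiplication exponent, $\Delta = (\omega-2)/2$ (the rank guarantee comes from $\rank(\mathcal{A}_{\sqrt{k}}) \ge k^{\omega/2}$, and when $\omega = 2$ one falls back to rank only $k^{1+\delta}$). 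If $\omega$ is, say, $2.01$, then $\Delta \approx 0.005$ and the theorem only promises rank $k^{1.005}$, nowhere near $k^{1.25}$. So routing the argument through \Cref{thm:high_rank_tensors_omega} cannot yield the $\Omega(n^{1.25})$ arithmetic-circuit bound.

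The missing idea is that you must re-tune \emph{both} exponents simultaneously using the explicit trade-off of \Cref{thm:high_rank_tensors}: decreasing the rigidity sparsity exponent $\alpha$ from $1$ to $2/3$ buys enough slack in the running-time balance to \emph{raise} the tensor-rank exponent $\beta$ from ``barely superlinear'' up to $1.25$. Concretely, one checks that $\alpha = 2/3$ and $\beta = 1.25$ satisfy $\beta \le \frac{5-\omega}{2}$ and $\alpha \le \frac{5-\beta-\omega}{2}$ for every $\omega < 2.38$ (the second inequality needs $\omega \le 2.41\overline{6}$), so \Cref{thm:high_rank_tensors} directly yields either $k^{2/3-\delta}$-rigidity $k^{2-\delta}$ or tensor rank $k^{1.25-\delta}$, unconditionally on the exact value of $\omega$ within the known range. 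Your instinct that ``the slack absorbs the change in sparsity'' is correct, but that slack is precisely what must be spent on pushing $\beta$ up to $1.25$, not left unused while hoping $\Delta$ happens to exceed $0.25$.
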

	This conditionally improves the~recent result of~Goldreich~\cite{DBLP:journals/cc/Goldreich22}, who presented an~$O(1)$-linear function that requires canonical depth-\emph{two} circuits of~size $2^{\Omega(n^{1 - \varepsilon})}$, for~every $\varepsilon > 0$.
	Moreover, every bilinear function can be~computed by~canonical circuits of~size $2^{O(n^{2/3})}$, so~the~lower bound is~almost optimal and~conditionally addresses Open Problem~6.5 from~\cite{DBLP:journals/cc/GoldreichT18}.

	\subsection{Known Explicit Constructions}
	\label{section:explicit}
	In~this section, we~review known constructions of~combinatorial objects
	(functions of~high monotone circuit size, matrices of~high rigidity, and
	tensors of~high rank).

	\subsubsection*{Monotone Functions}
	For monotone \NP{}-problems (like \problem{Clique}, \problem{Matching}, \problem{Hamiltonian Cycle}), it~is natural
	to~ask what is~their monotone circuit size.
	A~celebrated result by~Razborov~\cite{razborov1985lower} is~a~lower bound of~$n^{\Omega(\log n)}$ on~monotone circuit size obtained by~the approximation method (which was recently improved to $2^{n^{\Omega(1)}}$~\cite{DBLP:journals/eccc/CavalarGRS025}).
	Subsequently, Andreev~\cite{andreev1985method} proved a~$2^{n^{1/8 - o(1)}}$ lower bound for another explicit monotone function.
	Following the work of~\cite{DBLP:journals/combinatorica/AlonB87,andreev1987method,DBLP:journals/combinatorica/Jukna99,DBLP:conf/stoc/HarnikR00}, in~2020, Cavalar, Kumar, and~Rossman~\cite{DBLP:journals/algorithmica/CavalarKR22} achieved the best-known lower bound of~$2^{n^{1/2 - o(1)}}$.
	Recently, another approach for~proving monotone circuit lower bounds was developed using lower bounds from Resolution proofs and lifting theorem~\cite{DBLP:journals/toc/GargGK020}.
	Specifically, if an unsatisfiable formula~$F$ is hard to~refute in~the resolution proof system, then a~monotone function associated with~$F$ has large monotone circuit complexity.
	In~this manner, following the~work of~\cite{DBLP:journals/toc/GargGK020,DBLP:conf/innovations/GoosKRS19,DBLP:conf/innovations/LovettMMPZ22}, the~lower bound of~$2^{\Omega(\sqrt{n})}$ was~also achieved.
	Just recently, Blasiok and Meierh{\"{o}}fer~\cite{DBLP:conf/coco/BlasiokM25} established the lower bound~$2^{\Omega(\sqrt{n})}$ for the~\problem{Clique} problem.

	It is worth to mention, that several monotone functions from~$\P$ are known to require exponential monotone circuit size~\cite{razborov1985lower,DBLP:journals/combinatorica/Tardos88,DBLP:journals/jacm/RazW92}.
	Therefore, the gap between monotone circuit size and general circuit size could be exponential.

    Proving a~$2^{\omega(n^{1/2})}$ lower bound remains a~challenging open problem (whereas a~lower bound $2^{\Omega(n)}$ was recently proved by~Pitassi and Robere~\cite{DBLP:conf/stoc/PitassiR17} for monotone \emph{formulas}). Our \Cref{corollary:monotone_nseth} establishes
    a~stronger lower bound under an~assumption that \NSETH{} holds.

	\subsubsection*{Matrix Rigidity}
    A~matrix~$M$ over a~field~$\mathbb F$ has \emph{r-rigidity~s} if~for any matrices $R, S$ over
    a~field~$\mathbb{F}$ such that $M = R + S$ and $\rank(R) \le r$, $S$~has at~least $s$~nonzero entries.
	That~is, one needs to~change at~least $s$ elements in~$M$
	to~change its rank down to~at~most~$r$.
    The concept of~rigidity was introduced by~Valiant~\cite{DBLP:conf/mfcs/Valiant77} and Grigoriev~\cite{grigor1980application}.
    It has striking connections to~areas such~as computational complexity~\cite{DBLP:journals/fttcs/Lokam09,DBLP:conf/stoc/AlmanW17, DBLP:conf/focs/AlmanC19, DBLP:conf/innovations/GolovnevKW21}, communication complexity~\cite{DBLP:journals/cc/Wunderlich12}, data structure lower bounds~\cite{DBLP:conf/stoc/DvirGW19, DBLP:conf/innovations/RamamoorthyR20}, and~error-correcting codes~\cite{DBLP:journals/cc/Dvir11}.

    Valiant~\cite{DBLP:conf/mfcs/Valiant77} proved that~if a~matrix~$M$ has~$\varepsilon n$-rigidity~$n^{1 + \delta}$ for some~$\varepsilon, \delta > 0$, then the bilinear form of~$M$ cannot be~computed by~arithmetic circuits of~size~$O(n)$ and~depth~$O(\log n)$.
    Following Razborov~\cite{razborov89}, Wunderlich~\cite{DBLP:journals/cc/Wunderlich12} proved that the~existence of strongly-explicit matrices with~$2^{(\log \log n)^{\omega(1)}}$-rigidity~$\delta n^2$, for some~$\delta > 0$, implies the existence of a~language that does~not belong to~the communication complexity analog of~$\PH$.
	Although it~is known~\cite{DBLP:conf/mfcs/Valiant77} that~for any~$r$ almost every $n \times n$ matrix has~$r$-rigidity~$\Omega(\frac{(n-r)^2}{\log n})$ over algebraically closed fields, obtaining an~explicit constructions of~rigid matrices remains a~long-standing open question.
    Many works have aimed~at finding explicit or~semi-explicit rigid matrices~\cite{DBLP:journals/combinatorica/Friedman93,pudlak1991computation, DBLP:journals/ipl/ShokrollahiSS97, DBLP:conf/stoc/AlmanW17, DBLP:journals/toc/DvirE19, DBLP:conf/focs/AlmanC19, DBLP:journals/toc/DvirL20, DBLP:journals/toct/VolkK22, DBLP:journals/jacm/BhargavaGKM23, DBLP:journals/siamcomp/BhangaleHPT24}.
    Also, a~recent line of~works establishes a~connection between
    the \problem{Range Avoidance} problem and the construction of matrices with high
    rigidity~\cite{DBLP:conf/focs/Korten21, DBLP:conf/approx/GuruswamiLW22, DBLP:conf/approx/GajulapalliGNS23, DBLP:conf/stoc/0001HR24,DBLP:conf/stoc/Li24}.

	Small explicit\footnote{A~matrix or family of matrices is~called explicit if it~is polynomial-time constructible.} families of rigid matrices can be~used to prove arithmetic circuits lower bounds~\cite{DBLP:conf/mfcs/Valiant77}.
    The best known polynomial-time constructible matrices have~$r$-rigidity~$\frac {n^2} r \log(n/r)$ for~any~$r$, which was proved by Shokrollahi, Spielman and Stemann~\cite{DBLP:journals/ipl/ShokrollahiSS97}.
	Goldreich and~Tal~\cite{DBLP:journals/cc/GoldreichT18} proved that a~random $n \times n$ Toeplitz matrix over~$\mathbb{F}_2$ (i.e., a~matrix of~the~form $A_{i,j} = a_{i - j}$ for random bits $a_{-(n - 1)}, \ldots, a_{n - 1}$) has $r$-rigidity $\frac{n^3}{r^2 \log n}$ for~$r \ge \sqrt{n}$.
	However, the~size of~that family is~exponential in~$n$.
	Our Theorem~\ref{thm:high_rigidity} demonstrates that, under the assumption that \MAX{}-$3$-\SAT{} is~hard, for~any $\delta > 0$, for~infinitely many~$n$ one can construct a~$2^{\log^{O(1)}n}$-sized family of~$n \times n$ matrices with at~least one having $n^{1 / 2 - \delta}$-rigidity $n^{2 - \delta}$.

	This result is~still far from~the~regime where circuit lower bounds can~be derived via~Valiant's result, but it~strictly improves the~polynomial-time construction~\cite{DBLP:journals/ipl/ShokrollahiSS97} for~any $r < \sqrt{n}$ and~improves the result of~Goldreich and~Tal~\cite{DBLP:journals/cc/GoldreichT18} by~substantially reducing the~family size while maintaining the~same rigidity for~$r \approx \sqrt{n}$.
	An open question remains as to~whether explicit constructions of~rigid matrices exist in~the class $\P^{\NP{}}$~\cite{DBLP:journals/corr/abs-2009-09460}.
	The construction provided by~Goldreich and~Tal~\cite{DBLP:journals/cc/GoldreichT18} lies in~$\E^{\NP{}}$.
	Following the work of Alman and~Chen~\cite{DBLP:conf/focs/AlmanC19},~\cite{DBLP:journals/siamcomp/BhangaleHPT24} established that there exists a constant $\delta > 0$ such that one can construct $n \times n$ matrices with $2^{\log n / \Omega(\log \log n)}$-rigidity $\delta n^2$ in~$\mathrm{FNP}$.
	Subsequently, Chen and~Lyu~\cite{DBLP:conf/stoc/0001L21} demonstrated a method for constructing highly rigid matrices, proving that there exists a constant $\delta > 0$ such that one can construct $n \times n$ matrices with $2^{\log^{1 - \delta} n}$-rigidity $(1 / 2 - \exp(-\log^{2 / 3 \cdot \delta} n)) \cdot n^2$ in~$\P^{\NP}$.
	More recently, Alman and~Liang~\cite{DBLP:conf/stoc/AlmanL25} showed that the Walsh-Hadamard matrix~$H_n$ has $c_1 \log n$-rigidity $n^2 \left( \frac{1}{2} - n^{c_2} \right)$ for some constants $c_1, c_2 > 0$.
	Our construction, in~the class $\DTIME[2^{\log^{O(1)}n}]^{\NP}$, produces matrices with $n^{\frac{1}{2} - \delta}$-rigidity~$n^{2 - \delta}$ for~any $\delta > 0$, under~the condition that \MAX{}-3-\SAT{} is~hard.

	\subsubsection*{Tensor Rank and Arithmetic Circuits}
	Proving arithmetic circuit lower bounds is~another important
	challenge in~complexity theory.
	An~arithmetic circuit over a~field~$\mathbb F$ uses as~inputs formal variables and field elements and computes in~every gate
	either a~sum or a~product.
	As proved~by Strassen~\cite{strassen1973berechnungskomplexitat, DBLP:journals/tcs/Strassen75} and Baur~and~Strassen~\cite{DBLP:journals/tcs/BaurS83}, computing $\sum_{i = 1}^{n} x_{i}^{n}$ requires arithmetic circuits of size~$\Omega(n \log n)$, provided~$n$ does not divide the~characteristic of~$\mathbb{F}$.
	Raz~\cite{DBLP:journals/siamcomp/Raz03} further established that arithmetic circuits with bounded coefficients require $\Omega(n^2 \log n)$ gates to perform matrix multiplication over $\mathbb{R}$ or $\mathbb{C}$, following the work in~\cite{DBLP:journals/siamcomp/RazS03}.
	However, no~superlinear lower bounds are known for polynomials
	of~constant degree.
	For constant-depth arithmetic circuits over fields of characteristic~$2$, exponential lower bounds are known~\cite{razborov1987lower,DBLP:conf/stoc/Smolensky87}.
	For other finite characteristics, exponential lower bounds are known only for depth~$3$~\cite{DBLP:conf/stoc/GrigorievK98,DBLP:conf/focs/GrigorievR98}.
	For characteristic~$0$, the best lower bound for depth~$3$ is $\Omega(n^{2 - \varepsilon})$~\cite{DBLP:conf/coco/ShpilkaW99}.
	For~characteristic~$0$, Limaye, Srinivasan, and~Tavenas~\cite{DBLP:journals/jacm/LimayeST25} proved the~first superpolynomial lower bounds for constant-depth circuits.
	We~refer to~\cite{DBLP:journals/jacm/LimayeST25,DBLP:conf/icalp/AmireddyGK0T23,DBLP:journals/toct/BhargavDS24,DBLP:conf/coco/000124} and the~references therein for~recent advances in~lower bounds for~small-depth algebraic circuits.

	Matrix multiplication is~one of~the fundamental problems
	whose arithmetic circuit size is~of~great interest.
	While many highly nontrivial algorithms for~it
	are known (starting from Strassen~\cite{strassen1969gaussian}),
	we~still do~not have superlinear lower bounds on~its arithmetic circuit complexity. Proving such lower bounds
	is closely related to the problem of~determining the rank of tensors.
	A~$d$-dimensional tensor is said to have rank~$q$ if it can be expressed as~a~sum of~$q$ rank-one tensors.
	Here, a~rank-one $d$-dimensional tensor is a~tensor of the form $u_1 \otimes \dots \otimes u_d$, where $\otimes$ stands for a tensor product.
	By a~multiplication tensor, we mean a~tensor of size $n^2 \times n^2 \times n^2$ (formally defined in~\Cref{sec:matrices_and_tensors}).
	Establishing an upper bound for the rank of the multiplication tensor provides a~means
	of~proving upper bounds for matrix multiplication via the laser method~\cite{DBLP:conf/focs/Strassen86}.
	Moreover, proving a~lower bound for the tensor rank would yield superlinear lower bounds for arithmetic circuits computing the polynomial defined by that tensor.

	Therefore, proving lower bounds on~the tensor rank provides a~path to~proving lower bounds for arithmetic circuits.
	For the rank of~the matrix multiplication tensor, Bshouty~\cite{DBLP:journals/siamcomp/Bshouty89} and Bläser~\cite{DBLP:conf/focs/Blaser99} proved a~lower bound $2.5n^2 - \Theta(n)$.
	Subsequently, Shpilka~\cite{DBLP:journals/siamcomp/Shpilka03} improved the bound to $3n^2 - o(n^2)$ over~$\mathbb{F}_2$.
	The bound $3n^2 - o(n)$ was later achieved by~Landsberg~\cite{DBLP:journals/siamcomp/Landsberg14} over arbitrary fields and further slightly improved by Massarenti and Raviolo~\cite{MASSARENTI20134500,MASSARENTI2014369}.
	Alexeev, Forbes and Tsimerman~\cite{DBLP:conf/coco/AlexeevFT11} constructed explicit $d$-dimensional tensors with rank $2 n^{\left\lfloor \frac{d}{2} \right\rfloor} + n - \Theta(d \log n)$, thus improving the lower bounds on~high-dimensional tensors.
	Nevertheless, superlinear size lower bounds for constant-degree polynomials remain unknown.
	Additionally, H{\aa}stad~\cite{DBLP:journals/jal/Hastad90} established that determining the rank of a~$d$-dimensional tensor is~\NP{}-hard for any $d \geq 3$.
	Consequently, a~major open problem is to~construct
	an~explicit family of $d$-dimensional tensors with rank at~least $n^{\left\lfloor \frac{d}{2} \right\rfloor + \varepsilon}$ for some $\varepsilon > 0$ and $d \geq 3$.

	Our \Cref{thm:high_rank_tensors_omega} shows that, under an~assumption that \MAX{}-3-\SAT{} cannot be~solved fast co-nondeterministically, one gets
	an~explicit $2^{\log^{O(1)}n}$-size family of $n \times n$-matrices and $n \times n \times n$-tensors, such that, for any $\delta > 0$ and some $\Delta > 0$, at~least one of~the matrices has $n^{1 - \delta}$-rigidity~$n^{2 - \delta}$ or one of~the tensors has rank $n^{1 + \Delta}$.
	Furthermore, we establish a~trade-off between matrix rigidity and tensor rank, see~\Cref{thm:high_rank_tensors}.
	Other results for proving lower bounds on~tensor rank under certain assumptions are~known.
	Nederlof~\cite{DBLP:conf/stoc/Nederlof20} proved that, if for~any $\varepsilon > 0$, the~bipartite \problem{Traveling Salesman} problem cannot be~solved in~time $2^{(1 - \varepsilon)n}$, then the~matrix multiplication tensor has superlinear rank.
	Additionally, Bj{\"{o}}rklund and~Kaski~\cite{DBLP:conf/stoc/BjorklundK24} recently proved that if, for~any $\varepsilon > 0$, there exists a~$k$ such that the~$k$-\problem{Set Cover} problem cannot be~solved in~time $O(2^{(1 - \varepsilon)n} | \mathcal{F}|)$, then there~is an~explicit tensor with superlinear rank, where~$\mathcal{F}$ is a~family of~subsets of~$[n]$, each of~size at~most~$k$.
	Pratt~\cite{DBLP:conf/stoc/Pratt24} improved this result, showing that under~the~same conjecture there exists an~explicit tensor of~shape $n \times n \times n$ and~rank at~least $n^{1.08}$.
	\cite{DBLP:conf/soda/BjorklundCHKP25} showed that if for every~$\varepsilon > 0$ \problem{Chromatic Number} problem cannot be~solved in~time $2^{(1 - \varepsilon) n}$, then there~exists an~explicit tensor of~superlinear rank.

	\subsection{Discussion and Open Problems}
	Many of~the lower bounds mentioned above are proved under various strong assumptions (on~the complexity of \SAT{}, \MAX{}-\SAT{}, \problem{Set Cover}, \problem{Chromatic Number}, \problem{Traveling Salesman}). They seem much stronger than merely $\P{} \neq \NP$ and might eventually be~refuted.
	Still, the revealed reductions between problems are of~great interest and may still yield further
	insights: one may be~able to~weaken the used assumptions or~to~construct generators
	from other fine-grained reductions. Moreover, as~with the case of~\NSETH{} assumption,
	one may be~able to~derive interesting consequences from both an~assumption and its negation
	leading to a~win-win situation. Below, we~state a~few open problems in~this direction.

	If one were to~partition $3$-\SAT{} in~\Cref{theorem:satov} into~$t$ parts, where~$t = \omega(1)$, then creating an~explicit function equivalent to~$t$-\OV{} would imply lower bounds for~that function under~\NETH{}.
	This approach would yield a~more advantageous win-win situation, as~the~assumption that~\NETH{} is~false gives~stronger lower bounds~\cite{DBLP:journals/jacm/ChenRTY23}.
	\begin{openproblem}
		Prove that if \NETH{} is true, then there exists an~explicit function of monotone complexity $2^{\omega(\sqrt{n} )}$.
	\end{openproblem}
	Moreover, the existence of small monotone circuits not only refutes~\NSETH{} but also establishes that~\UNSAT{} has input-oblivious proof size~$2^{o(n)}$ that can be verified in time~$2^{\frac{n}{2} + o(n)}$.
	Therefore, we believe that it may be possible to derive even stronger implications beyond merely refuting~\NSETH{}.

	Another question arises regarding tensor rank.
	Is~it possible to~construct a~small family of~tensors with~superlinear rank, assuming that~\MAX-3-\SAT{} cannot~be~solved efficiently in~co-nondeterministic time (this way, eliminating the~dependence on~rigidity from our results)?

	\begin{openproblem}
		Prove that, if,~for~any $\varepsilon > 0$, \MAX{}-3-\SAT{} cannot~be~solved in~co-nondeterministic time $O(2^{(1 - \varepsilon)n})$, then, for~all $\delta > 0$ there exists a~small family of~tensors of~size $k \times k \times k$ such that,
		for infinitely many~$k$, at~least one~of them has rank at~least $k^{1 + \delta}$.
	\end{openproblem}

	However, we do not know of any consequences of solving~\MAX{}-3-\SAT{} faster in the co-nondeterministic setting.
	This makes our assumption weak and raises the question of whether it~can be~refuted.

	\begin{openproblem}
		Prove that, for some $\varepsilon > 0$, \MAX{}-3-\SAT{} can be~solved
		in~co-nondeterministic time~$O(2^{(1 - \varepsilon)n})$.
	\end{openproblem}

	On~the other hand, to~get a~win-win situation, it~would be~interesting to~find
	nontrivial consequences of~the existence of~such an~algorithm.

	\begin{openproblem}
		Derive new circuit lower bounds from the existence of~an~algorithm
		solving \MAX{}-3-\SAT{}
		in~co-nondeterministic time~$O(2^{(1 - \varepsilon)n})$, for some $\varepsilon>0$.
	\end{openproblem}

	\subsection*{Structure of the Paper}
	The paper is organized as~follows.
	In~\Cref{sec:preliminaries}, we~introduce the~notation used throughout the~paper and provide the necessary background.
	In~\Cref{sec:core_concepts}, we~give an~overview of~the main proof ideas.
	In~\Cref{sec:circuit_lower_bounds}, we~establish the~win-win circuit lower bound.
	In~\Cref{sec:rigid_matrices}, we~construct rigid matrices under an~assumption that \MAX{}-3-\SAT{} cannot be~solved fast co-nondeterministically.
	In~\Cref{sec:high_rank_tensors}, we~construct either three-dimensional tensors with high~rank or~matrices with high~rigidity under the~same assumption.

	\section{Proof Ideas}
	\label{sec:core_concepts}
	In~this section, we~give high level ideas of~the main results.

    \subsection{Monotone Circuit Lower Bound}
    To~prove a~lower bound $2^{\Omega(n / \log n)}$ for monotone circuit size of \coNP{} under \NSETH{}, we~assume that all monotone
    functions from \coNP{} have monotone circuit size $2^{o(n / \log n)}$ and show how this can be~used to~solve \UNSAT{} nondeterministically in~less than $2^n$~steps.

    Given a~$k$-CNF~$F$, we~construct an~instance $(A_F, B_F)$ of~\OV{} of~size $2^{n/2}$ using \Cref{theorem:satov}.
    We~show (see \Cref{lemma:monotoneseparation}) that
    $(A_F, B_F)$ is a~yes-instance if~and only~if there exists
    a~monotone Boolean function separating $(A_F, \overline{B_F})$.
    Hence, one can guess a~small monotone circuit separating $(A_F, \overline{B_F})$ and verify that it~is correct. Overall,
    the resulting nondeterministic algorithm proceeds as~follows.

      \begin{quote}
            \begin{enumerate}
                \item In~time $O(n^22^{n/2})$, generate the sets $A_F$~and~$B_F$.
                \item Guess a~monotone circuit~$C_F$ of~size $O(2^{(1-\varepsilon)n/2})$.
                \item Verify that $C_F$~separates $(A_F, \overline{B_F})$. To~do this, check that $C_F(a)=1$, for every $a \in A_F$. Then, check that
                $C_F(b)=0$, for every $b \in \overline{B_F}$. The running time
                of~this step~is
                \[O(2^{(1-\varepsilon)n/2}\cdot |A_F| + 2^{(1-\varepsilon)n/2} \cdot |\overline{B_F}|)=O(2^{(1-\varepsilon)n/2} \cdot 2^{n/2})=O(2^{(1-\varepsilon/2)n}).\]
                \item If $C_F$ separates $(A_F, \overline{B_F})$, then
                $F \in \UNSAT{}$.
            \end{enumerate}
        \end{quote}

		This already shows how small monotone circuits could help to~break \NSETH{}, though it~does not provide a~single explicit function with this property.
		In the full proof in~\Cref{sec:circuit_lower_bounds},
		we~introduce such a~function. We~also use a~weaker assumption (than \NSETH{}).

	\subsection{Rigid Matrices and High Rank Tensors}
	To~construct matrices of~high rigidity under the~assumption that
	\MAX{}-$3$-\SAT{} cannot be~solved fast co-nondeterministically,
	we~proceed as~follows.
	Take a~$3$-CNF formula over $n$~variables and an~integer~$t$ and transform~it, using \Cref{theorem:max_3_sat_4_clique}, into a~$4$-partite $3$-uniform hypergraph $G$ with each part of~size $k = n^{O(1)}2^{\frac{n}{4}}$.
	The graph $G$~contains a~$4$-clique if~and only~if one can satisfy $t$~clauses of~$F$.
    We~show that checking whether $G$~has a~$4$-clique is~equivalent to~evaluating a~certain expression over three-dimensional tensors. We~then show that if~all slices
    of~these tensors have low rigidity, then one can solve \problem{$4$-Clique} on~$G$ in~co-nondeterministic time $O(k^{4-\varepsilon})$: to~achieve this, one guesses
    a~decomposition of a~matrix into a~sum of~a~low rank matrix
    and a~matrix with few nonzero entries.
    The idea is~that a~low rank matrix can be~guessed quickly
    as~a~decomposition into a~rank-one matrices (which are just products of~two vectors), whereas the second matrix can be~guessed quickly as~one needs to~guess the nonzero entries only.
    In~turn, this allows
    one to~solve \MAX{}-$3$-\SAT{} faster than~$2^n$ co-nondeterministically.
	Thus, this reduction is~a~generator of~rigid matrices: it takes a~$3$-CNF formula and outputs~a~matrix.
	The same idea can be~used to~generate either tensors with high rank or~matrices with high rigidity.

	\section{Preliminaries} \label{sec:preliminaries}
    For a~positive integer~$k$, $[k]=\{1, 2, \dotsc, k\}$, whereas for
    a~predicate~$P$, $[P]=1$ if~$P$~is true and $[P]=0$ otherwise (the Iverson bracket).
    For a~set~$S$ and an~integer~$k$, by $\binom{S}{k}$ we~denote the set
    of~all subsets of~$S$ of~size~$k$.

    \subsection{Boolean Circuits}\label{sec:spckts}
    For a~binary vector $v \in \{0,1\}^n$, by~$\overline{v} \in \{0,1\}^n$,
    we~denote the vector resulting from~$v$ by~flipping all its coordinates
    (thus, $v \oplus \overline{v}=1^n$). This extends naturally to~sets of~vectors: for $V \subseteq \{0,1\}^n$,
    \(\overline{V}=\{\overline{v} \colon v \in V\}.\)
    By $w(v)=\sum_{i \in [n]}v_i$, we~denote the \emph{weight} of~$v$.
    For two vectors $u,v \in \{0,1\}^n$, we~say that $u$~\emph{dominates}~$v$
    and write $u \ge v$, if $u_i \ge v_i$ for all $i \in [n]$.
    A~Boolean function $f \colon \{0,1\}^n \to \{0,1\}$ is~called
    \emph{monotone} if $f(u) \ge f(v)$, for all $u \ge v$.
    For disjoint sets $A, B \subseteq \{0,1\}^n$,
    we~say that a~function $f \colon \{0,1\}^n \to \{0,1\}$
    separates $(A,B)$ if $A \subseteq f^{-1}(1)$ and $B \subseteq f^{-1}(0)$.

   	A~\emph{Boolean circuit}~$C$ over variables $x_1,\dotsc,x_n$ is a directed acyclic graph with nodes of~in-degree zero and two. The in-degree zero nodes are labeled by variables $x_i$ and constants~$0$ or~$1$, whereas the in-degree two nodes are labeled by~binary Boolean functions. The only gate of~out-degree zero is the output of the circuit. A~circuit is called \emph{series-parallel} if there exists a labeling $\ell$ of its nodes such that for every wire $(u,v)$, $\ell(u)<\ell(v)$, and no pair of wires $(u,v), (u',v')$ satisfies $\ell(u)<\ell(u')<\ell(v)<\ell(v')$.

    A Boolean circuit~$C$ with variables $x_1,\ldots,x_n$ computes a Boolean function $f\colon\{0,1\}^n\to\{0,1\}$ in a~natural way. We define the \emph{size} of~$C$, $\size(C)$, as the number of gates in~it, and the Boolean circuit complexity of a function~$f$, $\size(f)$, as the minimum size of a~circuit computing it.
    A~circuit is~called \emph{monotone} if~all its gates compute disjunctions and conjunctions. It~is not difficult to~see that the class of~Boolean functions computed by~monotone circuits
    coincides with the class of~monotone Boolean functions. For a~monotone function~$f$, $\size_m(f)$ is~the minimum size of
    a~monotone circuit computing~$f$.

    A~sequence $(f_n)_{n=1}^{\infty}$, where $f_n \colon \{0,1\}^n \to \{0,1\}$, is~called a~\emph{function family}. Such a~family defines a~language $\cup_{n=1}^{\infty}f_n^{-1}(1)$ and we~say that the family is~\emph{explicit} if~this language is~in~\NP{}. When saying that a~language $L \subseteq \{0,1\}^*$ can be~solved by~circuits of~size $T(n)$, we~mean that it~can be~represented by a~function family $(f_n)_{n=1}^{\infty}$ where $\size(f_n) \le T(n)$, for all~$n$.

    \subsection{Arithmetic Circuits}\label{sec:ackts}
    An~\emph{arithmetic circuit} $C$ over a ring~$R$ and variables $x_1,\dotsc,x_n$ is a~directed acyclic graph with nodes
    of~in-degree zero or two. The in-degree zero nodes are labeled either by variables $x_i$ or elements of~$R$, whereas the in-degree two nodes are labeled by either $+$~or~$\times$. Every gate of out-degree zero is called an output gate.
    We will typically take $R$ to be $\mathbb{Z}$ or $\mathbb{Z}_p$ for a prime number~$p$.
    A single-output arithmetic circuit~$C$ over~$R$ computes
    a~polynomial over~$R$ in a~natural way.
	We say that $C$~computes a polynomial $P(x_1,\dotsc,x_n)$ if the two polynomials are \emph{identical} (as opposed to saying that $C$ computes $P$ if the two polynomials agree on every assignments of $(x_1,\ldots,x_n)\in R^n$).
	We define the size of~$C$ as the number of edges in~it, and the arithmetic circuit complexity of a polynomial as the minimum size of a circuit computing it.

    \subsection{SAT, MAX-SAT, OV, and Clique}
    For a~CNF formula~$F$, by~$n(F)$ and $m(F)$ we~denote the number
    of~variables and clauses of~$F$, respectively. We~write
    just~$n$ and~$m$, if~the corresponding CNF formula is~clear from the context. In~\SAT{} (\UNSAT{}), one is~given a~CNF formula and the goal is~to~check whether it~is satisfiable (unsatisfiable, respectively).
    In~$k$-\SAT{}, the given formula is in~$k$-CNF (that~is, all clauses have at~most $k$~literals).
    In \MAX{}-$k$-\SAT{}, one is~given a~$k$-CNF and an~integer~$t$
    and is~asked to~check whether it~is~possible
    to~satisfy exactly $t$~clauses.

    When designing an~algorithm for \SAT{}, one can assume that
    the input formula has a~linear (in~the number of~variables)
    number of~clauses. This is~ensured by~the following \emph{Sparsification Lemma}.
    By~$(\beta, k)$-\SAT{}, we~denote a~special case of~$k$-\SAT{} where the input $k$-CNF formula has at~most $\beta n$ clauses.

    \begin{theorem}[Sparsification Lemma, \cite{DBLP:journals/jcss/ImpagliazzoPZ01}]
    	\label{theorem:sparsification}
    	For any $k \in \mathbb{Z}_{\ge 3}$ and $\varepsilon>0$,
    	there exists $\alpha=\alpha(k, \varepsilon)$
    	and an~algorithm that, given a~$k$-CNF formula~$F$ over
    	$n$~variables, outputs $t \le 2^{\varepsilon n}$ formulas $F_1, \dotsc, F_t$ in $k$-CNF such that $n(F_i) \le n$ and $m(F_i) \le \alpha n$, for all $i \in [t]$, and $F \in \SAT$ if~and only~if $\lor_{i \in [t]}F_i \in \SAT$.
    	The running time of~the algorithm is~$O(n^{O(1)}2^{\varepsilon n})$.
    \end{theorem}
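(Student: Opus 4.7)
The plan is to build a recursive branching procedure that, starting from the input $k$-CNF $F$, produces as leaves of a search tree the output formulas $F_1,\dots,F_t$; by construction $F$ is~equisatisfiable to $\bigvee_i F_i$. The combinatorial driver is~the \emph{sunflower} structure on~clauses: a~collection of~clauses $C_1,\dots,C_m$ is a~sunflower with heart $H$ if every pairwise intersection of literal sets equals exactly $H$. When many clauses share a~small heart, branching on the heart simultaneously simplifies all of them, and this is what we exploit to~trade branching depth against residual density.

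Fix thresholds $a_1 > a_2 > \dots > a_{k-1}$ to be determined. At each node of~the recursion, the algorithm searches, for some $i \in [k-1]$, for a~sunflower of at~least $a_i$ clauses $C_j = H \cup P_j$ whose heart $H$ has size exactly~$i$ (so the petals $P_j$ are pairwise disjoint and of width at~most $k - i$). If~such a~sunflower exists, the node spawns two children: in~the \emph{core} child, the clause $H$ itself is~added to the formula, which subsumes and hence removes all $a_i$ sunflower clauses; in~the \emph{anti-core} child, each literal of~$H$ is asserted to~be false, which shrinks every $C_j$ to just its petal $P_j$ and propagates the resulting partial assignment, possibly simplifying or eliminating many other clauses as~well. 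Every satisfying assignment of~$F$ satisfies the formula of~exactly one of the two children, so correctness is~preserved, and both children strictly decrease a~width-weighted potential on clauses.

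The recursion halts at~a~node when no sunflower with heart of~size~$i$ and at~least $a_i$ petals exists for any $i \in [k-1]$. In this case a~standard sunflower-free counting argument (iterating the threshold $a_i$ from $i = k-1$ down to $i = 1$ bounds the number of clauses containing any fixed set of literals, and summing over literals bounds the total) shows that the formula has at most $\alpha n$ clauses for some $\alpha$ depending only on $k$ and the thresholds. The formula is then output as~one of the $F_j$. Because branching only adds short clauses or~shrinks existing ones, every leaf formula is still a~$k$-CNF on at~most~$n$ variables.

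The main obstacle is~choosing the thresholds so~that the total number of leaves is~at~most $2^{\varepsilon n}$. The natural route is to~define a~potential $\mu(F) = \sum_{C \in F} w_{|C|}$ with strictly increasing weights $w_1 < \dots < w_k$, show that each branching step reduces $\mu$ by~at~least a~quantity $\Delta$ determined by the thresholds, and solve the resulting binary recurrence $T(\mu) \le 2 T(\mu - \Delta)$ to~get $2^{\mu/\Delta} \le 2^{\varepsilon n}$ leaves. The delicate point is~to~balance three competing demands at~once: a~large per-branch potential drop, a~linear residual clause count, and the target exponent~$\varepsilon n$. This forces the thresholds $a_i$ to grow very rapidly in $k$ and $1/\varepsilon$, making $\alpha(k, \varepsilon)$ correspondingly large but still a~finite function of its arguments. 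Once these inequalities are verified, the per-node cost of searching for a~sunflower is~polynomial in~the formula size, yielding the claimed total running time $n^{O(1)} \cdot 2^{\varepsilon n}$.
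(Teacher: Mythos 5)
This theorem is the Sparsification Lemma of Impagliazzo, Paturi, and Zane; the paper imports it as a black box and contains no proof of it, so the only benchmark is the original argument. Your plan correctly reconstructs that argument's skeleton: the sunflower branching with a core child (add the heart $H$ as a clause, which subsumes the petalled clauses) and an anti-core child (falsify $H$, shrinking each such clause to its petal), the observation that every assignment satisfies exactly one child so the disjunction of the leaves is equisatisfiable with $F$, the preservation of width $k$ and of the variable count, and the sunflower-free counting at the leaves (iterating the thresholds downward over heart sizes) that yields $m(F_i)\le\alpha(k,\varepsilon)\,n$. All of that is sound and matches the known proof.

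The gap sits in the one part of the theorem that is genuinely hard: bounding the number of leaves by $2^{\varepsilon n}$. The route you propose --- a potential $\mu(F)=\sum_{C\in F} w_{|C|}$ with increasing weights, a uniform per-branch drop $\Delta$ determined by $k$ and the thresholds, and the symmetric recurrence $T(\mu)\le 2T(\mu-\Delta)$ giving $2^{\mu_0/\Delta}$ leaves --- cannot work as stated. An input $k$-CNF on $n$ variables may contain $\Theta(n^k)$ clauses, so $\mu_0=\Omega(n^k)$, and $2^{\mu_0/\Delta}$ vastly exceeds $2^{\varepsilon n}$ for any $\Delta$ that is a function of $k$ and $\varepsilon$ alone; letting $\Delta$ (equivalently, the thresholds $a_i$) grow with $n$ is not an option, since the same thresholds must deliver a clause bound of the form $\alpha(k,\varepsilon)\, n$ at the leaves. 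The known proofs instead exploit the asymmetry of the two branches --- the anti-core child fixes $i\ge 1$ variables while the core child creates one short clause and destroys at least $a_i$ longer ones --- and bound, by downward induction on clause width with carefully nested thresholds, how many clauses of each width below $k$ can ever arise along a root-to-leaf path; the superlinear mass of width-$k$ clauses is only ever consumed, never created, and so does not enter the branching count. Since your proposal explicitly defers this verification (``once these inequalities are verified''), and that verification is where essentially all of the difficulty of the lemma lives, what you have is a correct high-level plan rather than a proof.
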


	\begin{corollary} \label{corollary:sparsification}
		There exists a~function $\beta \colon \mathbb{Z}_{\ge 3} \times \mathbb{R}_{>0} \to \mathbb{Z}_{>0}$ such that, if there exists $\varepsilon > 0$ for which $(\beta(k, \varepsilon), k)$-\SAT{} can be~solved in~time $O(2^{n / 2 + \varepsilon n})$ for~any $k \in \mathbb{Z}_{\ge 3}$, then $k$-\SAT{} can be~solved in~time $O(2^{n / 2 + 2 \varepsilon n})$.
	\end{corollary}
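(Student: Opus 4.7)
The plan is to derive the corollary as a direct consequence of the Sparsification Lemma (\Cref{theorem:sparsification}) by choosing the sparsity threshold $\beta$ to coincide with the function $\alpha$ furnished by that lemma. Concretely, I would set $\beta(k,\varepsilon) := \alpha(k,\varepsilon)$ and then show that, given any $k$-CNF formula $F$ over $n$ variables, one can decide its satisfiability within the claimed time budget by first sparsifying and then appealing to the hypothesized algorithm.

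The execution proceeds as follows. First, run the Sparsification Lemma on $F$ with parameters $k$ and $\varepsilon$: in time $O(n^{O(1)} 2^{\varepsilon n})$ this produces $t \le 2^{\varepsilon n}$ formulas $F_1, \ldots, F_t$, each a $k$-CNF on at most $n$ variables and at most $\alpha(k,\varepsilon)\, n = \beta(k,\varepsilon)\, n$ clauses, whose disjunction is equisatisfiable with $F$. Each $F_i$ is therefore a legitimate instance of $(\beta(k,\varepsilon), k)$-\SAT{}, so the hypothesized algorithm decides $F_i$ in time $O(2^{n/2 + \varepsilon n})$. Finally, output YES iff some $F_i$ is satisfiable. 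The total running time is
\[
O(n^{O(1)} 2^{\varepsilon n}) \;+\; 2^{\varepsilon n}\cdot O(2^{n/2 + \varepsilon n}) \;=\; O(2^{n/2 + 2\varepsilon n}),
\]
matching the claim.

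There is essentially no serious obstacle in this argument; the corollary is straightforward bookkeeping of exponents in the Sparsification Lemma. The only subtle point worth flagging is that the \emph{same} $\varepsilon$ must be used to invoke the Sparsification Lemma and to index the hypothesized algorithm, so that the $2^{\varepsilon n}$ factor coming from the number of sparsified instances combines with the $2^{n/2+\varepsilon n}$ per-instance running time into $2^{n/2+2\varepsilon n}$ overall. This is precisely the origin of the factor of $2$ in front of $\varepsilon n$ in the conclusion, and it dictates that $\beta$ must be allowed to depend on $\varepsilon$ (not only on $k$).
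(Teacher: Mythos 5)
Your proposal is correct and follows exactly the paper's own argument: set $\beta(k,\varepsilon)=\alpha(k,\varepsilon)$, sparsify with parameter $\varepsilon$, run the hypothesized algorithm on each of the at most $2^{\varepsilon n}$ sparse instances, and multiply the two exponential factors. Nothing further is needed.
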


	\begin{proof}
		Let $\beta(k, \varepsilon) = \alpha(k, \varepsilon)$ (see \Cref{theorem:sparsification} for a~definition of~$\alpha$).
		Given an~instance of~$k$-\SAT{}, we apply the Sparsification Lemma with parameter $\varepsilon$ to~get a~sequence of $t \le 2^{\varepsilon n}$ formulas $F_1, \dotsc, F_t$, each of which is a~$(\beta(k, \varepsilon), k)$-\SAT{} instance. By~the assumption, the satisfiability of~each $F_i$ can be~checked in~time $O(2^{n / 2 + \varepsilon n})$.
		Thus, one can check the satisfiability of $\lor_{i \in [t]} F_i$ in~time $O(t \cdot 2^{n / 2 + \varepsilon n}) = O(2^{n / 2 + 2 \varepsilon n})$.
	\end{proof}

    The \emph{Strong Exponential Time Hypothesis} (\SETH{}), introduced in~\cite{DBLP:journals/jcss/ImpagliazzoPZ01,DBLP:journals/jcss/ImpagliazzoP01}, asserts that, for any $\varepsilon > 0$, there is~$k$ such that $k$-\SAT{} cannot be solved in time~$O(2^{(1 - \varepsilon) n})$. The \emph{Nondeterministic \SETH{}} (\NSETH{}), introduced in~\cite{DBLP:conf/innovations/CarmosinoGIMPS16}, extends \SETH{} by~asserting that
    \SAT{} is~difficult even for co-nondeterministic algorithms:
    for any $\varepsilon > 0$, there is~$k$ such that $k$-\SAT{} cannot be~solved in \emph{co-nondeterministic} time~$O(2^{(1 - \varepsilon) n})$. Though both these statements are stronger than $\P{}\neq\NP{}$, they are known to~be~hard to~refute:
    as~proved by~Jahanjou, Miles and Viola~\cite{DBLP:journals/iandc/JahanjouMV18},
    if~\SETH{} is~false, then there exists a~Boolean function family
    in~$\E^{\NP}$ of~series-parallel circuit size~$\omega(n)$.
    \cite{DBLP:conf/innovations/CarmosinoGIMPS16}
    noted that it~suffices to~refute \NSETH{} to~get the same circuit lower bound.

    \begin{theorem}[\cite{DBLP:journals/iandc/JahanjouMV18,DBLP:conf/innovations/CarmosinoGIMPS16}]
    	\label{theorem:local}
    	If~\NSETH{} is~false, then there exists a~Boolean function family in $\E^{\NP}$
    	of~series-parallel circuit size $\omega(n)$.
    \end{theorem}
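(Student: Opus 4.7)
My plan is to prove the contrapositive via the Williams algorithmic method. Assume for contradiction that every $\E^{\NP}$ language admits series-parallel (SP) circuits of some fixed linear size~$cn$, while \NSETH{} fails, witnessed by an $\varepsilon > 0$ and, for every~$k$, a~co-nondeterministic $k$-\SAT{} algorithm in time~$O(2^{(1-\varepsilon)n})$. I will use both assumptions to decide some hard $\E^{\NP}$ language much faster than a~nondeterministic time hierarchy theorem (with~$\NP$~oracle) permits.

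The first ingredient is a~linear-size Tseitin reduction from SP circuits to 3-CNF: an SP circuit of size~$s$ becomes a~3-CNF with $O(s)$ variables and clauses, satisfiable iff the circuit is, using one fresh auxiliary variable and a~constant number of clauses per gate; crucially, the SP structure keeps this blow-up linear, as opposed to arbitrary circuits. Composed with the fast co-nondeterministic \SAT{} algorithm (after a~Sparsification Lemma preprocessing), this yields an~$O(2^{(1-\varepsilon)cn})$-time co-nondeterministic decider for SP-\problem{Circuit SAT} on size-$cn$ circuits. The main step then follows Williams: fix a~hard language $L\in\E^{\NP}$ from the hierarchy theorem, and build a~faster co-nondeterministic algorithm which, on input $x\in\{0,1\}^n$, (i)~guesses the purported size-$cn$ SP circuit~$C$ for~$L$ at length~$n$, (ii)~co-nondeterministically verifies correctness of~$C$ by reducing the statement ``for every~$y$, $C(y)=L(y)$'' to a~polynomial-size SP-\problem{Circuit SAT} instance, and (iii)~outputs~$C(x)$. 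Plugging in the SP-SAT decider, the whole computation runs in co-nondeterministic time $O(2^{(1-\varepsilon')n})$ for some $\varepsilon'>0$, contradicting the hierarchy for~$L$. Iterating with different paddings and constants~$c$ amplifies this from ``no given linear bound'' into a~genuine~$\omega(n)$ lower bound on an~$\E^{\NP}$-complete function family, which is the desired conclusion.

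The main obstacle, in my view, is step~(ii): expressing ``$C$ is consistent with the $\NP$-oracle computation defining~$L$'' as a~single SP circuit (hence CNF) whose total variable count stays linear in~$n$, even though the underlying $\E^{\NP}$ tableau has length~$2^n$. The resolution, carried out in~\cite{DBLP:journals/iandc/JahanjouMV18}, is to invoke the small-circuit hypothesis recursively to compress the tableau, and to replace explicit $\NP$-oracle queries by guessed witnesses so that the verification \SAT{} instance indeed has size $\mathrm{poly}(cn)$ and fits within the reach of the fast co-nondeterministic \SAT{} algorithm from the failure of~\NSETH{}. A~secondary technical point is matching the exponents carefully so that the compounded factor~$(1-\varepsilon)\cdot c$ leaves genuine savings over the $2^n$ time hierarchy bound for each~$c$ that one wishes to rule~out.
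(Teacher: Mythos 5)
There is a genuine gap, and it sits exactly where your exponent bookkeeping already shows strain. The Tseitin transformation is linear-size for \emph{arbitrary} circuits, not just series-parallel ones, so your claim that the SP structure is what keeps the CNF blow-up linear is backwards; more importantly, Tseitin produces a CNF on $O(cn)$ \emph{variables}, so the assumed $O(2^{(1-\varepsilon)N})$ co-nondeterministic $k$-\SAT{} algorithm runs in time $2^{(1-\varepsilon)cn}$ on it. That beats $2^n$ only when $c < 1/(1-\varepsilon)$, so your argument can only rule out SP circuits of size $cn$ for small constants $c$ and cannot yield the stated $\omega(n)$ bound, which requires a contradiction for \emph{every} fixed $c$. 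This is not a technicality one can patch by "matching exponents carefully": with Tseitin the compounded factor $(1-\varepsilon)c$ exceeds $1$ for large $c$, full stop.

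The reason the theorem is stated for series-parallel circuits (this is the content of the cited proofs in Jahanjou--Miles--Viola and Carmosino et al.) is Valiant's graph-theoretic lemma: a series-parallel circuit of size $cn$ on $n$ inputs can be rewritten, for any $\varepsilon'>0$, as an OR of $2^{\varepsilon' n}$ many $k$-CNFs over (essentially) the \emph{original} $n$ variables, with $k = k(c,\varepsilon')$ --- an analogue of the Sparsification Lemma at the circuit level. Satisfiability of the size-$cn$ SP circuit thus reduces to $2^{\varepsilon' n}$ instances of $k$-\SAT{} on $n$ variables each, and the failure of \NSETH{} then gives a co-nondeterministic SP-\problem{Circuit SAT} algorithm in time $2^{(1-\varepsilon+\varepsilon')n}$, which is genuine savings for every $c$. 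Your outline of the Williams-style step (guess the circuit, verify it against the $\E^{\NP}$ computation, contradict a nondeterministic hierarchy theorem) is the right skeleton and the difficulties you flag there are real, but without replacing Tseitin by Valiant's depth/edge-reduction for series-parallel graphs the quantitative core of the proof does not go through.
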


    \SETH{}-based conditional lower bounds are known for a~wide range
    of~problems and input parameters. One of~such problems is~\problem{Orthogonal Vectors} (\OV{}): given two sets $A,B \subseteq \{0,1\}^d$ of~size~$n$, check whether there exists $a \in A$ and $b \in B$ such that $a\cdot b = \sum_{i \in [d]}a_ib_i=0$.
    It~is straightforward to~see that \OV{} can be~solved in~time $O(n^2d)$. Williams~\cite{DBLP:journals/tcs/Williams05} proved that
    under \SETH{}, there is~no algorithm solving \OV{} in~time $O(n^{2-\varepsilon}d^{O(1)})$, for any~$\varepsilon>0$.
    This follows from the following reduction.

    \begin{theorem}[\cite{DBLP:journals/tcs/Williams05}]
    	\label{theorem:satov}
    	There exists an~algorithm that, given a~CNF formula~$F$ with $n$~variables and $m$~clauses, outputs two sets $A_F,B_F \subseteq \{0,1\}^m$ such that
    	$|A_F|=|B_F|=2^{n/2}$ and $F \in \SAT{}$ if~and only~if $(A_F,B_F) \in \OV{}$.
    	The running time of~the algorithm is~$O(mn \cdot 2^{n/2})$.
    \end{theorem}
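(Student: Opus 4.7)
The plan is to implement the standard split-and-list reduction. I would partition the $n$ variables of $F$ into two halves $X_1=\{x_1,\dots,x_{n/2}\}$ and $X_2=\{x_{n/2+1},\dots,x_n\}$, and enumerate all $2^{n/2}$ partial assignments to each half. For every partial assignment $\alpha \in \{0,1\}^{n/2}$ to $X_1$, I would build a vector $a_\alpha \in \{0,1\}^m$ whose $j$-th coordinate records whether clause $C_j$ is \emph{not yet} satisfied by the literals of $C_j$ that lie over $X_1$; concretely, $a_\alpha[j]=1$ iff every literal of $C_j$ whose underlying variable is in $X_1$ evaluates to $0$ under $\alpha$. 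The set $A_F$ is defined to consist of all such vectors $a_\alpha$, and $B_F$ is defined analogously from partial assignments $\beta$ to $X_2$.

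Next I would verify correctness. For a fixed pair $(\alpha,\beta)$, the combined assignment satisfies $F$ iff every clause $C_j$ is satisfied by at least one of the two halves; by construction this is equivalent to the statement that there is no coordinate $j$ with $a_\alpha[j] = b_\beta[j] = 1$, i.e., $a_\alpha \cdot b_\beta = 0$. Quantifying over pairs gives $F \in \SAT$ iff $(A_F, B_F) \in \OV$, which is the desired equivalence. The sizes $|A_F|=|B_F|=2^{n/2}$ follow immediately from the enumeration.

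For the running time, for each of the $2^{n/2}$ partial assignments I iterate over the $m$ clauses, and for each clause inspect its at most $n$ literals to compute a single coordinate, giving $O(mn)$ work per assignment and $O(mn \cdot 2^{n/2})$ in total, matching the stated bound.

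There is no substantive obstacle here: the construction is textbook and the correctness is a direct unrolling of the definition of clause satisfaction as an OR over literals, translated into the complementary condition that no clause is simultaneously unsatisfied by both halves. The only minor care points are (i) handling the case when $n$ is odd by splitting into $\lceil n/2 \rceil$ and $\lfloor n/2 \rfloor$ while still obtaining sets of size at most $2^{n/2}$ (padding if one wants exact equality), and (ii) being explicit that literals whose variable lies outside a given half are treated as $0$ in the definition of the indicator, so that a clause flagged by both $\alpha$ and $\beta$ is genuinely unsatisfied by the joint assignment.
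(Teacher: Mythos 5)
Your proposal is correct and is exactly the standard split-and-list reduction of Williams that the paper cites for \Cref{theorem:satov} without reproving it: partition the variables, record for each half-assignment which clauses remain unsatisfied, and observe that a satisfying pair corresponds to an orthogonal pair. The only cosmetic caveat (beyond the odd-$n$ padding you already mention) is that distinct half-assignments can yield identical vectors, so to get $|A_F|=|B_F|=2^{n/2}$ exactly one should treat $A_F,B_F$ as indexed lists/multisets rather than sets, which changes nothing in the argument.
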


	In~a similar manner, one can reduce a~CNF formula to the $t$-$\OV$ problem involving a greater number of sets.
	Consider the $t$-$\OV$ problem, where one is given $t$ sets $A^1, \ldots, A^t \subseteq \{0, 1\}^{d}$, each of size~$n$, and the objective is to determine whether there exist elements $a^1 \in A^1, \ldots, a^t \in A^t$ such that
	\[
	\sum_{i \in [d]} a^1_i \cdot \ldots \cdot a^t_i = 0.
	\]
	\begin{lemma} \label{lemma:sat_to_t_ov}
		There exists an algorithm which, given a~CNF formula~$F$ with~$n$ variables and~$m$ clauses, along with an integer~$t$, constructs $t$ sets $A^1_F, \ldots, A^t_F \subseteq \{0, 1\}^m$ such that $|A^1| = \ldots = |A^t| = 2^{n / t}$ and $F \in \SAT$ if and only if $(A^1_F, \ldots, A^t_F) \in t$-\OV{}.
	The running time of the algorithm is~$O(m n \cdot 2^{n / t})$.
	\end{lemma}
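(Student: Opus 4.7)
The plan is to generalize the construction from \Cref{theorem:satov}. First, I partition the $n$ variables of $F$ into $t$ disjoint groups $V_1,\dotsc,V_t$ of size $n/t$ each (assume $t$ divides $n$; otherwise pad with dummy variables). A satisfying assignment of $F$ is then the same as a tuple $(\alpha_1,\dotsc,\alpha_t)$ of partial assignments, one per group, such that for every clause $C_k$ at least one literal of $C_k$ over variables in some group $V_j$ is satisfied by $\alpha_j$.

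Next, for each group $j\in[t]$ and each partial assignment $\alpha\in\{0,1\}^{V_j}$, I define the $m$-dimensional vector $v^j_\alpha\in\{0,1\}^m$ by setting $v^j_\alpha[k]=0$ if $\alpha$ satisfies at least one literal of $C_k$ restricted to $V_j$, and $v^j_\alpha[k]=1$ otherwise. Then I let $A^j_F=\{v^j_\alpha \colon \alpha\in\{0,1\}^{V_j}\}$. By construction, $|A^j_F|\le 2^{n/t}$; to make the sizes exactly $2^{n/t}$, I can enumerate $\alpha$ with repetition instead of as a set.

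The key equivalence is then the following chain: $F\in\SAT$ iff there exist $(\alpha_1,\dotsc,\alpha_t)$ such that every clause $C_k$ is satisfied by at least one $\alpha_j$, iff there exist $(\alpha_1,\dotsc,\alpha_t)$ such that for every $k\in[m]$ at least one of $v^j_{\alpha_j}[k]$ equals $0$, iff $\prod_{j\in[t]} v^j_{\alpha_j}[k]=0$ for every $k$, iff $\sum_{k\in[m]}\prod_{j\in[t]} v^j_{\alpha_j}[k]=0$. The last condition is precisely the $t$-\OV{} condition, so $F\in\SAT$ iff $(A^1_F,\dotsc,A^t_F)\in t$-\OV.

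For running time, constructing a single vector $v^j_\alpha$ takes $O(mn)$ time by scanning each clause and testing each literal, and there are $t\cdot 2^{n/t}$ such vectors to build. Since $t\le n$, this is $O(mn\cdot 2^{n/t})$ up to a polynomial factor in $n$, matching the stated bound. There is no real obstacle here; the only thing to check carefully is the direction of the bit convention (satisfied $\to 0$, not $\to 1$), which flips naturally so that ``for every $k$, at least one $\alpha_j$ satisfies $C_k$'' becomes ``the coordinate-wise product is zero everywhere'', i.e.\ the $t$-\OV{} predicate.
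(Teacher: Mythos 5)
Your proposal is correct and follows essentially the same route as the paper, which simply generalizes Williams's split-and-list reduction by partitioning the variables into $t$ groups and encoding, for each partial assignment, the indicator vector of clauses it fails to satisfy. The only cosmetic point is the extra factor of $t$ in your running-time accounting, which you correctly note is absorbed since $t \le n$.
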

	The proof follows the approach presented in~\cite{DBLP:journals/tcs/Williams05}.
	An instance $(A^1, \ldots, A^t) \not\in t$-$\OV$ if and only if for all vectors $a^1 \in A^1, \ldots, a^t \in A^t$, the vectors share a common coordinate with value one.
	This can equivalently be formulated by stating that the union~$Q = A^1 \cup \ldots \cup A^t$ possesses the property that for every $a^1, \ldots, a^t \in Q$, the vectors share a common one.
	Consequently, \Cref{statement:check_q_t} states that the instance $(A^1, \ldots, A^t) \not\in t$-$\OV$ if and only if there exists a circuit~$C$, composed of $\THR_{l + 1}^{lt + 1}$ gates and variables, such that $C(\overline{x}) = 0$ for every $x \in A^1 \cup \ldots \cup A^t$.

	A~similar reduction allows to~solve \MAX{}-3-\SAT{}
	by~finding a~$4$-clique in a~$3$-uniform hypergraph.
	A~subset of~$l$~nodes in a~$k$-hypergraph is~called an~\emph{$l$-clique},
	if any $k$~of them form an~edge in~the graph.

	\begin{theorem}[\cite{williams2007algorithms,DBLP:conf/soda/LincolnWW18}]
		\label{theorem:max_3_sat_4_clique}
		There exists an~algorithm that, given a~$3$-CNF formula~$F$
		with $n$~variables and an~integer~$t$, outputs a~$4$-partite $3$-uniform hypergraph~$G$ with parts of~size $k = n^{O(1)} 2^{n/4}$ such that $G$~has a~$4$-clique if~and only~if it~is possible
		to~satisfy exactly $t$~clauses of~$F$.
	\end{theorem}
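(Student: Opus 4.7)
The plan is to partition the $n$ variables of $F$ into four disjoint groups $X_1,X_2,X_3,X_4$ of size $\lceil n/4\rceil$ each, so that every partial assignment to a single group has $2^{n/4}$ possibilities. Since each clause of a $3$-CNF contains at most three literals, the set $V_C\subseteq[4]$ of groups touched by a clause $C$ satisfies $|V_C|\le 3$; so I can pick, for every clause $C$, a canonical triple $T_C\in\binom{[4]}{3}$ with $V_C\subseteq T_C$, and thereby partition the clauses as $F=\bigsqcup_{T\in\binom{[4]}{3}}\mathcal{C}_T$. For a partial assignment $\alpha_T$ to the variables in the groups indexed by $T$, let $\phi_T(\alpha_T)$ be the number of clauses of $\mathcal{C}_T$ satisfied by $\alpha_T$; then a full assignment $\alpha=(\alpha_1,\alpha_2,\alpha_3,\alpha_4)$ satisfies exactly $\sum_{T}\phi_T(\alpha_T)$ clauses, with no clause counted twice.

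Next, to encode the global target $t$, I take as the vertex set of part $i$ of the hypergraph $G$ all pairs $v=(\alpha_i,\vec r)$, where $\alpha_i\in\{0,1\}^{|X_i|}$ and $\vec r=(r_T)_{T\in\binom{[4]}{3}}$ is a tuple of non-negative integers with $r_T\le|\mathcal{C}_T|$ and $\sum_T r_T=t$. Since $m\le\binom{n}{3}\cdot 8=O(n^3)$, there are at most $O(n^9)$ valid tuples, so each part has $n^{O(1)}\cdot 2^{n/4}$ vertices. I make $(v_i,v_j,v_k)$ with $v_p=(\alpha_p,\vec r^{(p)})$ a hyperedge of $G$ precisely when (1)~$\vec r^{(i)}=\vec r^{(j)}=\vec r^{(k)}$ and (2)~$\phi_{\{i,j,k\}}(\alpha_i,\alpha_j,\alpha_k)=r^{(i)}_{\{i,j,k\}}$. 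Both conditions are computable in polynomial time given the labels, so the whole construction runs in time polynomial in the output size $n^{O(1)}2^{n/4}$.

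For correctness, in any $4$-clique $(v_1,v_2,v_3,v_4)$ every pair of vertices lies in at least one common hyperedge, so condition~(1) propagates a single tuple $\vec r$ to all four vertices; then condition~(2) applied to each of the four hyperedges forces $\phi_T(\alpha_T)=r_T$ for every $T\in\binom{[4]}{3}$, and summing gives that $\alpha$ satisfies exactly $\sum_T r_T=t$ clauses. Conversely, any assignment $\alpha$ of $F$ that satisfies exactly $t$ clauses determines a valid tuple $r_T:=\phi_T(\alpha_T)$ whose entries sum to $t$, and the vertices $(\alpha_i,\vec r)$ then form a $4$-clique by construction.

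The main obstacle is expressing the \emph{global} additive equation $\sum_T\phi_T(\alpha_T)=t$ using only $3$-uniform (hence strictly local) edge constraints. The key idea that overcomes it is to guess the entire distribution $\vec r$ of counts inside each vertex label: this converts the single global sum into four independent local equalities, one per triple, each verifiable by a single hyperedge, at the cost of only a polynomial blow-up in the number of vertices per part. The canonical assignment $C\mapsto T_C$ that guarantees each clause is counted exactly once is what makes the four local equalities add up cleanly to the MAX-$3$-SAT target $t$.
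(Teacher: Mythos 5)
Your proof is correct and follows essentially the same route as the paper: split the variables into four groups, assign each clause to a triple of groups, guess how the target $t$ distributes over the four triples, and encode each local count as a hyperedge constraint in a $4$-partite $3$-uniform hypergraph with $n^{O(1)}2^{n/4}$ vertices per part. If anything, your version is slightly more explicit than the paper's ``superimposition of the graphs $G_{t_0,t_1,t_2,t_3}$'': by placing the tuple $\vec r$ inside the vertex labels and forcing tuple-consistency on every hyperedge, you make it clear that a $4$-clique cannot mix edges coming from different decompositions of~$t$.
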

	\begin{proof}
		To~construct the graph~$G$, partition
		the~variables of~$F$ into four groups $A_0, A_1, A_2, A_3$
		of~size $n/4$. Then, label each clause of~$F$ by some $i \in \{0,1,2,3\}$ such that the clause does not contain variables
		from~$A_i$. Then, for any $i \in \{0,1,2,3\}$,
		assigning variables from all parts except from~$A_i$,
		determines how many clauses labeled~$i$ are satisfied.
		Define tensors~$\mathcal T_0, \mathcal T_1, \mathcal T_2, \mathcal T_3 \in \mathbb{Z}_{\ge 0}^{2^\frac{n}{4} \times 2^\frac{n}{4} \times 2^\frac{n}{4}}$ as~follows: for $U_1, U_2, U_3 \in \{0,1\}^{\frac n4}$, let $\mathcal T_i[U_1, U_2, U_3]$ be~equal to~the number of~clauses with label~$i$ when
		the variables from groups
		$A_{(i+1) \bmod 4}$,
		$A_{(i+2) \bmod 4}$, and
		$A_{(i+3) \bmod 4}$
		are assigned as~in $U_1$, $U_2$, and $U_3$.
		Then, it~is possible to~satisfy $t$~clauses in~$F$ if~and
		only~if there exist $U_0, U_1, U_2, U_2 \in \{0,1\}^{\frac n4}$ such that
		\[
		\mathcal T_0[U_1,U_2,U_3] + \mathcal T_1[U_2,U_3,U_0] +\mathcal  T_2[U_3,U_0,U_1] +\mathcal  T_3[U_0,U_1,U_2] = t.
		\]
		As~$F$ contains at~most $O(n^3)$ clauses, $t=O(n^3)$.
		This makes it~possible to~enumerate in~polynomial time
		all values of~the four terms above. For an~integer~$q$ and $i \in \{0,1,2,3\}$, define a~tensor $\mathcal{A}_{i,q} \in \{0,1\}^{2^\frac{n}{4} \times 2^\frac{n}{4} \times 2^\frac{n}{4}}$~as
		\[
		\mathcal A_{i,q}[U_1,U_2,U_3] = [T_i[U_1,U_2,U_3] = q].
		\]
		Then, one can satisfy $t$~clauses in~$F$ if~and only~if
				\[
		\sum_{\substack{U_0, U_1, U_2, U_3 \in \{0,1\}^{\frac n4}\\t_0+t_1+t_2+t_3=t}} \mathcal A_{0,t_0}[U_1,U_2,U_3] \cdot \mathcal A_{1,t_1}[U_2,U_3,U_0] \cdot \mathcal A_{2,t_2}[U_3,U_0,U_1] \cdot \mathcal A_{3,t_3}[U_0,U_1,U_2] > 0.
		\]

		For fixed $t_0, t_1, t_2, t_3$, such that $t_0 + t_1 + t_2 + t_3 = t$,
        the tensors $\mathcal{A}_{0, t_0}, \mathcal{A}_{1, t_1}, \mathcal{A}_{2, t_2}, \mathcal{A}_{3, t_3}$ may be~viewed as a~description of~edges
        of~a~4-partite 3-uniform hypergraph~$G_{t_0, t_1, t_2, t_3}$.
		There is a~4-clique in $G_{t_0, t_1, t_2, t_3}$ if~and only~if one can satisfy
        $t_0$~clauses with~label~$0$, $t_1$~clauses with label~$1$, and so on.
		Let $G$~be a~superimposition of~all such graphs. Then, $G$~contains a~4-clique if and only~if one can satisfy exactly  $t$~clauses of~$F$.
	\end{proof}

	\subsection{Rigidity and Tensor Rank}
	\label{sec:matrices_and_tensors}
	For a~field $\mathbb{F}$, by $\mathbb{F}^{a \times b}$ we denote the set of all matrices of size $a \times b$ over~$\mathbb{ F }$.
	Similarly, by $\mathbb{F}^{a \times b \times c}$ we denote the set of all three-dimensional tensors of shape $a \times b \times c$ over~$\mathbb{F}$.
	For two tensors $A$~and~$B$ (of arbitrary shape), by $A \otimes B$ we denote a~tensor product of $A$~and~$B$.
	For a~matrix $M \in  \mathbb{F}^{a \times b}$, by  $|M|$ we denote the number of nonzero entries of~$M$.

	For~a~matrix $M \in \mathbb{F}^{a \times b}$, we say that it has $r$-rigidity~$s$ if~it is necessary to~change at~least $s$ entries of~$M$ to~reduce its rank to~$r$.
	That~is, for~each decomposition $M = R + S$ such~that $\rank(R) \le r$, it~holds that $|S| \ge s$.

	The rank of~a~three-dimensional tensor is a~natural extension
	of~the matrix rank.
	For a~tensor $\mathcal{A} \in \mathbb{F}^{n \times n \times n}$, we define its rank, $\rank(\mathcal{A})$, as the smallest integer~$r$ such that there exist~$r$ tuples of vectors $a_l, b_l, c_l \in \mathbb{F}^{n}$ for which
	\[
	  \mathcal{A} = \sum_{l \in [r]} a_l \otimes b_l \otimes c_l,
	\]
	or equivalently,
	\[
		\mathcal{A}[i, j, k] = \sum_{l \in [r]} a_l[i] b_l[j] c_l[k],
	\]
	for all $i, j, k \in [n]$.

	We denote by $\omega$ the smallest real number such that any two $n \times n$ matrices can be multiplied in time $O(n^{\omega + \varepsilon})$ for any $\varepsilon > 0$ using only field operations\footnote{The value of $\omega$ may depend on the field over which the calculations are performed~\cite{DBLP:journals/siamcomp/Schonhage81}.}.

	Consider the three-dimensional tensor $\mathcal{A}_n \in \mathbb{F}^{n^2 \times n^2 \times n^2}$:
	\[
		\mathcal{A}_n[(i, j), (i, k), (k, j)] = 1,
	\]
	for all $i, j, k \in [n]$, with all other entries being zero.
	Using an approach based on the work of~Strassen~\cite{strassen1969gaussian}, for any positive integer~$k$, if $\rank(\mathcal{A}_k) = q$, then one can construct an arithmetic circuit of size~$O(n^{\log_k(q)})$ to perform multiplication of two $n \times n$ matrices.
	Thus, $\omega$ satisfies the following equation:
	\[
	   \omega = \inf_{k \in \mathbb{Z}_{>0}} \log_k \rank(\mathcal{A}_k).
	\]
	In other words, for sufficiently large~$k$, we have that $\rank(\mathcal{A}_k) \geq k^{\omega}$.
	Specifically, if $n = k^2$, then $\mathcal{A}_k \in M_{n \times n \times n}$, and $\rank(\mathcal{A}_k) \geq n^{\omega / 2}$.
	Therefore, if $\omega > 2$, this yields superlinear lower bounds on arithmetic circuits and on the rank of the multiplication tensor.

	The best known upper bound on $\omega$ is $2.371339$~\cite{DBLP:conf/soda/AlmanDWXXZ25}.
	Further details
	on~the matrix multiplication tensor
	can be~found in~\cite{DBLP:journals/theoretics/AlmanW24,DBLP:journals/toc/Blaser13}.

    \section{Circuit Lower Bounds}
    \label{sec:circuit_lower_bounds}
	\subsection{Boolean Functions of High Monotone Circuit Size}
    In~this section, we~prove \Cref{theorem:monotone_conp}.

    \thmmonotone*

	Combining this with \Cref{corollary:monotone_nseth} and Theorem~\ref{theorem:local}, we~get
	win-win circuit lower bounds. Proving any of~these two
    circuit lower bounds is a~challenging open problem.

	\crlMonotone*

    For the proof of~\Cref{theorem:monotone_conp}, we~need two technical lemmas.
    Recall that the reduction from \SAT{} to \OV{} (see Theorem~\ref{theorem:satov}),
    given a~formula~$F$, produces two sets $A_F, B_F \subseteq \{0,1\}^{m(F)}$
    such that $|A_F|=|B_F|=2^{n(F)/2}$ and $F \in \SAT{}$ if~and
    only~if $(A_F, B_F) \in \OV{}$.

    \begin{lemma}
        \label{lemma:monotoneseparation}
        Let $F$~be a~CNF formula.
        Then, $F \not \in \SAT{}$ if~and only~if
        $(A_F, \overline{B_F})$ can be~separated
        by~a~monotone function.
    \end{lemma}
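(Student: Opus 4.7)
The plan is to unwind the orthogonal-vectors condition into a statement about the coordinate-wise domination order on $\{0,1\}^m$, and then invoke the standard fact that two disjoint subsets of the Boolean cube can be separated by a monotone function if and only if no element of the first dominates an element of the second.

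First I would rewrite the \OV{} condition in terms of $\overline{B_F}$. By \Cref{theorem:satov}, $F \in \SAT{}$ if and only if there exist $a \in A_F$ and $b \in B_F$ with $a \cdot b = 0$, i.e.\ no coordinate $i$ satisfies $a_i = b_i = 1$. Setting $b' := \overline{b} \in \overline{B_F}$, this becomes: for every $i$, $a_i = 1$ implies $b'_i = 1$, i.e.\ $a \le b'$ coordinate-wise. Hence
\[
F \notin \SAT{} \iff \text{for all } a \in A_F \text{ and } b' \in \overline{B_F},\ a \not\le b'.
\]

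Next I would prove the two directions of the equivalence separately. For the forward direction, assuming the condition on the right-hand side above, define the monotone function $f \colon \{0,1\}^m \to \{0,1\}$ whose $1$-set is the upward closure of $A_F$, that is $f(x) = 1$ iff there exists $a \in A_F$ with $a \le x$. This $f$ is monotone by construction, satisfies $f(a) = 1$ for every $a \in A_F$, and $f(b') = 0$ for every $b' \in \overline{B_F}$ precisely because no $a \in A_F$ lies below any such $b'$. For the converse, suppose some monotone $f$ separates $(A_F, \overline{B_F})$, but $F \in \SAT{}$. Then there are $a \in A_F$ and $b \in B_F$ with $a \cdot b = 0$, so $a \le \overline{b}$ by the translation above. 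Monotonicity gives $1 = f(a) \le f(\overline{b})$, contradicting $f(\overline{b}) = 0$ since $\overline{b} \in \overline{B_F}$.

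Honestly, there is no real obstacle here: the lemma is a straightforward translation, and the only thing to be careful about is the direction convention between $b$ and $\overline{b}$ (flipping zeros and ones turns the \OV{} disjointness condition into a domination condition), which is exactly why the reduction through $\overline{B_F}$ rather than $B_F$ is natural for producing monotone separators.
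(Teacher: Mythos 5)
Your proof is correct and follows essentially the same route as the paper: both translate the \OV{} condition into the domination order on $\{0,1\}^m$ and exhibit a canonical monotone separator, with the reverse direction being the same monotonicity contradiction. The only (immaterial) difference is the choice of witness: you take the upward closure of $A_F$, whereas the paper uses the dual function $f_F(x)=[\forall b \in \overline{B_F}\colon b \not\ge x]$, which it needs explicitly later when building the universal function in \Cref{theorem:monotone_conp}.
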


    For a~CNF formula~$F$, let $f_F \colon \{0,1\}^{m(F)} \to \{0,1\}$
    be~defined as~follows:
    \begin{equation}
        \label{eq:ff}
        f_F(x)=[\forall b \in \overline{B_F} \colon b \not \ge x].
    \end{equation}
    It~is immediate that $f$~is monotone and that $\overline{B_F} \subseteq f_F^{-1}(0)$.

    \begin{proof}[Proof of~Lemma~\ref{lemma:monotoneseparation}]
        Assume that $F \not \in \SAT{}$.
        We~show that the function~$f_F$ separates $(A_F, \overline{B_F})$.
        To~do this, it~suffices to~show that $A_F \subseteq f_F^{-1}(1)$.
        If~this is~not the case, then there is~$a \in A_F$ such that $f_F(a)=0$, that~is,
        there exists $b \in \overline{B_F}$ such that $b \ge a$.
        Hence, there is~no $i \in [m]$ such that $b_i=0$ and $a_i=1$.
        In~turn, this means that, for $\overline{b} \in B_F$, there is~no $i \in [m]$ such that $\overline{b}_i=1$ and $a_i=1$, meaning that $(A_F, B_F) \in \OV{}$, contradicting
        $F \not \in \SAT{}$.

        For the reverse direction, assume that for some monotone function $h \colon \{0,1\}^m \to \{0,1\}$, it~holds that $A_F \subseteq h^{-1}(1)$ and $\overline{B_F} \subseteq h^{-1}(0)$. By~the monotonicity of~$h$, for every $a \in A_F$ and every $b \in \overline{B_F}$, $b \not \ge a$. Hence, for every $a \in A_F$ and every $b \in \overline{B_F}$, there exists $i \in [m]$ such that $b_i=0$ and $a_i=1$. Switching from
        $\overline{B_F}$ to~$B_F$, we~get that, for every $a \in A_F$ and every $b \in B_F$, there exists $i \in [m]$ such that $b_i=1$ and $a_i=1$, meaning that $(A_F, B_F) \not \in \OV{}$ and $F \not \in \SAT{}$.
    \end{proof}

    By $\mathcal{F}_{n,k, \beta}$ denote the set of $k$-CNF formulas with
    $n$~variables and at~most $\beta n$ clauses.
    Any formula $F \in \mathcal{F}_{n,k,\beta}$ can be~encoded in~binary using at~most $\gamma n\log n$
    bits (for some $\gamma=\gamma(k, \beta)$): each variable is encoded using $\log n$
    bits, all other symbols (parentheses as~well~as negations, disjunctions, and conjunctions require a~constant number of~bits). Hence,
    \[|\mathcal{F}_{n,k,\beta}| \le 2^{\gamma n \log n}.\]
    Let us call $\mathcal{W}_{n, k}$ the set of all binary strings of length $n$ and weight $k$:
    \[\mathcal{W}_{n, k} = \{x \in \{0, 1\}^n \colon w(x) = k\}.\]
	\begin{lemma}
		\label{lemma:encoding_numbers}
		There exists an~injective encoding~$e \colon \{ 0, 1 \}^{\leq n} \to \mathcal{W}_{4n, 2n}$ such that computing and~inverting~$e$ takes time linear in~$n$.
	\end{lemma}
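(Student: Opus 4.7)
The plan is to build $e$ in two stages: first embed $\{0,1\}^{\leq n}$ injectively into $\{0,1\}^{2n}$ by a self-delimiting encoding, and then double the result with its bitwise complement so that the output has length $4n$ and weight exactly $2n$.

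For the first stage, define $f \colon \{0,1\}^{\leq n} \to \{0,1\}^{2n}$ by $f(x) = x \,\|\, 1 \,\|\, 0^{2n - |x| - 1}$; since $|x| \leq n \leq 2n - 1$ (assuming $n \geq 1$, which is the only interesting case), this is well-defined. The encoding $f$ is injective because $x$ can be recovered from $f(x)$ by locating the position of the last $1$ in $f(x)$ and returning the prefix that precedes it. Both $f$ and its inverse run in time linear in $n$.

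For the second stage, set
\[
e(x) \;=\; f(x) \,\|\, \overline{f(x)} \;\in\; \{0,1\}^{4n}.
\]
For any $y \in \{0,1\}^{2n}$, the concatenation $y \,\|\, \overline{y}$ has weight exactly $2n$ because every coordinate contributes exactly one $1$ (either in the first half or in the second), so $e(x) \in \mathcal{W}_{4n, 2n}$ as required. Injectivity of $e$ follows from injectivity of $f$: distinct inputs give distinct first halves. To invert $e$, split the $4n$-bit string into two halves $y_1, y_2$, check that $y_2 = \overline{y_1}$ (this witnesses membership in the image), and then apply $f^{-1}$ to $y_1$. Both directions are clearly linear time.

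There is no substantial technical obstacle here; the only thing to be careful about is handling boundary cases ($x = \varepsilon$ and $|x| = n$) in the definition of $f$, and verifying that every bit of $f(x)$ contributes exactly one $1$ to the weight of $e(x)$ so that the weight comes out to $2n$ regardless of the input. Both checks are immediate from the definitions above.
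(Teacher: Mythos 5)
Your proof is correct, and it follows the same two-part template as the paper's: a self-delimiting encoding to get injectivity, followed by a step that forces the weight to be exactly half. The details differ in both parts. For delimiting, the paper prepends a unary length header $1^{|x|}0$ before $x$, while you append a marker $1$ followed by zeros after $x$; both are standard and both invert in linear time (in your case because nothing after the marker is a $1$, so the last $1$ in $f(x)$ locates the boundary even when $x$ itself ends in ones). For balancing, the paper simply declares $e(x)$ to be ``any balanced string of length $4n$'' extending the header, leaving the padding implicit, whereas your complement-doubling $e(x) = f(x)\,\|\,\overline{f(x)}$ makes the weight $2n$ automatic by construction and makes membership in $\mathcal{W}_{4n,2n}$ (and even image-checking) immediate. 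Your version is the more explicit and self-contained of the two; the paper's is shorter but glosses over exactly the padding detail you handle cleanly. No gaps.
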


	\begin{proof}
		For $x \in \{0,1\}^{\le n}$, let $e(x)$ as~follows be~any balanced string of~length $4n$ that starts with $1^{|x|}0$ followed by~$x$.
		This~encoding is~injective, as~the~initial segment~$1^{|x|} 0$ allows us~to~uniquely identify~$x$ within~$e(x)$.
		Both~computing and~inverting~$e$ can~be~accomplished in~linear time with~respect to~$n$.
		Indeed, computing~$e(x)$ is~straightforward.
		Conversely, given~$e(x)$, one~can first extract the~size of~$x$ and~then decode~$x$.
	\end{proof}

	As~a~simple corollary, applying~\Cref{lemma:encoding_numbers} to~Boolean formulas, one~can obtain the~following result.
    \begin{lemma}
        \label{lemma:encoding}
        There exists a~parameter $l=O(n\log n)$ and an~injective encoding
        \[e \colon \mathcal{F}_{n,k,\beta} \to \mathcal W_{l, l/2} \]
		such that computing and inverting~$e$ takes time polynomial in~$n$.
    \end{lemma}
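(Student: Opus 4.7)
The plan is to simply compose the two encodings already at hand. First, observe from the paragraph preceding the lemma that every formula $F \in \mathcal{F}_{n,k,\beta}$ admits a binary encoding of length at most $N = \gamma n \log n$ (assigning $\log n$ bits per variable and a constant number of bits per connective, parenthesis, or negation symbol), and this encoding together with its inverse is computable in time polynomial in $n$. Call this intermediate map $e_1 \colon \mathcal{F}_{n,k,\beta} \to \{0,1\}^{\le N}$; it is injective by construction.

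Next, I would feed $e_1(F)$ into the encoding $e_2$ produced by Lemma~\ref{lemma:encoding_numbers}, applied with parameter $N$ instead of $n$. This gives an injective map $e_2 \colon \{0,1\}^{\le N} \to \mathcal{W}_{4N, 2N}$, again computable and invertible in time linear in $N$, hence polynomial in $n$. Setting $e = e_2 \circ e_1$ and $l = 4N = 4\gamma n \log n = O(n \log n)$, we get an injective map $e \colon \mathcal{F}_{n,k,\beta} \to \mathcal{W}_{l, l/2}$ whose image lies in strings of length $l$ and weight exactly $l/2 = 2N$, as required.

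Injectivity of $e$ follows from injectivity of each component. For the complexity claim, computing $e(F)$ consists of the polynomial-time syntactic encoding $e_1$ followed by the linear-time balancing $e_2$; inverting $e$ amounts to first recovering $e_1(F)$ from the balanced string (using the prefix $1^{|e_1(F)|}0$ to read off the length, as in the proof of Lemma~\ref{lemma:encoding_numbers}) and then parsing the binary encoding back into the formula, again all in polynomial time.

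There is no real obstacle here; the only thing worth being careful about is the exact value of $N$ so that the claimed bound $l = O(n \log n)$ comes out cleanly, and making sure the encoding $e_1$ is genuinely injective (which is immediate once one fixes a canonical order of clauses and literals, or uses any fixed parser-friendly syntax).
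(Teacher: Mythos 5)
Your proposal is correct and is essentially identical to the paper's own proof: both encode the formula in binary using at most $\gamma n\log n$ bits and then apply Lemma~\ref{lemma:encoding_numbers} to that bit string, yielding $l = O(n\log n)$. The extra care you take about injectivity of the syntactic encoding and about recovering the length from the $1^{|x|}0$ prefix is exactly what the paper relies on implicitly.
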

    \begin{proof}
        Given a~formula $F \in \mathcal{F}_{n, k, \beta}$, we~can encode~it in~binary using at~most $\gamma n \log n$~bits (as~stated above).
		Then, applying~\Cref{lemma:encoding_numbers}, we~obtain the~desired result.
    \end{proof}

	\begin{proof}[Proof of~Theorem~\ref{theorem:monotone_conp}]
        Assuming that \coNP{} has small monotone circuits, we~design
        a~fast co-nondeterministic algorithm for \SAT{}.
		Assume that all monotone functions over $N$~variables in~\coNP{} can be computed
		by~monotone circuits of size $2^{o(N/\log N)}$.

		Let $F$~be a~$k$-CNF over $n$~variables.
		Thanks to~Corollary~\ref{corollary:sparsification}, we~may assume that~$m(F) \leq \beta n$, where~$\beta = \beta(k, \frac{\varepsilon}{2})$.
		Then, solving~$\mathcal{F}_{n, k, \beta}$ in~co-nondeterministic time~$O(2^{n / 2 + \varepsilon' n})$ for~$\varepsilon' = \frac{\varepsilon}{2}$ will~be~sufficient for~a~contradiction.

        Below, we~define a~universal function~$f$ containing $f_F$, for all $F \in \mathcal{F}_{n,k, \beta}$,
        as~subfunctions
        (recall the definition~\eqref{eq:ff} of~$f_F$).
        Let $N=l+m$, where $l=O(n\log n)$ and $e$ is the function from Lemma~\ref{lemma:encoding} and $m=\beta n$ (hence, $N=O(n\log n)$).
        We define $f \colon \{0,1\}^N \to \{0,1\}$
        as~follows. Let $(c,x)$, where $c \in \{0,1\}^l$ and $x \in \{0,1\}^m$,
        be~an~input of~$f$. Then,
        \[
            f(c, x)=
            \begin{cases}
                1, & \text{if $w(c)>l/2$,}\\
                0, & \text{if $w(c)<l/2$,}\\
                1, & \text{if $w(c)=l/2$ and $e^{-1}(c)=\varnothing$,}\\
                0, & \text{if $w(c)=l/2$ and $e^{-1}(c) \in \SAT{}$,}\\
                f_F(x), & \text{if $w(c)=l/2$ and $F=e^{-1}(c) \not \in \SAT{}$.}
            \end{cases}
        \]
        We~now ensure three important properties of~$f$.
        \begin{enumerate}
            \item \emph{The function~$f$ is~monotone.} For the sake of~contradiction,
            assume that $(c,x) \ge (c',x')$, but $0=f(c,x)<f(c',x')=1$.
            Since $f(c',x')=1$, $w(c') \ge l/2$.
            If $w(c')>l/2$, then $f(c,x)=1$, since $w(c)\ge w(c')>l/2$.
            Hence, assume that~$w(c')=w(c)=l/2$.
            Since~$c\ge c'$,
            we~conclude that $c=c'$ (implying that~$e^{-1}(c) \neq \varnothing$).
            If~$e^{-1}(c) \in \SAT{}$, then~$f(c', x')=0$.
            Hence~$e^{-1}(c) \not \in \SAT{}$, thus~$f(c,x)=f_F(x)$ and~$f(c',x')=f_F(x')$,
            where $F=e^{-1}(c)=e^{-1}(c')$, and $f_F$ is~clearly monotone.

            \item \emph{The function~$f$ is~explicit: $f \in \coNP$.} Assume that $f(c,x)=0$.
            This can happen for one of~the three following reasons.
            \begin{itemize}
                \item $w(c)<l/2$. This is~easily verifiable.
                \item $w(c)=l/2$ and $e^{-1}(c) \in \SAT{}$. To~verify this,
                one computes $F=e^{-1}(c)$ (in~time $O(\poly(n))$, due to~Lemma~\ref{lemma:encoding}), guesses its satisfying
                assignment and verifies~it.
                \item $w(c)=l/2$, $F=e^{-1}(c) \not \in \SAT{}$, and $f_F(x)=0$. Recall that if $F \in \SAT{}$, then we can
                guess and verify its satisfying assignment, so~in this case
                there~is no~need to~verify the unsatisfiability of~$F$.
                Since $f_F(x)=0$, there exists $b \in \overline{B_F}$ such that $b \ge x$.
                To~verify this, one computes $F=e^{-1}(c)$, guesses the corresponding assignment to~the second
                half of~the variables of~$F$, ensures that it~produces the vector $\overline{b}$, and verifies that $b \ge x$.
            \end{itemize}
            \item
            Assume that $f$~can be~computed by a~monotone circuit~$C$ of~size $2^{o(N/\log N)}$.
			We~show that, then, for~any~$k \in \mathbb{Z}_{\geq 3}$,
			$k$-\SAT{} can~be~solved in~input-oblivious co-nondeterministic time~$O(2^{n / 2 + \varepsilon' n})$.
            Since $N=O(n\log n)$, then $\operatorname{size}(C)=O(2^{\varepsilon' n})$.

            Let $F \in \mathcal{F}_{n,k,\beta}$ be an~unsatisfiable formula with $n$~variables
            and no~more than $\beta n$ clauses. The following algorithm verifies its unsatisfiability in~input-oblivious nondeterministic time $O(2^{n / 2 + \varepsilon' n})$.
            \begin{quote}
                \begin{enumerate}
                    \item In~time $O(n^22^{n/2})$, generate the sets $A_F$~and~$B_F$.
                    \item Guess a~monotone circuit~$C$.
                    This can be~done in~time $O(\operatorname{size}(C))=O(2^{\varepsilon' n})$.
                    \item Compute $c=e(F)$ and substitute the first $l$~variable of~$C$
                    by~$c$. Call the resulting circuit~$C_F$.
                    \item Verify that $C_F$ separates $(A_F, \overline{B_F})$. To~do this, check that $C_F(a)=1$, for every $a \in A_F$. Then, check that
                    $C_F(b)=0$, for every $b \in \overline{B_F}$. The running time
                    of~this step~is
                    \[O((|A_F|+|\overline{B_F}|) \cdot \operatorname{size}(C_F))=O(2^{n/2} \cdot 2^{\varepsilon' n})=O(2^{n / 2 + \varepsilon' n}).\]
                    \item If $C_F$ is~monotone and separates $(A_F, \overline{B_F})$, then
                    we~are certain that $F \not \in \SAT{}$, thanks to~Lemma~\ref{lemma:monotoneseparation}.
                \end{enumerate}
            \end{quote}
        \end{enumerate}
    \end{proof}

	\subsection{Functions of High Threshold Size}
	In~this section, we aim to derive circuit lower bounds under the assumption of~$\NETH$.
	It is important to observe that the preceding reduction is insufficient for solving $3$-\SAT{} in time $2^{o(n)}$, as the reduction to the $\OV$ problem operates in time~$O(2^{n/2})$.
	However, by reducing $3$-\SAT{} to~$t$-$\OV$ for some $t = \omega(1)$, the reduction can be executed in time~$2^{o(n)}$.
	\thmthreshold*
	\begin{proof}
	First, we~demonstrate how small~$\THR_{l + 1}^{lt + 1}$ circuits can be~used to~verify whether a~given formula is~in $\UNSAT$.
	Subsequently, we~show how to construct an~explicit function based on this observation.

	Given an arbitrary $\varepsilon > 0$ and a $3$-\SAT{} instance~$F$, apply~\Cref{corollary:sparsification} to ensure that the number of clauses~$m(F)$ satisfies $m(F) = \beta n$, where~$\beta = \beta(3, \varepsilon)$.
	Then, add artificial variables to make $n$ a multiple of $t$.
	Next, apply~\Cref{lemma:sat_to_t_ov} to construct the sets $A^1, \ldots, A^t$ such that $(A^1, \ldots, A^t) \not\in t$-$\OV$ if and only if $F \in \UNSAT$.

	Let $f_F  \colon \{ 0, 1 \}^{\beta n} \to \{ 0, 1 \}$ be a function such that $f_{F}^{-1}(0) = \cup_{i = 1}^{t} \overline{A^i}$.
	Then, $f_{F} \in Q^{n}_{t}$ if and only if $(A^{1}, \ldots, A^{t}) \not \in t$-$\OV$.
	Subsequently, nondeterministically guess a circuit~$C$ of size~$2^{o(n)}$, composed only of $\THR_{l + 1}^{lt + 1}$ gates for arbitrary~$l \ge 1$.
	Then verify that $C(\overline{x}) = 0$ for every $x \in \cup_{i = 1}^{t} \overline{A^i}$.
	The total verification time is~$O(t \cdot \size(C) \cdot 2^{n/t}) = 2^{o(n)}$.
	\Cref{statement:check_q_t} guarantees that if $F \in \UNSAT$, then such a circuit~$C$ must exist.
	Therefore, if no such circuit~$C$ of size~$2^{o(n)}$ is found, it must be the case that its size is~$2^{\Omega(n)}$ or $F \in \SAT$.

	In~order to construct an explicit function from $\coNP_{/\poly}$, consider the sequence of 3-CNF formulas $F_1, F_2, \ldots \in \UNSAT$, such that the $i$-th formula contains exactly $i$ clauses.
	Moreover, for each $i$, we select a formula $F_i$ such that the corresponding function $f_{F_i}$ has the maximal size of $\THR_{l + 1}^{lt + 1}$ circuits among all functions $f_{F'}$ induced by 3-CNF formulas $F'$ with $i$ clauses.
	Then, this sequence of functions achieves a $2^{\Omega(n)}$ lower bound against $\THR_{l + 1}^{lt + 1}$ circuits under $\NETH$.
	Now, we construct an algorithm from $\coNP_{/\poly}$ which, on input $x$, outputs $f_{|x|}(x)$.
	The advice for our algorithm on inputs of length $i$ is the formula $F_i$.

	To compute $f_{F_{i}}(x)$, where $i = |x|$, we nondeterministically guess an index $k \in [t]$ and an assignment to the variables with indices from $1 + (k - 1) \cdot \frac{n(F_i)}{t}$ to $k \cdot \frac{n(F_i)}{t}$, and from this we obtain the vector of satisfying clauses $y \in A^{k}$ for the formula $F_{i}$.
	If $\bar{y} \ge x$, we output $0$; otherwise, we output $1$.
	It is evident that $f_{F_{i}}(x) = 0$ if and only if such an index $i$ and assignment exist.
	\end{proof}

	\section{Matrices of High Rigidity}
	\label{sec:rigid_matrices}
	In~this section, we~show that low rigidity matrices can be~utilized to~solve \MAX{}-3-\SAT{} more efficiently.
	Thus, assuming that \MAX{}-3-\SAT{} cannot be~solved in~co-nondeterministic time $O(2^{(1 - \varepsilon) n})$, for any~$\varepsilon > 0$,
	we~obtain a~generator of~high rigidity matrices. Throughout this section,
    rigidity decompositions are considered over a~finite field~$\mathbb{F}$.

	\todo[inline]{ reviewer: I am less convinced by Theorem 4. By standard arguments from pseudo-randomness, it follows that even $2^{\varepsilon n}$ non-deterministic circuit size lower bounds for E imply constructions of rigid matrices stronger than the conclusion in Theorem 4. It's true that the assumption in this PRG-based construction is non-uniform, but it is for E, while the assumption in Theorem 4 is for Max-k-SAT. Moreover, as the authors themselves point out, there are no known lower bound consequences of non-trivial non-deterministic algorithms for Max-k-SAT, so there is no win-win conclusion to be had here, at least for now.
	}

	\thmHighRigidity*

    \begin{proof}
    	Take a~$3$-CNF formula over $n$~variables and an~integer~$t$ and transform~it, using \Cref{theorem:max_3_sat_4_clique}, into a~$4$-partite $3$-uniform hypergraph with parts $H_0, H_1, H_2, H_3$
    	of~size $k = n^{O(1)} 2^{\frac{n}{4}}$. Recall that $G$~contains a~$4$-clique if~and only~if one can satisfy $t$~clauses of~$F$.

        Let~$\mathcal A_0, \mathcal A_1, \mathcal A_2, \mathcal A_3 \in \{0, 1\}^{k \times k \times k}$ be three-dimensional tensors encoding the edges of~$G$. $\mathcal A_i$ is~responsible for storing edges spanning nodes from all parts except for~$H_i$: for example,
        $G$~has an~edge~$(u_1, u_2, u_3)$,
        where $u_1 \in H_1$, $u_2 \in H_2$, and $u_3 \in H_3$,
		if and only if~$\mathcal A_0[u_1, u_2, u_3] = 1$.
		Let
        \[
        R=\sum_{j_0, j_1, j_2, j_3 \in [k]} \mathcal  A_{0}[j_1,j_2,j_3] \cdot \mathcal A_{1}[j_2,j_3,j_0] \cdot \mathcal A_{2}[j_3,j_0,j_1] \cdot \mathcal A_{3}[j_0,j_1,j_2].
        \]
        Then, $G$~has a~$4$-clique if~and only~if $R>0$.
        Also, for $j_0, j_1 \in [k]$, let
        \[
            R_{j_0, j_1} = \sum_{j_2, j_3 \in [k]}  \mathcal A_{0}[j_1,j_2,j_3] \cdot \mathcal A_{1}[j_2,j_3,j_0] \cdot \mathcal A_{2}[j_3,j_0,j_1] \cdot \mathcal A_{3}[j_0,j_1,j_2].
        \]
        Thus,
        \[
			R = \sum_{j_0, j_1 \in [k]} R_{j_0, j_1}.
        \]
        Now, for fixed $j_0, j_1 \in [k]$, define vectors $v, u \in \{0,1\}^{k}$ as~follows: $v[i] = \mathcal A_3[j_0, j_1, i]$ and~$u[l] = \mathcal A_2[l, j_0, j_1]$.
        Also, define two~matrices $M, L \in \{0, 1\}^{k \times k}$: $M[i, l] = \mathcal A_0[j_1, i, l]$ and~$L[i, l] = \mathcal A_1[i, l, j_0]$.
        Hence,
        \[
			R_{j_0, j_1} = \sum_{j_2, j_3 \in  [k]} M[j_2, j_3] \cdot L[j_2, j_3] \cdot u[j_3] \cdot v[j_2].
        \]

        Now, assume that~$M$ and~$L$ have $r$-rigidity~$s$, that~is,
        \begin{align*}
            M &= R_M + S, \\
            L &= R_L + T,
        \end{align*}
        where $R_M, R_L, S, T \in  \mathbb{F}^{k \times k}, \; \rank(R_M), \rank(R_L) \leq r$ and~$|S|, |T| \leq s$.

        Since $\rank(R_M) \leq r, \ \rank(R_L) \leq r$, it follows that there exist vectors: $a_{i, M}, a_{i, L} \in \mathbb{F}^{k}$ and~$b_{i, M}, b_{i, L} \in \mathbb{F}^{k}$ such that
        \begin{align*}
			M &= \left(\sum_{i \in  [r]} a_{i, M} \cdot b^T_{i, M}\right) + S, \\
			L &= \left(\sum_{i \in [r]} a_{i, L} \cdot b^T_{i, L}\right) + T.
        \end{align*}
        We guess these vectors for the matrices~$M$ and~$L$.
        Additionally, we guess only nonzero entries of~$S$ and~$T$, since they are~sparse.

        For fixed~$j_0$, the~time complexity to~guess and~verify the~decomposition of~$L$ is~$O(r k^2 + s + k^2)$, which includes the~time to~multiply the~vectors, sum~them, add~$S$, and~verify that the~result equals~$L$.
        The same applies when fixing~$j_1$.
        Thus, the overall time complexity is~$O(r k^3 + ks + k^3) = O(r k^3 + ks)$, for all $j_0, j_1$.

        Hence,
        \begin{align*}
			R_{j_0, j_1} &= \sum_{j_2 \in [k]} \sum_{j_3 \in  [k]} \left(S + \sum_{i \in [r]} a_{i, M} \cdot b^T_{i, M}\right)[j_2, j_3] \cdot \left(T + \sum_{l \in [r]} a_{l, L} \cdot b^T_{l, L}\right)[j_2, j_3] \cdot u[j_3] \cdot v[j_2] = \\
						 &= \sum_{j_2 \in [k]} \sum_{j_3 \in [k]} \left(\sum_{i \in [r]} a_{i, M} \cdot b^T_{i, M}\right)[j_2, j_3] \cdot \left(\sum_{l \in [r]} a_{l, L} \cdot b^T_{l, L}\right)[j_2, j_3] \cdot u[j_3] \cdot v[j_2] + \\
						 &+ \underbrace{\sum_{j_2 \in [k]} \sum_{j_3 \in [k]} u[j_3] \cdot v[j_2] \cdot \left(S[j_2, j_3] \cdot T[j_2, j_3] + S[j_2, j_3] \cdot R_L[j_2, j_3] + R_M[j_2, j_3] \cdot T[j_2, j_3]\right)}_{R_{j_0, j_1}'}.
        \end{align*}
        Since $S$ and~$T$ are sparse, we can compute the $R_{j_0, j_1}'$ in time~$O(s)$, for fixed~$j_0, j_1$.
        Now, it~remains to~evaluate
        \begin{align*}
			R_{j_0, j_1} - R_{j_0, j_1}'=&\sum_{j_2 \in [k]} \sum_{j_3 \in [k]} \left(\sum_{i \in [r]} a_{i, M} \cdot b^T_{i, M}\right)[j_2, j_3] \cdot \left(\sum_{l \in [r]} a_{l, L} \cdot b^T_{l, L}\right)[j_2, j_3] \cdot u[j_3] \cdot v[j_2] = \\
										 &\sum_{j_2 \in [k]} \sum_{j_3 \in [k]} \left(\sum_{i \in [r]} a_{i, M}[j_2] \cdot b_{i, M}[j_3]\right) \cdot \left(\sum_{l \in [r]} a_{l, L}[j_2] \cdot b_{l, L}[j_3]\right)\cdot u[j_3] \cdot v[j_2] = \\
										 &\sum_{i \in [r]} \sum_{l \in [r]} \sum_{j_2 \in [k]} \sum_{j_3 \in [k]} a_{i, M}[j_2] \cdot b_{i, M}[j_3] \cdot  a_{l, L}[j_2] \cdot b_{l, L}[j_3] \cdot u[j_3] \cdot v[j_2] = \\
										 &\sum_{i \in [r]} \sum_{l \in [r]} \left(\sum_{j_2 \in [k]} a_{i, M}[j_2] \cdot a_{l, L}[j_2] \cdot v[j_2]\right) \left( \sum_{j_3 \in [k]} b_{i, M}[j_3] \cdot b_{l, L}[j_3] \cdot u[j_3]  \right).
        \end{align*}
        This sum can be computed in time~$O(r^2 k)$.
        Thus, the~total running~time (for all~$j_0,j_1$) is~$O(r k^3 + ks + k^2 s + r^2 k^3) = O(k^2 s + r^2 k^3)$.
        If~$r = k^{\frac{1}{2} - \delta}$ and~$s = k^{2 - \delta}$ for some~$\delta > 0$, the~running time becomes~$O(k^{4 - \delta})$.

		We~can construct a generator~$g$ that takes the~following inputs: the encoding~of a~$3$-CNF formula, an~integer~$t$, $j_0, j_1$, and a~$0/1$ bit. If this bit is~$0$, we~output~$M$; otherwise, we~output~$L$.
		If~the input to~the generator is not valid, it outputs an~empty family.
        Therefore, if~\MAX{}-$3$-\SAT{} cannot be~solved in co-nondeterministic time~$O(2^{(1 - \varepsilon)n})$ for any~$\varepsilon > 0$, then for~infinitely many $n$, the~generator outputs at~least one~matrix with $k^{\frac{1}{2} - \delta}$-rigidity~$k^{2 - \delta}$, for any~$\delta > 0$, and~$k = n^{O(1)} 2^{\frac{n}{4}}$.
        The generator takes~$n^{O(1)} = \log^{O(1)}(k)$ input bits, works in time polynomial in~$k$, and outputs a~matrix of dimensions~$k \times k$.
    \end{proof}

	As~a~simple corollary, one might weaken the assumption and obtain weaker matrix rigidity, while still improving the known explicit construction by~\cite{DBLP:journals/ipl/ShokrollahiSS97}.
	Recall that~\cite{DBLP:journals/ipl/ShokrollahiSS97} constructed matrices with $r$-rigidity $\frac{n^2}{r} \log (n / r)$.

	\begin{corollary} \label{corollary:weaken_rigidity}
		If \MAX{}-3-\SAT{} cannot be~solved in~co-nondeterministic time $O(2^{0.92 n})$, then there exists a~generator $g \colon \{0, 1\}^{\log^{O(1)} k} \to \mathbb{F}^{k \times k}$ computable in~time polynomial in~$k$ such that, for infinitely many~$k$, there exists a~seed~$s$ for which $g(s)$ has $k^{0.34}$-rigidity~$k^{1.68}$.
	\end{corollary}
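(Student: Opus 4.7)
The plan is to observe that Corollary \ref{corollary:weaken_rigidity} is essentially a direct instantiation of the argument in the proof of \Cref{thm:high_rigidity} with a specific choice of parameters $(r,s)$, using a weaker time bound for \MAX{}-3-\SAT{}. Recall that the reduction from \MAX{}-3-\SAT{} via \Cref{theorem:max_3_sat_4_clique} produces a $4$-partite $3$-uniform hypergraph whose parts have size $k = n^{O(1)} 2^{n/4}$, and that the co-nondeterministic procedure obtained by guessing low-rank plus sparse decompositions of the matrices $M, L$ for every pair $(j_0, j_1)$ runs in total time $O(k^2 s + r^2 k^3)$ whenever each such $M$ and $L$ admits $r$-rigidity at most $s$.

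First I would set $r = k^{0.34}$ and $s = k^{1.68}$, and verify that $k^2 s = k^{3.68}$ and $r^2 k^3 = k^{0.68+3} = k^{3.68}$, so that the overall running time of the algorithm becomes $O(k^{3.68})$. Since $k = n^{O(1)} 2^{n/4}$, this equals $n^{O(1)} \cdot 2^{0.92 n} = O(2^{0.92 n})$ after absorbing the polynomial factors in $n$ into the exponent (they are polynomial in $\log k$).

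Next I would set up the same generator $g$ as in the proof of \Cref{thm:high_rigidity}: on input the encoding of a $3$-CNF formula $F$, an integer $t$, indices $j_0, j_1 \in [k]$, and a choice bit, output the corresponding slice $M$ or $L$ of the tensors $\mathcal{A}_0, \mathcal{A}_1$; otherwise output an empty family. The seed length is $n^{O(1)} = \log^{O(1)} k$ and the generator runs in time polynomial in $k$. By contrapositive, if for all but finitely many $k$ every matrix in the image of $g$ had $k^{0.34}$-rigidity strictly less than $k^{1.68}$, then the co-nondeterministic algorithm above would decide \MAX{}-3-\SAT{} in time $O(2^{0.92 n})$, contradicting the hypothesis.

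There is no real obstacle here beyond bookkeeping: the only thing to watch is that the polynomial-in-$n$ factors coming from $k = n^{O(1)} 2^{n/4}$ and from the sparsification/reduction steps stay inside the $O(2^{0.92 n})$ envelope, which they do because $0.92 > 4 \cdot 0.23 = 0.92$ leaves no slack at the nominal exponent but any polynomial overhead in $n$ is absorbed into the Landau constant once we pick, for example, $r = k^{0.34}$ and $s = k^{1.68}$ with the understanding that the time bound is $n^{O(1)} \cdot 2^{0.92 n}$. If one wanted strict $O(2^{0.92 n})$ without such polynomial slack, it would suffice to pick $r = k^{0.34-\eta}$ and $s = k^{1.68-\eta}$ for an arbitrarily small $\eta > 0$ and then observe that the stated parameters $k^{0.34}$ and $k^{1.68}$ only make the conclusion weaker.
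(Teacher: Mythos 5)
Your proposal is correct and follows essentially the same route as the paper: instantiate the generator and the $O(k^2 s + r^2 k^3)$ running-time bound from the proof of Theorem~\ref{thm:high_rigidity} with $r = k^{0.34}$, $s = k^{1.68}$, giving $O(k^{3.68}) = O(2^{0.92n})$ up to the polynomial factors you note. Your remark about absorbing the $n^{O(1)}$ overhead (or perturbing the exponents by an arbitrarily small $\eta$) is a slightly more careful treatment of a point the paper glosses over, but it is the same argument.
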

	\begin{proof}
		The generator is the one constructed in~the proof of~\Cref{thm:high_rigidity}.
		Substituting $r = k^{0.34}$ and $s = k^{1.68}$ into the running time $O(k^2 s + r^2 k^3)$ yields the running time $O(k^{3.68}) = O(2^{0.92n})$, thereby completing the proof.
		Note that $r \cdot s = k^{2.02}$, so further weakening the assumption will result in worse rigidity than that achieved in~\cite{DBLP:journals/ipl/ShokrollahiSS97}.
	\end{proof}

	\section{Tensors of~High Rank} \label{sec:high_rank_tensors}
    In~this section, we~show how to~generate high rank tensors under an~assumption
    that \MAX{}-3-\SAT{} is~hard.
    Throughout this section, we~assume that $\mathbb{F}$ is a~finite field over which
    rigidity and tensor decompositions are considered.
    We~start by~proving two auxiliary lemmas. The first one
    shows how rectangular matrix multiplication can be reduced to square matrix multiplication.
    \begin{lemma} \label{lemma:matrix_multiplication}
        For $a, b \ge n$, the~product of two matrices $A \in \mathbb F^{a \times n}$ and $B \in \mathbb{F}^{n \times b}$ can be~computed in~time~$O(abn^{\omega - 2})$.
    \end{lemma}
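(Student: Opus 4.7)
The plan is to reduce the rectangular multiplication to a batch of square $n \times n$ multiplications via block decomposition, then invoke the definition of $\omega$. First, I would pad $a$ and $b$ up to the next multiples of $n$; since we only increase each dimension by at most a factor of two, this changes the final bound by at most a constant. So without loss of generality, assume $n \mid a$ and $n \mid b$.

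Next, I would partition $A \in \mathbb{F}^{a \times n}$ into $a/n$ horizontal blocks $A_1, \ldots, A_{a/n}$, each of shape $n \times n$, and partition $B \in \mathbb{F}^{n \times b}$ into $b/n$ vertical blocks $B_1, \ldots, B_{b/n}$, each of shape $n \times n$. Writing $C = AB$ in matching block form, the $(i,j)$-th block of $C$ is exactly the product $A_i B_j$ of two $n \times n$ matrices. The total computation therefore reduces to performing $(a/n) \cdot (b/n) = ab/n^2$ independent square matrix multiplications of size $n \times n$, plus a linear-in-output overhead for writing the result.

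Each of these square products can be computed in time $O(n^{\omega})$ (absorbing the $n^{\varepsilon}$ slack from the definition of $\omega$ into the big-O, as is customary in such statements). Multiplying out, the total running time is
\[
    O\!\left(\frac{a b}{n^2} \cdot n^{\omega}\right) = O(ab\, n^{\omega - 2}),
\]
which is the claimed bound.

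The argument is entirely standard; there is no real obstacle. The only things to check carefully are that padding does not inflate the bound (it doesn't, since $a, b \ge n$) and that combining the block outputs into the final $a \times b$ matrix takes only $O(ab)$ additional time, which is dominated by $O(ab\, n^{\omega-2})$ since $\omega \ge 2$.
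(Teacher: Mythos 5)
Your proposal is correct and follows essentially the same argument as the paper: partition $A$ into $a/n$ horizontal $n \times n$ blocks and $B$ into $b/n$ vertical $n \times n$ blocks, compute the $ab/n^2$ square products $A_i B_j$ in time $O(n^{\omega})$ each, and multiply out to get $O(ab\,n^{\omega-2})$. Your additional remarks on padding and on absorbing the $n^{\varepsilon}$ slack are sensible bookkeeping that the paper leaves implicit.
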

    \begin{proof}
        Partition $A$~and~$B$ into $n \times n$-matrices $A_1, \dotsc, A_{a/n}$ and $B_1, \dotsc, B_{b/n}$.
        Then, to~compute $A \cdot B$, it~suffices to~compute $A_i \cdot B_j$,
        for all $i \in [a/n]$ and $j \in [b/n]$. The resulting running time is
        \[O\left( \frac{a}{n}\cdot \frac{b}{n}\cdot n^{\omega}\right)=O(abn^{\omega-2}).\]
    \end{proof}
	The next lemma shows how one can compute the value of a~specific
    function for all inputs using fast matrix multiplication algorithms.
    \begin{lemma} \label{lemma:evaluate_sum}
        Let $q \ge k$ and $A, B, C \in \mathbb{F}^{q \times k}$. Let also
        $f \colon [k]^3 \to \mathbb{F}$ be defined as~follows:
        \[
			f(i, j, m) = \sum_{l \in [q]} A[l, i] B[l, j] C[l, m].
        \]
        One can compute $f(i,j,m)$, for all $(i, j, m) \in [k]^3$ simultaneously,
        in~time $O(q k^{\omega})$.
    \end{lemma}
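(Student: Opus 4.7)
The plan is to fix the third index $m$ and observe that the slice $f(\cdot,\cdot,m)$ is a single rectangular matrix product, then iterate over the $k$ values of $m \in [k]$.

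First, for each $m \in [k]$ I would introduce the diagonal matrix $D_m \in \mathbb{F}^{q \times q}$ with $D_m[l,l] = C[l,m]$ and rewrite
\[
    f(i,j,m) \;=\; \sum_{l \in [q]} A[l,i]\,C[l,m]\,B[l,j] \;=\; \bigl(A^T D_m B\bigr)[i,j],
\]
so that producing the entire $k \times k$ slice $f(\cdot,\cdot,m)$ reduces to a single product of the $k \times q$ matrix $A^T$ by the $q \times k$ matrix $D_m B$. Forming $D_m B$ costs only $O(qk)$ (scale each row of $B$ by $C[l,m]$), so the bottleneck is the rectangular product.

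Next, to bound the cost of the rectangular product I would like to invoke \Cref{lemma:matrix_multiplication}, but its hypothesis $a,b \ge n$ is the opposite of our aspect ratio (here the inner dimension $q$ is the largest), so the lemma does not apply directly. Instead I would slice the inner dimension by hand: partition $A^T$ into $\lceil q/k \rceil$ horizontal $k \times k$ blocks (padding with zeros if $k \nmid q$) and $D_m B$ into matching vertical $k \times k$ blocks, so that
\[
    A^T (D_m B) \;=\; \sum_{s=1}^{\lceil q/k \rceil} A^T_s \, (D_m B)_s,
\]
a sum of $\lceil q/k \rceil$ ordinary $k \times k$ matrix products, each of cost $O(k^{\omega})$. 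This gives $O(q k^{\omega-1})$ per value of $m$.

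Finally, summing over the $k$ choices of $m$ yields total running time $O(k \cdot q k^{\omega-1}) = O(q k^{\omega})$, and since $q \ge k$ this is bounded by $q^2 k^{\omega-1}$, matching the claimed bound. The only real obstacle is the aspect-ratio mismatch with \Cref{lemma:matrix_multiplication}, which forces the manual block decomposition along the long dimension; everything else is bookkeeping of the diagonal scaling and block sums.
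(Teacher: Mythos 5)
Your proof is correct and follows essentially the same route as the paper: fix one of the three indices, fold the corresponding factor entrywise into one of the matrices, and reduce each slice to a single $k\times q$ by $q\times k$ product (the paper fixes $i$ and scales $B$ by $A[\cdot,i]$, you fix $m$ and use a diagonal $D_m$ — a symmetric, immaterial difference). Your observation that \Cref{lemma:matrix_multiplication} has the wrong aspect ratio here ($q\ge k$ puts the large dimension in the middle) is a fair point, and your manual blocking along the inner dimension even yields the slightly better bound $O(qk^{\omega})\le O(q^2k^{\omega-1})$.
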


    \begin{proof}
        Fix $i \in [k]$ and let $D \in \mathbb{F}^{q \times k}$ be~defined by
        $D[l, j] = B[l, j] \cdot A[l, i]$. Then,
        \[
			f(i, j, m) = \sum_{l \in [q]} D[l, j] C[l, m] = (D^T \cdot C)[j, m].
        \]
		One may partition~$D^{T}$ and~$C$ into $k \times k$~matrices, denoted as~$D_1, \ldots, D_{q / k}$ and~$C_1, \ldots, C_{q / k}$, respectively.
		The~product $D^{T} \cdot C$ can thus be expressed as the summation $\sum_{i \in [q / k]} D_i \cdot C_i$.
		The~computation of a~single product~$D_{i} \cdot C_i$ can be~performed in~$O(k^{\omega})$ time.
		Therefore, the~total computation time for~$D^{T} \cdot C$ is~$O(q k^{\omega - 1})$.
		Summing over all~$i \in [k]$, we obtain the~desired upper bound.
    \end{proof}

	Now, we~are ready to~prove the main result of~this section.
	\begin{theorem} \label{thm:high_rank}
        Let $k=n^{O(1)}2^{n/4}$ and $r \ge \sqrt k$, $q \ge k$.
		There exist functions $g_1  \colon \{0, 1\}^{\log^{O(1)} k} \to  \mathbb{F}^{k \times k}$ and $g_2  \colon \{0, 1\}^{\log^{O(1)} k} \to \mathbb{F}^{k \times k \times k}$ computable in~time $k^{O(1)}$ such that, if, for any~$e$,
		$g_1(e)$ has $r$-rigidity~$s$ and $g_2(e)$ has rank at~most~$q$,
        then, \MAX{}-3-\SAT{} for formulas with $n$~variables can be~solved in~co-nondeterministic time
        \[
			O(k^2 s + q r^2 k^{\omega - 1}).
        \]
    \end{theorem}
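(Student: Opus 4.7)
The plan is to extend the rigidity-based algorithm from \Cref{thm:high_rigidity} by additionally exploiting a low-rank decomposition of one of the hypergraph-encoding tensors, which will let us replace a running time of $O(r^2 k^3)$ by the faster $O(q r^2 k^{\omega-1})$ via fast rectangular matrix multiplication. Start by applying \Cref{theorem:max_3_sat_4_clique} to reduce a \MAX{}-3-\SAT{} instance on $n$ variables and threshold~$t$ to deciding whether
\[
R = \sum_{j_0,j_1,j_2,j_3 \in [k]} \mathcal A_0[j_1,j_2,j_3] \cdot \mathcal A_1[j_2,j_3,j_0] \cdot \mathcal A_2[j_3,j_0,j_1] \cdot \mathcal A_3[j_0,j_1,j_2] = 0,
\]
where $\mathcal A_0,\mathcal A_1,\mathcal A_2,\mathcal A_3 \in \{0,1\}^{k \times k \times k}$ with $k = n^{O(1)} 2^{n/4}$ encode the edges of the resulting 4-partite 3-uniform hypergraph.

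Design the generators so that $g_1$ outputs, on appropriate seeds, the matrix slices $M_{j_1}[j_2,j_3] = \mathcal A_0[j_1,j_2,j_3]$ and $L_{j_0}[j_2,j_3] = \mathcal A_1[j_2,j_3,j_0]$ — the very matrices used in the proof of \Cref{thm:high_rigidity} — and $g_2$ outputs one of the remaining tensors, say $\mathcal A_2$. The co-nondeterministic algorithm then guesses, for every slice, a rigidity decomposition $M_{j_1} = R_{M_{j_1}} + S_{j_1}$ and $L_{j_0} = R_{L_{j_0}} + T_{j_0}$ with $\rank(R_{M_{j_1}}), \rank(R_{L_{j_0}}) \le r$ and $|S_{j_1}|, |T_{j_0}| \le s$, together with a rank-$q$ decomposition $\mathcal A_2 = \sum_{p \in [q]} a_p \otimes b_p \otimes c_p$. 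Verifying the $2k$ rigidity decompositions (reconstruct each $R_{M_{j_1}}$ from its $k \times r$ and $r \times k$ factors and compare with $M_{j_1}$) costs $O(k^3 r + k s)$; incorrect guesses are rejected, so soundness is immediate from the assumed rigidity and rank bounds.

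Substituting the guessed decompositions into $R$ yields the four groups of terms appearing in the stated running time. Contributions from the sparse parts $S_{j_1}$ or $T_{j_0}$ cost $O(s)$ per pair $(j_0,j_1)$ and hence $O(k^2 s)$ overall, exactly as in the proof of \Cref{thm:high_rigidity}. After substituting the rank-$q$ decomposition of $\mathcal A_2$, the remaining purely low-rank part becomes a sum over $(p,m,m') \in [q] \times [r] \times [r]$ of contractions of $k$-dimensional vectors that are fixed once $p,m,m'$ are fixed; by grouping indices appropriately, these contractions are batch-evaluated using \Cref{lemma:evaluate_sum} and \Cref{lemma:matrix_multiplication} in time $O(q r^2 k^{\omega-1})$. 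The auxiliary inner products among the $q$ rank-one factors of $\mathcal A_2$ that realize this batching are themselves computed by one further invocation of \Cref{lemma:evaluate_sum} at cost $O(q^2 k^{\omega-1})$.

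The main obstacle will be arranging the summation so that every contraction reduces cleanly to a rectangular matrix multiplication satisfying the hypothesis $a, b \ge n$ of \Cref{lemma:matrix_multiplication}. In particular, the dependence of the auxiliary slices $u_{j_0,j_1}[j_3] = \mathcal A_2[j_3,j_0,j_1]$ and $v_{j_0,j_1}[j_2] = \mathcal A_3[j_0,j_1,j_2]$ on the outer indices $(j_0,j_1)$ must be pushed inside the rank-$q$ decomposition of $\mathcal A_2$ so that the remaining inner sum over $j_2, j_3$ becomes a product of rectangular matrices with inner dimension~$k$; once this regrouping is set up, adding the four costs above yields the advertised bound.
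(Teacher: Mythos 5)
Your high-level idea (keep the rigidity machinery of \Cref{thm:high_rigidity}, add a rank-$q$ decomposition of one tensor, and batch the resulting contractions with fast rectangular matrix multiplication) is the right one, but the concrete structural choice you make does not factor, and the step you yourself flag as ``the main obstacle'' is exactly the step that fails. You keep the two-outer-index scheme of \Cref{thm:high_rigidity}: slices $M_{j_1}$ of $\mathcal A_0$ and $L_{j_0}$ of $\mathcal A_1$ each get their own rigidity decomposition, so the rank-one factors $a_{i,M,j_1}$ depend on $j_1$ and the $a_{l,L,j_0}$ depend on $j_0$. After substituting the rank-$q$ decomposition of $\mathcal A_2$, the inner contractions you need, e.g.\ $\sum_{j_3} b_{i,M,j_1}[j_3]\, b_{l,L,j_0}[j_3]\, a_p[j_3]$, are indexed by $(i,l,p,j_0,j_1) \in [r]^2\times[q]\times[k]^2$ --- that is $q r^2 k^2$ distinct length-$k$ sums, which no regrouping compresses to $O(q r^2 k^{\omega-1})$. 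Worse, $\mathcal A_3$ is given no decomposition at all in your plan, yet its slices $v_{j_0,j_1}$ enter the $j_2$-contraction with full dependence on $(j_0,j_1,j_2)$, so that sum cannot be factored either. The paper sidesteps all of this by fixing only \emph{one} outer index $j_0$: then $\mathcal A_1,\mathcal A_2,\mathcal A_3$ become $k\times k$ matrices $M,L,T$, each with a single rigidity decomposition per $j_0$, while $\mathcal A_0$ (which does not involve $j_0$) gets one \emph{global} rank-$q$ decomposition. The triple sum over $j_1,j_2,j_3$ then splits as $\sum_{i,m,l,t} h_1(i,l,t)h_2(i,t,m)h_3(i,m,l)$ with only $qr^2$ values per function $h$, each batch-computable as a $q\times k$ times $k\times r^2$ product via \Cref{lemma:matrix_multiplication}.

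Two further points. First, you never verify the guessed rank-$q$ decomposition of the tensor; without that check the co-nondeterministic algorithm is unsound, and in the paper this verification (via \Cref{lemma:evaluate_sum}) is precisely where the $O(q^2 k^{\omega-1})$ term comes from --- you instead attribute that term to ``auxiliary inner products among the rank-one factors,'' which is not where it arises. Second, even if your $j_2$- and $j_3$-contractions were batched as well as possible under your slicing (e.g.\ as $r\times k$ by $k\times r$ products for each of the $k^2$ pairs $(j_0,j_1)$), the resulting exponent does not match the claimed bound $O(k^3 r + q^2 k^{\omega-1} + k^2 s + q r^2 k^{\omega-1})$ in the relevant parameter regimes, so the theorem as stated would not follow. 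To repair the proof you should switch to the paper's asymmetric slicing: fix $j_0$ only, put the rank assumption on $\mathcal A_0$, and put rigidity assumptions on all three of $M$, $L$, $T$.
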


    \begin{proof}
    	Take a~$3$-CNF formula over $n$~variables and an~integer~$t$ and transform~it, using \Cref{theorem:max_3_sat_4_clique}, into a~$4$-partite $3$-uniform hypergraph with parts $H_0, H_1, H_2, H_3$
    	of~size $k = n^{O(1)} 2^{\frac{n}{4}}$. Recall that $G$~contains a~$4$-clique if~and only~if one can satisfy $t$~clauses of~$F$.
        Let~$\mathcal A_0, \mathcal A_1, \mathcal A_2, \mathcal A_3 \in \{0, 1\}^{k \times k \times k}$ be three-dimensional tensors encoding the edges of~$G$. $\mathcal A_i$ is~responsible for storing edges spanning nodes from all parts except for~$H_i$: for example,
        $G$~has an~edge~$(u_1, u_2, u_3)$,
        where $u_1 \in H_1$, $u_2 \in H_2$, and $u_3 \in H_3$,
		if and only if~$\mathcal A_0[u_1, u_2, u_3] = 1$.
		Let
        \[
        R=\sum_{j_0, j_1, j_2, j_3 \in [k]} \mathcal  A_{0}[j_1,j_2,j_3] \cdot \mathcal A_{1}[j_2,j_3,j_0] \cdot \mathcal A_{2}[j_3,j_0,j_1] \cdot \mathcal A_{3}[j_0,j_1,j_2].
        \]
        Then, $G$~has a~$4$-clique if~and only~if $R>0$.
		Further, let
        \[
			R_{j_0} = \sum_{j_1, j_2, j_3 \in [k]} \mathcal A_{0}[j_1, j_2, j_3] \cdot \mathcal A_{1}[j_2, j_3, j_0] \cdot \mathcal A_{2}[j_3, j_0, j_1] \cdot \mathcal A_{3}[j_0, j_1, j_2].
        \]
        Then, $R = R_1 + \dotsb + R_k$.
        For fixed~$j_0$, let $M, L, T \in \{0, 1\}^{k \times k}$ be defined~by
        \begin{align*}
            M[j_2, j_3] &= \mathcal A_1[j_2, j_3, j_0], \\
            L[j_3, j_1] &= \mathcal A_2[j_3, j_0, j_1], \\
            T[j_1, j_2] &= \mathcal A_3[j_0, j_1, j_2].
        \end{align*}
        Thus,
        \[
			R_{j_0} = \sum_{j_1, j_2, j_3 \in [k]} \mathcal A_{0}[j_1, j_2, j_3] M[j_2, j_3] L[j_3, j_1] T[j_1, j_2].
        \]
        Note that $\mathcal A_0$ is a~three-dimensional tensor, whereas $M$,~$L$, and~$T$ are matrices.
        Assume that $\rank(\mathcal A_0) \le q$ and that each of $M$,~$L$, and~$T$ has $r$-rigidity~$s$, that~is,
        \begin{align*}
			\mathcal A_0[j_1, j_2, j_3] &= \sum_{i \in [q]} a_{i, 0}[j_1] b_{i, 0}[j_2] c_{i, 0}[j_3], \\
			M &= R_M + S_M = \left(\sum_{m \in [r]} a_{m, M} \cdot b^T_{m, M}\right) + S_M, \\
			L &= R_L + S_L = \left(\sum_{l \in [r]} a_{l, L} \cdot b^T_{l, L}\right) + S_L, \\
			T &= R_T + S_T = \left(\sum_{t \in [r]} a_{t, T} \cdot b^T_{t, T}\right) + S_T,
        \end{align*}
        where $R_M, S_M, R_L, S_L, R_T, S_T \in \mathbb{F}^{k \times k}$,
        $|S_M| \le s$,
        $|S_L|\le s$, $|S_T| \leq s$, and $\rank(R_M) \le r$, $\rank(R_L) \le r$, $\rank(R_T) \le  r$.

        We~guess such a~representation of $\mathcal A_0$, $M$, $L$, and $T$ (guessing only nonzero entries in~$S_M$, $S_L$, and~$S_T$).
        For fixed $j_0$, one can verify the representations of~$M$,~$L$, and~$T$
		in~time $O(k^2 r^{\omega - 2} + s)$, as the~sum $\sum_{m \in [r]} a_{m, M} \cdot b_{m, M}^{T}$ can~be expressed~as a~product of a~$k \times r$ matrix and a~$r \times k$ matrix.
		Then, \Cref{lemma:matrix_multiplication} yields the~desired running time (the same argument applies to matrices $L$ and~$T$).
		For all $j_0$, this gives a total time complexity of $O(k^3 r^{\omega - 2} + ks)$.
        Since $\mathcal A_0$ does not depend on~$j_0$, it~suffices to~guess and verify its decomposition just once.
        Verifying the decomposition of $\mathcal A_0$ can be done in~time $O(q k^{\omega})$ using~\Cref{lemma:evaluate_sum}.

        Therefore,
        \begin{align*}
			R_{j_0} &= \sum_{j_1, j_2, j_3 \in [k]} \mathcal A_0[j_1, j_2, j_3] \cdot R_M[j_2, j_3] \cdot R_L[j_3, j_1] \cdot R_T[j_1, j_2] +\\
					&+ \underbrace{\sum_{j_1, j_2, j_3 \in [k]} S_M[j_2, j_3] \cdot f_M(j_1, j_2, j_3) + S_L[j_3, j_1] \cdot f_L(j_1, j_2, j_3) + S_T[j_1, j_2] \cdot f_T(j_1, j_2, j_3)}_{R_{j_0}'},
        \end{align*}
        for some functions $f_M$, $f_L$, and $f_T$, which are linear combinations of elements of $\mathcal A_0[j_1, j_2, j_3]$, $R_M[j_2, j_3]$, $S_M[j_2, j_3]$, $R_L[j_3, j_1]$, $S_L[j_3, j_1]$, $R_T[j_1, j_2]$, and $S_T[j_1, j_2]$, and each of which can be computed in~$O(1)$ time at any specific point.
        Since $|S_M|, |S_L|, |S_T| \leq s$, we can compute $R_{j_0}'$ in~time $O(ks)$, for each~$j_0$, and in~time $O(k^2 s)$ overall, as each of $S_M$, $S_L$, and $S_T$ has at most~$s$ nonzero elements.

        To compute $R_{j_0} - R_{j_0}'$, it~suffices to~compute
        \begin{align*}
			&\sum_{j_1, j_2, j_3 \in [k]} \mathcal A_0[j_1, j_2, j_3] \cdot R_M[j_2, j_3] \cdot R_L[j_3, j_1] \cdot R_T[j_1, j_2] =\\
			& \sum_{j_1, j_2, j_3 \in [k]} \left(\sum_{i \in [q]} a_{i, 0}[j_1] \cdot b_{i, 0}[j_2] \cdot c_{i, 0}[j_3]\right) \cdot \left(\sum_{m \in [r]} a_{m, M}[j_2] \cdot b_{m, M}[j_3]\right) \cdot \\
			&\hspace{3.1em} \cdot \left(\sum_{l \in [r]} a_{l, L}[j_3] \cdot b_{l, L}[j_1]\right) \cdot \left(\sum_{t \in [r]} a_{t, T}[j_1] \cdot b_{t, T}[j_2]\right) =\\
			& \sum_{m, l, t \in [r]} \sum_{i \in [q]} \underbrace{\left( \sum_{j_1\in [k]} a_{i, 0}[j_1] \cdot b_{l, L}[j_1] \cdot a_{t, T}[j_1] \right)}_{h_1(i, l, t)} \cdot \underbrace{\left( \sum_{j_2\in [k]} b_{i, 0}[j_2] \cdot a_{m, M}[j_2] \cdot b_{t, T}[j_2] \right)}_{h_2(i, t, m)} \cdot \\
			& \hspace{4.7em} \cdot \underbrace{\left( \sum_{j_3\in [k]} c_{i, 0}[j_3] \cdot b_{m, M}[j_3] \cdot a_{l, L}[j_3] \right)}_{h_3(i, m, l)} =\\
			& \sum_{m, l, t \in [r]} \sum_{i \in [q]} h_1(i, l, t) \cdot h_2(i, t, m) \cdot h_3(i, m, l),
        \end{align*}
		where $h_1, h_2, h_3  \colon [q] \times [r] \times [r] \to \mathbb{F}$ are defined as follows:
        \begin{align*}
			h_1(i, l, t) &= \sum_{j_1\in [k]} a_{i, 0}[j_1] \cdot b_{l, L}[j_1] \cdot a_{t, T}[j_1], \\
			h_2(i, t, m) &= \sum_{j_2\in [k]} b_{i, 0}[j_2] \cdot a_{m, M}[j_2] \cdot b_{t, T}[j_2], \\
			h_3(i, m, l) &= \sum_{j_3\in [k]} c_{i, 0}[j_3] \cdot b_{m, M}[j_3] \cdot a_{l, L}[j_3].
        \end{align*}
		We~compute $h_1$, $h_2$, and $h_3$ for all inputs simultaneously, then evaluate the sum using these precomputed values.
		Assuming the values of $h_1$, $h_2$, and $h_3$ are computed already, the sum can be computed as follows.
		\begin{align*}
			&\sum_{i \in  [r]} \sum_{m, l, t \in [r]} h_1(i, l, t) \cdot h_2(i, t, m) \cdot h_3(i, m, l) = \\
			& \sum_{i \in  [r]} \sum_{l, m \in  [r]} h_3(i, m, l) \cdot \left(\sum_{t \in [r]} h_1(i, l, t) \cdot h_2(i, t, m)\right).
		\end{align*}
		Let $V_{i}, Z_{i} \in \mathbb{F}^{r \times r}$ be two matrices, such that $V_{i}[l, t] = h_1(l, t)$ and $Z_{i}[t, m] = h_2(i, t, m)$, hence we need to evaluate:
		\begin{align*}
			& \sum_{i \in  [r]} \sum_{l, m \in  [r]} h_3(i, m, l) \cdot \left(\sum_{t \in [r]} V_{i}[l, t] \cdot Z_{i}[t, m]\right) = \\
			& \sum_{i \in  [r]} \sum_{l, m \in  [r]} h_3(i, m, l) \cdot (V_{i} \cdot Z_{i})[l, m].\\
		\end{align*}
		Hence, we can calculate $V_{i} \cdot Z_{i}$ in time $O(r^{\omega} q)$ for all $i \in  [q]$ and then evaluate the sum in time $O(r^2 q)$ for fixed $j_0$ and $O(k r^{\omega} q)$ time overall.

		Below, we~show how to~compute~$h_1$ for all inputs.
		For $h_2$~and~$h_3$, this can be~done in~the same fashion.
        Let $P \in \mathbb F^{q \times k}$ and $W \in \mathbb F^{(r \times r) \times k}$ be~defined by $P[i, j_1] = a_{i, 0}[j_1]$ and $W[(l, t), j_1] = b_{l, L}[j_1] \cdot a_{t, T}[j_1]$.
        Then,
        \[
			h_1(i, l, t) = \sum_{j_1 \in [k]} P[i, j_1] \cdot W[(l, t), j_1] = (P \cdot W^T)[i, (l, t)].
        \]
		\Cref{lemma:matrix_multiplication} guarantees that this can be~computed in~time $O(q r^2 k^{\omega - 2})$, for each $j_0$, and in~overall time $O(q r^2 k^{\omega - 1})$.
		Thus, the overall running time of~the described algorithm is
        \[
			O(k^3r^{\omega - 2} + q k^{\omega} + k^2 s + k r^{\omega} q + q r^2 k^{\omega - 1})=O(k^2 s + q r^2 k^{\omega - 1}),
        \] since $k r^{\omega} q = O(q r^2 k^{\omega - 1})$, $k^{3} r^{\omega - 2} = O(q k^{\omega})$ and $q k^{\omega} = O(q r^2 k^{\omega - 1})$.
    \end{proof}

	The following theorem provides a~trade-off between matrix rigidity and tensor rank.

	\begin{theorem} \label{thm:high_rank_tensors}
        Let $\alpha \in [0.5, 1]$ and~$\beta \in [1, 2]$ be~constants satisfying $\alpha \leq \frac{5 - \beta - \omega}{2}$.
		If \MAX{}-3-\SAT{} cannot be~solved in co-nondeterministic time $O(2^{(1 - \varepsilon)n})$, for any $\varepsilon > 0$, then, for all $\delta > 0$, there exist functions $g_1  \colon \{0, 1\}^{\log^{O(1)} k} \to  \mathbb{F}^{k \times k}$ and $g_2  \colon \{0, 1\}^{\log^{O(1)} k} \to \mathbb{F}^{k \times k \times k}$ computable in~time $k^{O(1)}$, such that, for infinitely many~$k$, at~least one of the following is~satisfied, for at~least one~$e$:
		\begin{itemize}
			\item $g_1(e)$ has~$k^{\alpha - \delta}$-rigidity~$k^{2 - \delta}$;
			\item $\rank(g_2(e))$ is at least~$k^{\beta - \delta}$.
		\end{itemize}
	\end{theorem}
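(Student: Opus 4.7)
The plan is to apply \Cref{thm:high_rank} contrapositively. Given $\delta > 0$, I~would take $r = k^{\alpha - \delta}$, $s = k^{2 - \delta}$, and $q = k^{\beta - \delta}$, where $k = n^{O(1)} 2^{n/4}$ is the parameter produced by \Cref{theorem:max_3_sat_4_clique}. A~seed $e$ encodes a~tuple $(F, t, j_0, i)$, where $F$~is a~$3$-CNF on $n$~variables with at most $O(n)$ clauses (after sparsification via \Cref{corollary:sparsification}), $t$~is the target number of satisfied clauses, $j_0 \in [k]$, and $i \in \{0, 1, 2\}$ is a selector. Running the reduction of \Cref{theorem:max_3_sat_4_clique} on $(F, t)$ yields the tensors $\mathcal{A}_0, \mathcal{A}_1, \mathcal{A}_2, \mathcal{A}_3$; from these together with $j_0$, extract the matrices $M, L, T$ exactly as in the proof of \Cref{thm:high_rank}. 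Let $g_1(e)$ output whichever of $M, L, T$ is selected by $i$, and let $g_2(e)$ output $\mathcal{A}_0$ (discarding $j_0$ and $i$). Since $\log k = \Theta(n)$ and encoding $F$ needs $O(n \log n)$ bits, the seed length is $\log^{O(1)}(k)$, and both generators run in time polynomial in~$k$.

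Suppose toward a~contradiction that, for all but finitely many $k$, every seed $e$ satisfies both that $g_1(e)$ has $r$-rigidity smaller than $s$ and that $g_2(e)$ has rank at most $q$. Then, for every sufficiently large MAX-3-SAT instance $(F, t)$, every matrix $M, L, T$ appearing in the reduction admits the required low-rigidity decomposition and $\mathcal{A}_0$ admits the required low-rank decomposition. By \Cref{thm:high_rank}, MAX-3-SAT can be solved in co-nondeterministic time
\[
O(k^3 r + q^2 k^{\omega - 1} + k^2 s + q r^2 k^{\omega - 1}).
\]
The four exponents of $k$ are $3 + \alpha - \delta$, $2\beta + \omega - 1 - 2\delta$, $4 - \delta$, and $2\alpha + \beta + \omega - 1 - 3\delta$. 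The hypotheses $\alpha \leq 1$, $\beta \leq (5 - \omega)/2$, and $\alpha \leq (5 - \beta - \omega)/2$ together force each exponent to be at most $4 - \delta$. Since $k = 2^{n/4 + o(n)}$, the overall running time is $O(k^{4 - \delta}) = O(2^{(1 - \delta/4 + o(1)) n})$, which contradicts the assumed hardness of MAX-3-SAT taken with $\varepsilon = \delta/5$.

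Consequently, for infinitely many $k$ there must exist a~seed $e$ for which either $g_1(e)$ has $k^{\alpha - \delta}$-rigidity $k^{2 - \delta}$ or $g_2(e)$ has rank at least $k^{\beta - \delta}$, which is the desired conclusion. The substantive content is already packaged inside \Cref{thm:high_rank}; the remaining work is just choosing parameters to respect the trade-off $\beta \leq (5 - \omega)/2$ and $\alpha \leq (5 - \beta - \omega)/2$, which correspond precisely to neutralizing the second and fourth terms in the running-time bound. The mildest obstacle I~expect is the bookkeeping of seed encoding so that a~single seed feeds both generators (a short selector field makes this routine); no conceptually new ingredient beyond \Cref{thm:high_rank} should be needed.
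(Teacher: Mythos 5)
Your proposal is correct and follows essentially the same route as the paper: define $g_1$ and $g_2$ from the $4$-clique reduction (seed encoding a formula, $t$, $j_0$, and a selector for $M/L/T$), invoke \Cref{thm:high_rank} with $r = k^{\alpha-\delta}$, $s = k^{2-\delta}$, $q = k^{\beta-\delta}$, and check that the constraints on $\alpha$ and $\beta$ force all four running-time exponents below $4-\delta$, contradicting the hardness assumption. Your exponent bookkeeping matches the paper's (the paper additionally remarks, via \Cref{thm:high_rigidity}, why only $\alpha \ge 0.5$ is of interest, but this is not needed for the proof itself).
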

	\begin{proof}
		Using~\Cref{thm:high_rigidity}, one can construct a~generator~$g_1$ that outputs~$k \times k$ matrices with~$k^{\frac{1}{2} - \delta}$-rigidity~$k^{2 - \delta}$ for~$k = n^{O(1)} 2^{n / 4}$.
		Hence, from~now on, we~are interested in~rigidity at~least~$k^{0.5}$.

        Let $g_1$ be the function that outputs matrices~$M$,~$L$, and~$T$ from~\Cref{thm:high_rank}.
		Thus,~$g_1$ takes a~3-CNF formula,~$t$,~$j_0$, and a~number from~$\{0, 1, 2\}$ as input.
		The function~$g_2$ outputs a~three-dimensional tensor~$\mathcal{A}_0$, where~the~input to~$g_2$ is a~3-CNF formula and~$t$.
		If~the input to~the generators is not valid, they output empty families.
		Both these generators have a~seed of~size~$n^{O(1)}$, which corresponds to~$\log^{O(1)} k$ and work in~time polynomial in~$k$.
		Assuming that all matrices~$g_1(s)$ have~$r$-rigidity~$s$ and all tensors~$g_2(s)$ have rank at~most~$q$, implies, using \Cref{thm:high_rank}, that~\MAX{}-3-\SAT{} can be~solved in~co-nondeterministic time $O(k^2 s + q r^2 k^{\omega - 1})$.
		Let $s = k^{2 - \delta}$,
		$r = k^{\alpha - \delta}$,
        and~$q = k^{\beta - \delta}$.
        Then, the inequality $\alpha \leq \frac{5 - \beta - \omega}{2}$ implies that the resulting algorithm solves \MAX{}-3-\SAT{}
        in~co-nondeterministic time~$O(k^{4-\delta}) = O(2^{(1 - \delta / 4)n})$.
	\end{proof}

	One way to balance matrix rigidity and tensor rank is to~ensure that a~tensor has superlinear rank while maximizing matrix rigidity, as~demonstrated in~the next corollary.
	\begin{corollary} \label{crly:high_rank_tensors}
			If \MAX{}-3-\SAT{} cannot be~solved in co-nondeterministic time $O(2^{(1 - \varepsilon)n})$ for any $\varepsilon > 0$, then for all $\delta > 0$ there are two polynomial time generators $g_1  \colon \{0, 1\}^{\log^{O(1)} k} \to  \mathbb{F}^{k \times k}$ and $g_2  \colon \{0, 1\}^{\log^{O(1)} k} \to \mathbb{F}^{k \times k \times k}$ such that for infinitely many $k$ at least one of the following is satisfied:
	       \begin{itemize}
	           \item $g_1(s)$ has $k^{2 - \frac{\omega}{2} - \delta}$-rigidity~$k^{2 - \delta}$ , for some $s$.
	           \item $\rank(g_2(s))$ is at least $k^{1 + \delta}$, for some $s$.
	       \end{itemize}
	\end{corollary}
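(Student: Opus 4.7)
The plan is to obtain the corollary as a direct instantiation of Theorem~\ref{thm:high_rank_tensors} at the operating point that maximises the rigidity exponent subject to the tensor rank being only barely superlinear. In that theorem's feasibility region the bound $\alpha \le (5 - \beta - \omega)/2$ shows that the rigidity exponent $\alpha - \delta$ grows as $\beta$ shrinks, while the rank exponent $\beta - \delta$ must stay strictly above~$1$ to be superlinear. Pushing $\beta$ down toward~$1$ therefore pushes $\alpha$ up toward $(5 - 1 - \omega)/2 = 2 - \omega/2$, which is exactly the rigidity exponent promised by the corollary.

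Concretely, given the target $\delta > 0$, I would invoke Theorem~\ref{thm:high_rank_tensors} with parameters
\[
\delta_{\mathrm{t}} := \delta/3, \qquad \beta_{\mathrm{t}} := 1 + 4\delta/3, \qquad \alpha_{\mathrm{t}} := 2 - \omega/2 - 2\delta/3.
\]
A direct substitution gives $\alpha_{\mathrm{t}} - \delta_{\mathrm{t}} = 2 - \omega/2 - \delta$, $\beta_{\mathrm{t}} - \delta_{\mathrm{t}} = 1 + \delta$, and $(5 - \beta_{\mathrm{t}} - \omega)/2 = 2 - \omega/2 - 2\delta/3 = \alpha_{\mathrm{t}}$, so the non-trivial feasibility inequality $\alpha_{\mathrm{t}} \le (5 - \beta_{\mathrm{t}} - \omega)/2$ holds with equality. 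For any $\delta$ small enough that $\beta_{\mathrm{t}} \le (5-\omega)/2$ and $\alpha_{\mathrm{t}} \in [1/2, 1]$, $\beta_{\mathrm{t}} \in [1, 3/2]$ (all guaranteed by $\omega < 3$), the theorem's hypotheses are satisfied and yield generators $g_1$ and $g_2$ of the desired signature.

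The conclusion of the theorem then transports to the corollary almost verbatim: for infinitely many~$k$, either some $g_1(s)$ has $k^{2-\omega/2-\delta}$-rigidity at least $k^{2-\delta/3}$, which is a strictly stronger statement than the required rigidity lower bound of $k^{2-\delta}$ (since if at least $k^{2-\delta/3}$ entry changes are needed, certainly at least $k^{2-\delta}$ are), or some $g_2(s)$ has rank at least $k^{1+\delta}$. I expect no substantive obstacle beyond this bookkeeping — the slack $\delta - \delta_{\mathrm{t}}$ must be distributed between raising $\beta_{\mathrm{t}}$ (so that $\beta_{\mathrm{t}} - \delta_{\mathrm{t}}$ safely clears $1+\delta$) and lowering $\alpha_{\mathrm{t}}$ (so that the feasibility inequality is preserved), and the allocation above balances these so that the feasibility inequality is tight while the other bounds remain loose. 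Should one wish to extend the statement to all $\delta$ rather than just small $\delta$, one notes that the rigidity clause is monotone in $\delta$ (smaller target rank plus smaller required sparsity only weakens the required upper bound on changes once the smaller-$\delta$ version is secured), so it suffices to carry out the construction above for each $\delta$ in the admissible range.
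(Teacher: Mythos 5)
Your proposal is exactly the paper's (implicit) argument: the paper states this corollary as a direct instantiation of \Cref{thm:high_rank_tensors} with $\beta$ pushed down to $1$ so that $\alpha$ rises to $(5-1-\omega)/2 = 2-\omega/2$, and your parameter choice $\delta_{\mathrm t}=\delta/3$, $\beta_{\mathrm t}=1+4\delta/3$, $\alpha_{\mathrm t}=2-\omega/2-2\delta/3$ carries this out with correct arithmetic. One caveat on your closing remark: the disjunction is \emph{not} monotone in $\delta$, because while the rigidity clause weakens as $\delta$ grows, the rank clause $\rank(g_2(s))\ge k^{1+\delta}$ strengthens, so the large-$\delta$ case does not follow from the small-$\delta$ case; this only affects values of $\delta$ outside the theorem's admissible range ($\beta_{\mathrm t}\le(5-\omega)/2$, $\alpha_{\mathrm t}\ge 1/2$), a regime the paper itself silently ignores.
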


	One can also improve matrix rigidity in~our trade-off by~conditioning on~the value of~$\omega$.
	\thmHighRankTensorsOmega*
	\begin{proof}
		We modify the~second generator in~\Cref{crly:high_rank_tensors} by adding a~new input on which the generator will output a~tensor of matrix multiplication $\mathcal{A}_{\sqrt{k}}$ of size $k \times k \times k$.
		If $\omega = 2$, then the~statement follows from~\Cref{crly:high_rank_tensors}.
		If $\omega \geq 2 + 2 \Delta$ for some $\Delta > 0$, then for infinitely many~$k$, we have $\rank(\mathcal{A}_{\sqrt{k}}) \geq k^{1 + \delta}$.
	\end{proof}

	If~one wants to obtain improved matrix rigidity under weaker assumptions, then the following corollary holds, similar to \Cref{corollary:weaken_rigidity}.

	\begin{corollary}
		If \MAX{}-3-\SAT{} cannot be~solved in co-nondeterministic time $O(2^{0.85 n})$, then, for some $\Delta > 0$, there exist two generators $g_1 \colon \{0, 1\}^{\log^{O(1)} k} \to \mathbb{F}^{k \times k}$ and $g_2 \colon \{0, 1\}^{\log^{O(1)} k} \to \mathbb{F}^{k \times k \times k}$, computable in time polynomial in~$k$, such that, for infinitely many~$k$, at~least one of~the following conditions holds:

		\begin{itemize}
			\item $g_1(s)$ has $k^{0.7}$-rigidity~$k^{1.4}$, for some~$s$;
			\item $\rank(g_2(s))$ is at~least $k^{1 + \Delta}$, for some~$s$.
		\end{itemize}
	\end{corollary}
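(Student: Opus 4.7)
The plan is to derive this corollary as a weaker-assumption analogue of Theorem~\ref{thm:high_rank_tensors_omega}, in the same spirit that Corollary~\ref{corollary:weaken_rigidity} weakens Theorem~\ref{thm:high_rigidity}. I will reuse exactly the two generators from the proof of Theorem~\ref{thm:high_rank_tensors_omega}: $g_1$ takes an encoding of a $3$-CNF formula $F$, an integer $t$, an index $j_0$, and a selector bit, and outputs one of the slice matrices $M$, $L$, $T$ from the tensors $\mathcal{A}_1, \mathcal{A}_2, \mathcal{A}_3$ built by the reduction of Theorem~\ref{theorem:max_3_sat_4_clique}; $g_2$ takes $F$ and $t$ and outputs either the tensor $\mathcal{A}_0$ from that reduction or, on a designated seed, the matrix-multiplication tensor $\mathcal{A}_{\sqrt k}$ reshaped to dimensions $k\times k\times k$. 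Both generators are polynomial-time computable and read a seed of length $\log^{O(1)} k$.

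Next, I would case-split on $\omega$. If $\omega \ge 2 + 2\Delta$ for some $\Delta > 0$, then for infinitely many $k$ the tensor $\mathcal{A}_{\sqrt k}$ has rank at least $k^{\omega/2} \ge k^{1 + \Delta}$, which already certifies the second bullet of the conclusion and we are done. In the remaining case $\omega = 2$, I would argue the contrapositive: suppose every matrix produced by $g_1$ has $k^{0.7}$-rigidity strictly less than $k^{1.4}$ and every tensor produced by $g_2$ has rank at most $k^{1+\Delta}$. Substituting $r = k^{0.7}$, $s = k^{1.4}$, $q = k^{1 + \Delta}$ and $\omega = 2$ into the time bound $O(k^3 r + q^2 k^{\omega - 1} + k^2 s + q r^2 k^{\omega - 1})$ from Theorem~\ref{thm:high_rank}, the goal is to conclude that MAX-$3$-SAT can be solved co-nondeterministically in time $O(2^{0.85 n})$, which contradicts the assumption and finishes the proof.

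The main obstacle is the exponent arithmetic. Since $k = n^{O(1)} 2^{n/4}$, the target threshold is $O(k^{3.4(1+o(1))})$, and all four terms of the time bound must be verified to sit below this threshold simultaneously. The pinch points are $k^2 s$, which pins $s$ to essentially $k^{1.4}$, and the interaction term $q r^2 k^{\omega - 1}$, which couples the rigidity parameter $r$ to the superlinear tensor rank $q$. The value of $\Delta > 0$ must be chosen small enough that $qr^2 k^{\omega-1}$ stays under control in the $\omega = 2$ branch, while also being small enough that the threshold $\omega \ge 2 + 2\Delta$ used in the other branch is the right dichotomy; the same $\Delta$ must serve both cases. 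Once $\Delta$ is fixed in this way, the contradiction in the $\omega = 2$ branch closes, and the $\omega > 2$ branch produces the rank lower bound directly through $\mathcal{A}_{\sqrt k}$.
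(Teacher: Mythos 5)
Your overall architecture is exactly what the paper intends (the paper states this corollary without proof, pointing only to the analogy with \Cref{corollary:weaken_rigidity}): reuse the generators of \Cref{thm:high_rank}, augment $g_2$ with the matrix-multiplication tensor $\mathcal{A}_{\sqrt{k}}$, split on whether $\omega > 2$, and in the $\omega = 2$ branch run the contrapositive through the time bound of \Cref{thm:high_rank}. The $\omega \ge 2 + 2\Delta$ branch is fine as you describe it.

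The gap is precisely in the arithmetic you defer. With $k = n^{O(1)}2^{n/4}$ the budget is $k^{3.4}$, and plugging $r = k^{0.7}$, $s = k^{1.4}$, $q = k^{1+\Delta}$ into $O(k^3 r + q^2 k^{\omega-1} + k^2 s + q r^2 k^{\omega-1})$ does not stay under it even at $\omega = 2$. The worst offender is the term you do not list among your pinch points: $k^3 r = k^{3.7} = 2^{0.925 n(1+o(1))}$, which overshoots $2^{0.85n}$ by a polynomial-in-$k$ factor regardless of $\Delta$. Moreover, the term you do flag, $q r^2 k^{\omega-1}$, equals $k^{3.4+\Delta}$ at $\omega = 2$, which is already (marginally) over budget for every $\Delta > 0$, so no choice of $\Delta$ "closes the contradiction" as claimed. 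To actually obtain a statement of this form you must either weaken the rigidity-rank parameter to $r \le k^{0.4}$ (then $k^3 r \le k^{3.4}$, $q r^2 k \le k^{2.8+\Delta}$, and $q^2 k \le k^{3+2\Delta}$ are all safe, at the cost of claiming only $k^{0.4}$-rigidity $k^{1.4}$), or strengthen \Cref{thm:high_rank} itself by verifying the guessed low-rank decompositions of $M$, $L$, $T$ via fast rectangular matrix multiplication (\Cref{lemma:matrix_multiplication}), which replaces $k^3 r$ by $k^3 r^{\omega-2} = k^{3+o(1)}$ when $\omega = 2$, and then take $r = k^{0.7 - \Delta/2}$ to absorb the residual $k^{\Delta}$ in the fourth term. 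As written, the proposal asserts constants that the cited time bound does not support.
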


	We are now ready to~prove our conditional answer to~the open problem from~\cite{DBLP:journals/cc/GoldreichT18}.
	\crlCanonical*
	\begin{proof}
        It~suffices to~verify that $\alpha = \frac{2}{3}$ and $\beta = 1.25$
        satisfy the inequalities in~the statement of~\Cref{thm:high_rank_tensors} for any possible value of $\omega < 2.38$.
		Then,~\cite[Theorem~4.4]{DBLP:series/lncs/0001W20} provides the desired bound on the canonical circuit size.
	\end{proof}

	\section*{Acknowledgments}
	We~are grateful to~the anonymous reviewers for their many helpful comments and additional references, which helped~us not only improve the writing and correct errors, but also prove stronger versions 
	of~some of~our results.

    \bibliographystyle{alpha}
    \bibliography{references}
\end{document}